\pgfplotsset{compat=1.18}
\newcommand{\cutreportrow}[2]{\textbf{#1} & #2\\}
\newcommand{\R}{\mathbb{R}}
\newcommand{\T}{\top}
\DeclareSIUnit{\EUR}{\text{\euro}}
\newcommand{\EURnum}[1]{\num[round-mode=places,round-precision=0]{#1}\,\euro}
\newtheorem{theorem}{Theorem}
\newtheorem{assumption}{Assumption}
\newtheorem{proposition}{Proposition}
\newtheorem{corollary}{Corollary}
\newtheorem{definition}{Definition}
\newtheorem{remark}{Remark}
\newtheorem{lemma}{Lemma}
\title{\textbf{Economic relativity: a cut rule for perimeter valuation in equity ownership networks}\\[1em]}
\author{O. Di Marzio\footnote{Comments welcome at \texttt{omar@omardimarzio.it}.}}
\date{August 29, 2025}
\lstdefinelanguage{json}{%
    basicstyle=\ttfamily\small,
    showstringspaces=false,
    breaklines=true,
    literate=*
     {:}{{{\color{black}{:}}}}{1}
     {,}{{{\color{black}{,}}}}{1}
     {\{}{{{\color{black}{\{}}}}{1}
     {\}}{{{\color{black}{\}}}}}{1}
     {[}{{{\color{black}{[}}}}{1}
     {]}{{{\color{black}{]}}}}{1}
}
\begin{document}

\maketitle

\begin{abstract}

We introduce the Cut-Based Valuation (CBV), a unified framework for consolidated value in
equity/flow networks. The central idea is that economic value is never absolute: it is always
defined relative to an observer $\Omega$, which fixes perimeter, measurement basis, units/FX/PPP,
discounting, informational regime, and control rules. Given $\Omega$, the Cut Theorem shows
that the consolidated value of a perimeter $P$ depends only on boundary quantities across the
cut $P \leftrightarrow O$, while internal reconnections are valuation-invariant. This provides
(i) sufficient statistics for valuation with linear computational complexity, (ii) standardized
reporting through the Perimeter-of-Validity and Cut Summary, and (iii) transformation laws that
clarify how different observers relate. Applications span IFRS consolidation, national accounts,
fund-of-funds, pyramids, and clearing networks, all seen as special cases of a general principle of
economic relativity. Case studies (market capitalization by country, keiretsu, fund-of-funds)
illustrate how CBV eliminates double counting while ensuring comparability and auditability.

To address practical concerns, we establish \emph{robustness bounds} that quantify how errors in
initial data propagate to consolidated values, and we introduce a \emph{dynamic CBV--Fisher protocol}
for intertemporal comparisons, ensuring consistency with official chain-linking practices. These
additions clarify the \emph{time scale} of application, the role of \emph{averaging procedures}, and the
horizon of \emph{reliable measurement}. Finally, we make explicit the \emph{scope and limitations}
of CBV: it is a normative measurement/consolidation rule in linear accounting environments,
while in macroeconomic closures or with nonlinear payoffs it must be coupled with equilibrium
or clearing models.

\end{abstract}

\section{Introduction}\label{sec:intro}
Equity ownership networks generate systematic consolidation problems: double counting, minority interests, and dependence on the perimeter and observer. Current standards (e.g.\ IFRS~10~\cite{IFRS10}, SNA~2008~\cite{SNA2008}) and Leontief-style approaches~\cite{Leontief1941} are powerful but not always convenient on arbitrary or partial perimeters. 

The term ``relativity'' is adopted here as an \emph{operational analogy}, not as a claim of isomorphism with physics. The goal is to highlight that every valuation is defined relative to an \textbf{observer} $\Omega=(P,\mathrm{Basis},\mathrm{Units},\mathrm{FX/PPP},\mathrm{SDF},\mathcal I,\mathcal C)$, which fixes: (i) the perimeter $P$; (ii) the measurement basis and units/FX/PPP; (iii) the discount rule or pricing measure (SDF/numéraire); (iv) the information set; and (v) the control criterion. In this sense, the consolidated value $W(P)$ is never absolute, but always a relative measure conditional on the observer.

This dependence on the observer already emerges in established practices, though in fragmented and ad hoc ways. For example: in IFRS 10, consolidation eliminates intra-group holdings and avoids double counting, producing different results than individual accounts. In SNA 2008, naïve aggregation is prevented by precise sectoral boundaries. In multi-level funds or pyramids, the same security may reappear along ownership chains, with effective value depending on the aggregation perimeter. In clearing networks (Eisenberg--Noe; Rogers--Veraart), net wealth is defined as a fixed point, neutralizing internal exposures within the perimeter. All these cases, traditionally addressed with case-specific rules, share a common root that CBV formalizes under a unified law.

Formally, the \textbf{Cut Theorem} states that the consolidated value of $P$ depends only on the \emph{boundary} quantities across the cut $P \leftrightarrow O$ and on node primitives; internal reconnections within $P$ are \emph{gauge-equivalent} and invariant for valuation. This entails three practical benefits. First, CBV provides \emph{sufficient statistics}: once the cut is extracted, valuation is linear in the number of boundary edges, regardless of internal complexity or the presence of non-linear payoffs. Second, it ensures \emph{standardized reporting} via the \emph{Perimeter-of-Validity} (PoV), which fixes $\Omega$, and the \emph{Cut Summary}, which reconciles gross and net values. Third, it defines precise \emph{transformation laws} (unit/FX/PPP scaling, change of numéraire/discounting, perimeter and control rule modifications), enabling \textbf{coherent comparisons} across different observers.

We operationalize control with three families of rules (threshold/majority, Herfindahl look-through, attenuated paths) and provide algorithms for Regime~A (direct cut evaluation) and Regime~B (internal estimation $\bm v_P=(I-O_{PP})^{-1}\bm\pi_P$ under $\rho(O_{PP})<1$). Synthetic case studies (net market capitalization by country, pyramids/keiretsu, fund-of-funds) illustrate how CBV eliminates internal double counting while preserving comparability and auditability. 

\medskip
\noindent\textbf{Principle (Observer-Dependent Valuation).} \emph{There is no absolute valuation of an asset or set of assets. Every valuation is relative to a specific observer $\Omega$. Given $\Omega$, the consolidated value $W(P)$ is obtained solely from the boundary data between $P$ and the outside, and is comparable with other valuations only after applying the appropriate observer transformations and declaring the \emph{Perimeter-of-Validity} and corresponding \emph{Cut Summary}.}

From this perspective, traditional consolidation practices (IFRS, SNA, fund-of-funds, pyramids, clearing networks) are all \emph{special cases} of the general law expressed by CBV. Just as Newtonian mechanics is a special case of relativity, existing accounting and financial methods are instances of \emph{economic relativity}, where value is always defined relative to the observer $\Omega$.

We clarify right from the start two distinct informational regimes. In \textbf{Regime A} internal equity values $\bm v_P$ are known (for example market prices). In this case the \emph{Cut Theorem} shows that $W(P)$ depends only on the flows $P\to O$ and $O\to P$ and on the internal base $\sum_{j\in P} b_j$; the internal structure $O_{PP}$ does not enter the calculation. In \textbf{Regime B} values $\bm v_P$ are not observable: the internal topology $O_{PP}$ is then necessary exclusively to estimate $\bm v_P$ through the system
\begin{equation*}
\bm v_P \;=\; (I - O_{PP})^{-1}\!\left(\bm b_P + O_{PO}\bm v_O\right),
\end{equation*}
after which $W(P)$ is still obtained via the boundary cut. This distinction resolves an apparent tension between the irrelevance of internal topology (Regime A) and its instrumental role (Regime B).

The contributions of the paper are: 
(i) a compact proof of the \emph{Cut Theorem} and of the statistical sufficiency of boundary flows;
(ii) two operational algorithms for Regime A and Regime B; 
(iii) diagnostics for double counting and minority interests;
(iv) a critical discussion of assumptions on inputs $(\bm b_P,\bm v_O)$ and their limits.
We also show a numerical example and outline a path of empirical validation on real perimeters. For the role of the observer and the reference system, see Section~\ref{sec:osservatore}.

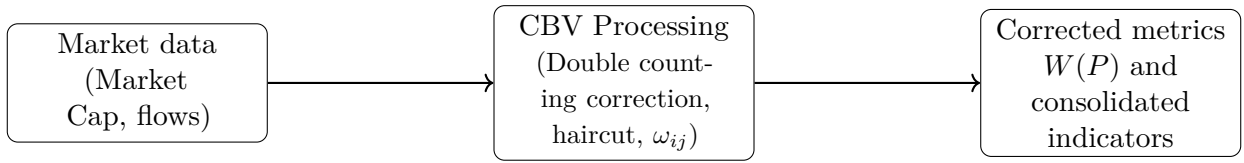
\begin{figure}[htbp]
\centering
\begin{tikzpicture}[
  node distance=3cm,
  box/.style={draw, rounded corners, align=center, minimum width=2.5cm, minimum height=1cm, text width=3.2cm}
]
\node[box] (input) {Market data\\(Market Cap, flows)};
\node[box, right=of input] (process) {CBV Processing\\{\small (Double counting correction, haircut, $\omega_{ij}$)}};
\node[box, right=of process] (output) {Corrected metrics\\$W(P)$ and consolidated indicators};

\draw[->, thick] (input) -- (process);
\draw[->, thick] (process) -- (output);
\end{tikzpicture}
\caption{Conceptual diagram: synthetic scheme of the CBV process.}
\label{fig:cbv-schema1}
\end{figure}

\noindent\textbf{Keywords:} Cut-Based Valuation; Consolidation; Equity ownership networks; Cross-holdings; Minority interests; Perimeter; Input-Output.\\
\textbf{JEL:} C67; G32; M41; E44.

\section{Observer and reference system}\label{sec:osservatore}
The consolidated value $W(P)$ depends on the \emph{perimeter} $P$ and on the \emph{observer’s position}. In \textbf{Regime A}, where $\bm v_P$ is observable, the \emph{Cut Theorem} implies \emph{invariance} with respect to internal topology: with the same perimeter and flows $P\leftrightarrow O$, $W(P)$ does not change as $O_{PP}$ varies. In \textbf{Regime B}, the observer uses the topology $O_{PP}$ \emph{only} to estimate $\bm v_P$; once estimated, the cut invariance holds again. 

This distinction clarifies that the “observer effect” in CBV concerns the \emph{choice of perimeter} and of the \emph{inputs} $(\bm b_P,\bm v_O)$, not the internal structure itself. In particular:
\begin{itemize}
\item Changing $P$ (observer “inside/outside” or broader/narrower perimeter) changes the boundary flows and therefore $W(P)$;
\item With the same $P$ and inputs, the transition from Regime A to Regime B does not modify the valuation rule, but adds an estimation step to make $\bm v_P$ observable.
\end{itemize}
We place the concept here so that it is available before the numerical example and extensions.

\begin{tcolorbox}[title={Operational synthesis of informational regimes (CBV)}, colback=gray!5, colframe=black!50]
\textbf{Regime A (observable internal values).} 
\begin{itemize}
\item Input: $\bm b_P$, $\bm v_O$, \emph{and} observable $\bm v_P$ (e.g.\ market prices).
\item Calculation: $W(P) = \sum_{j\in P} b_j + \sum_{i\in P,k\in O} O_{ik} v_k - \sum_{i\in O,j\in P} O_{ij} v_j$.
\item Property: the internal topology $O_{PP}$ is irrelevant for the calculation (cut invariance).
\item Complexity: proportional to boundary edges $|E_{\text{cut}}|$ (no matrix inversion).
\end{itemize}
\vspace{4pt}
\textbf{Regime B (non-observable internal values).}
\begin{itemize}
\item Input: $\bm b_P$, $\bm v_O$, $O_{PP}$, $O_{PO}$.
\item Estimation: solve $(I - O_{PP})\,\bm v_P = \bm b_P + O_{PO}\,\bm v_O$ to obtain $\bm v_P$.
\item Calculation: apply the same formula as Regime A using the estimated $\bm v_P$.
\item Property: $O_{PP}$ is used only to estimate $\bm v_P$; the final value depends on boundary flows.
\end{itemize}
\end{tcolorbox}

\section{Purpose and context}\label{sec:scope}
The “balance-sheet by balance-sheet” sum in the presence of cross-holdings generates double counting. We formalize a consolidation rule that separates \emph{endogenous} value (due to the internal network) from \emph{exogenous} value (external assets and outside prices) and shows that, to value a system or one of its perimeters, boundary cuts and non-equity internal assets suffice.
For formal limits of applicability of CBV see Section~\ref{sec:scope-limitations}.

\section{Related work and positioning}\label{sec:related}

\noindent Our contribution lies at the intersection of \emph{measurement standards}, \emph{network-based consolidation}, and \emph{observer-dependent valuation}. In this section we compare \emph{Cut-Based Valuation} (CBV) with existing standards and models, highlighting its originality and positioning.

\subsection{Accounting standards and national accounts}
Financial reporting provides different measurement bases and consolidation rules (e.g. IFRS/IAS, US GAAP), while the System of National Accounts (SNA) defines sectoral consolidation and perimeter choices. These literatures explicitly recognize that reported numbers depend on the reporting entity and on the measurement basis (choices similar to those of an observer). See, for example, \cite{IFRS_conceptual_framework, IFRS10, SNA2008}.  

CBV instead applies to \emph{any arbitrary perimeter} (sector, portfolio, sub-network), using as the only necessary information the \emph{boundary flows} $P\leftrightarrow O$. In Regime~A its complexity is \(\mathcal{O}(|E_{\text{cut}}|)\), avoiding matrix inversions and ensuring scalability.

\subsection{Input--Output, cross-holdings, and network clearing}
Input--Output analysis \cite{Leontief1986} formalizes linear propagation through $(I-A)^{-1}$; network clearing models (systemic risk) determine equilibrium payments in cases of default and insolvency \cite{EisenbergNoe2001, RogersVeraart2013, Battiston2012}.  

Analogously, ownership networks could require $(I-O_{PP})^{-1}$; however, CBV shows that such inversion is unnecessary if internal values $\bm v_P$ are known (Regime~A). Regime~B uses $(I-O_{PP})^{-1}$ only as an estimation device when values are unobservable, while maintaining \emph{cut invariance} and disclosure with respect to the observer.

\subsection{Control vs ownership and ultimate control}
Control differs from mere ownership shares; practical rules include majority/more votes, probabilistic weights, and look-through/ultimate beneficiary control measures. We operationalize three families (threshold/majority, Herfindahl look-through, attenuated paths) and provide disclosure parameters ($\tau,\alpha$), aligning with empirical traditions on pyramids/keiretsu and control chains \cite{Vitali2011_network_control, FaccioLang2002, Claessens2000}.

\subsection{Index theory, PPP, and observer dependence}
Price indices are not unique; they depend on baskets and formulas (Laspeyres, Paasche, Fisher, Tornqvist) \cite{Diewert1976}. Cross-country comparisons rely on PPP/ICP conventions \cite{WorldBankICP}. This mirrors our observer tuple $\Omega$ (Units/FX/PPP, Basis, SDF): different observers produce different numbers linked by transformation maps (units/deflators, discounting factors).

\subsection{Asset valuation and stochastic discount factors}
Valuation as an expectation under a stochastic discount factor (SDF) depends on the observer: changing measure/discounting changes values \cite{Cochrane2005, HansenSingleton1983}. Our framework abstracts this through the SDF component of $\Omega$, without committing to a specific asset pricing model.

\subsection{Comparative synthesis}
\begin{table}[H]
\centering
\caption{Positioning relative to representative strands.}
\begin{tabularx}{\linewidth}{|>{\raggedright\arraybackslash}X
                              |>{\raggedright\arraybackslash}X
                              |>{\raggedright\arraybackslash}X
                              |>{\raggedright\arraybackslash}X|}
\hline
\textbf{Strand} & \textbf{Representative references} & \textbf{What they provide} & \textbf{What we add (CBV)} \\
\hline
Accounting standards & IFRS~10; SNA~2008 & Consolidation criteria; sectoral boundaries; multiple bases & Formal observer $\Omega$; cut sufficiency; mandatory disclosure of artifacts \\
\hline
IO \& clearing & Leontief; Eisenberg--Noe; Rogers--Veraart & Linear/fixed-point propagation; hierarchies; debt clearing & Conditioning on the observer; valuation via boundary; computational costs $\mathcal{O}(|E_{\text{cut}}|)$ \\
\hline
Ownership networks & La Porta et al.; Faccio--Lang; Vitali et al. & Ownership/ultimate control metrics; pyramids/keiretsu & Three operational control rules with parameters and algorithms \\
\hline
Indices and PPP & Fisher; Diewert; ICP/World Bank & Observer dependence in measurement protocols & Transformation laws between observers in the valuation context \\
\hline
Asset valuation (SDF) & Cochrane; Hansen--Singleton & Observer dependence via discount factors & SDF as part of $\Omega$; integration with consolidation \\
\hline
\end{tabularx}
\end{table}
\subsection{What is new here}
\begin{enumerate}
  \item \textbf{Formalization of the observer.} We integrate measurement choices (perimeter, basis, units/PPP, SDF, information, control) into a single tuple $\Omega$, necessary for well-posed valuation statements.
  \item \textbf{Cut Theorem as sufficiency/invariance.} We prove that, \emph{conditionally on $\Omega$}, the consolidated value depends only on boundary statistics and node primitives; internal reconnections are gauge-equivalent.
  \item \textbf{Operational standards.} We standardize the artifacts \emph{Perimeter-of-Validity} and \emph{Cut Summary} (Section~\ref{sec:standards}), with algorithms and complexity guarantees.
  \item \textbf{Implementation-ready control.} We provide three operational control rules (\S~\ref{sec:ownershipcontrol}) with pseudocode and disclosure parameters, enabling empirical replications and sensitivity analysis.
\end{enumerate}

\subsection{Note on originality}
The originality of CBV lies in the unique combination of:
\begin{itemize}
  \item a \emph{cut theorem} that makes internal topology irrelevant (Regime~A);
  \item the identification of the minimal sufficient statistic (the cut itself);
  \item the formalization of a dual algorithmic rule (Regime A/B) valid for any arbitrary perimeter;
  \item the integration of observer choices into a single formal structure $\Omega$.
\end{itemize}
This synthesis --- cut theorem + sufficient statistic + observer + algorithmic recipe --- provides a parsimonious and unified rule that, to our knowledge, has not been presented in integrated form in the existing literature.

\section{Definitions and notation}\label{sec:notation}
Let $G=(V,E)$ be an ownership graph with $|V|=n$. Each node $i\in V$ is an entity (company, fund, institution). A directed edge $(i\to j)$ with weight $O_{ij}\in[0,1]$ indicates the share of $j$ owned by $i$; by definition, column $j$ of $O$ (the allocation of $j$’s equity among owners) ideally sums to $1$. \medskip
\noindent\textbf{Equity value.} $v_i\in\R_+$ is the equity value of $i$ (market/DCF/book). We denote $\bm v=(v_1,\dots,v_n)^\T$.
\noindent\textbf{External assets.} $b_i\in\R$ is the \emph{non-internal equity} value of $i$: real/intangible assets net of liabilities to the outside and of claims senior to equity. We denote $\bm b=(b_1,\dots,b_n)^\T$.

\noindent\textbf{Perimeter.} Given a partition $V=P\cup O$ with $P\cap O=\varnothing$, we call \emph{cut} the edges $P\to O$ and $O\to P$.
\noindent\textbf{Structural identity.} With the above convention, value satisfies
\begin{equation}\label{eq:structure}
\bm v \;=\; \bm b \;+\; O\,\bm v,
\end{equation}
where $(O\,\bm v)_i=\sum_{j} O_{ij}\,v_j$ is the value of participations held by $i$.

\section{Assumptions}\label{sec:assumptions}
(i) $\bm b$ is already net of debts and of external claims senior to equity (multi-layer extensions in \S\ref{sec:related}).  
(ii) $O_{ij}$ represents the effective economic share (treasury shares treated separately).  
(iii) The values of entities in $O$ are exogenous/observable: $\bm v_O$ (market prices or scenarios).

\paragraph{Input assumptions and practical cases.} 
CBV requires as fundamental inputs $\bm b_P$ (internal non-equity assets) and $\bm v_O$ (external equity values).
\begin{itemize}
\item \textbf{Choice of $\bm b_P$:} It may derive from book, market, or liquidation values. For example, for an industrial conglomerate, $\bm b_P$ may include plants valued at historical cost (\emph{book value}) or at current realizable prices (\emph{market value}). The choice directly affects $W(P)$ and can be tested with sensitivity analysis.
\item \textbf{Choice of $\bm v_O$:} If observable, they are typically market prices of entities in $O$. In illiquid contexts or with external cross-holdings, such prices may be distorted. In that case, it is advisable to apply corrections (e.g.\ illiquidity discount) or estimate values through industry multiples or discounted cash flows.
\end{itemize}

\paragraph{Multi-level example on $\bm b_P$.}
Consider a company $j$ with:
\begin{itemize}
\item $b_j^{(1)}$: net assets after senior debt,
\item $b_j^{(2)}$: residual value for preferred shares,
\item $b_j^{(3)}$: portion allocated to common equity.
\end{itemize}
Then $v_j = b_j^{(3)}$ where
\begin{equation*}
b_j^{(3)} = b_j^{(2)} - \sum_{\text{debt}} \mathrm{obligations} - \sum_{\text{pref}} \mathrm{dividends}.
\end{equation*}
This layered approach allows adapting CBV to complex capital structures.

\paragraph{Scenarios with non-observable $\bm v_O$.}
In the absence of market data (e.g.\ unlisted companies), conservative estimates can be adopted based on:
\begin{itemize}
\item Adjusted book values,
\item Industry multiples (P/E, EV/EBITDA) applied to available indicators,
\item Discounted cash flow models.
\end{itemize}
It is recommended to document the chosen approach and provide confidence intervals or alternative scenarios to assess the sensitivity of $W(P)$ to these estimates.

\paragraph{Numerical practical case.}
To illustrate the impact of input choices on $W(P)$, consider a perimeter $P$ with two entities and an outside $O$ with one entity.
\begin{align*}
O_{PO} &= \begin{pmatrix} 0.10 \\ 0.05 \end{pmatrix}, & O_{OP} &= \begin{pmatrix} 0.20 & 0.10 \end{pmatrix}, \\
O_{PP} &= \begin{pmatrix} 0 & 0.15 \\ 0.10 & 0 \end{pmatrix}.
\end{align*}

\textbf{Scenario 1:} $\bm b_P = (40, 30)^\T$, $\bm v_O = (50)$.
\begin{align*}
\text{Base} &= 40+30 = 70, \\
P\to O &= 0.10\cdot 50 + 0.05\cdot 50 = 7.5, \\
O\to P &= 0.20\cdot 40 + 0.10\cdot 30 = 11, \\
W(P) &= 70 + 7.5 - 11 = 66.5.
\end{align*}

\textbf{Scenario 2:} $\bm b_P = (50, 25)^\T$, $\bm v_O = (80)$.
\begin{align*}
\text{Base} &= 50+25 = 75, \\
P\to O &= 0.10\cdot 80 + 0.05\cdot 80 = 12, \\
O\to P &= 0.20\cdot 50 + 0.10\cdot 25 = 12.5, \\
W(P) &= 75 + 12 - 12.5 = 74.5.
\end{align*}

The comparison shows how variations in internal assets ($\bm b_P$) and in external values ($\bm v_O$) are directly reflected in variations of $W(P)$, even with the same ownership structure ($O$ fixed).

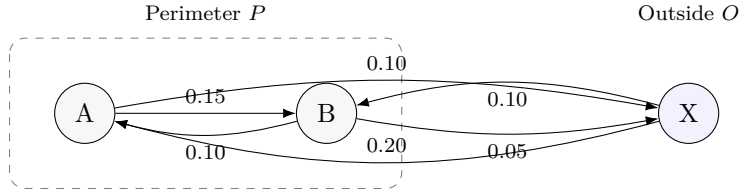
\begin{figure}[H]
\centering
\begin{tikzpicture}[>=LaTeX, node distance=24mm]
  \node[circle, draw, minimum size=8mm, font=\small, fill=black!3] (A) {A};
  \node[circle, draw, minimum size=8mm, font=\small, fill=black!3, right=of A] (B) {B};
  \node[circle, draw, minimum size=8mm, font=\small, fill=blue!5, right=40mm of B] (X) {X};
  \draw[rounded corners=6pt, dashed, gray] ($(A)+(-10mm,10mm)$) rectangle ($(B)+(10mm,-10mm)$);
  \node[font=\scriptsize, above] at ($(A)!0.5!(B)+(0,11mm)$) {Perimeter $P$};
  \node[font=\scriptsize, above] at ($(X)+(0,11mm)$) {Outside $O$};
  \draw[->] (A) -- node[font=\scriptsize, above]{0.15} (B);
  \draw[->] (B) to[bend left=15] node[font=\scriptsize, below]{0.10} (A);
  \draw[->] (A) to[bend left=10] node[font=\scriptsize, above]{0.10} (X);
  \draw[->] (B) to[bend right=10] node[font=\scriptsize, below]{0.05} (X);
  \draw[->] (X) to[bend left=15] node[font=\scriptsize, above]{0.20} (A);
  \draw[->] (X) to[bend right=15] node[font=\scriptsize, below]{0.10} (B);
\end{tikzpicture}
\caption{Practical case: three-node network with weights $O_{PP}$, $O_{PO}$ and $O_{OP}$ as defined.}
\label{fig:case_study_ipotesi}
\end{figure}

\section{Consolidated perimeter value: Cut Theorem}\label{sec:cutsection}
\begin{definition}[Consolidated value]
Given a perimeter $P\subseteq V$, we define $W(P)$ as the value attributable to \emph{internal} owners $P$ over all entities (inside and outside).
\end{definition}
\begin{theorem}[Cut-Based Consolidation]\label{thm:cut}
For any perimeter $P$ with complement $O$, it holds that
\begin{equation}\label{eq:cut}
W(P)\;=\;\sum_{j\in P} b_j \;+\; \sum_{i\in P,\;k\in O} O_{ik}\,v_k \;-\; \sum_{i\in O,\;j\in P} O_{ij}\,v_j.
\end{equation}
\end{theorem}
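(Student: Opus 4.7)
The plan is to derive the cut identity by a direct algebraic manipulation of the structural identity~\eqref{eq:structure}, restricted to the perimeter. For each $j\in P$, I would split the look-through term $(O\bm v)_j=\sum_k O_{jk}v_k$ according to whether the owned entity $k$ lies inside or outside $P$, so that
\[
v_j \;=\; b_j \;+\; \sum_{k\in P} O_{jk}v_k \;+\; \sum_{k\in O} O_{jk}v_k.
\]
Summing over $j\in P$ yields the block-level accounting identity
\[
\sum_{j\in P} v_j \;=\; \sum_{j\in P} b_j \;+\; \sum_{i\in P,\,k\in P} O_{ik}v_k \;+\; \sum_{i\in P,\,k\in O} O_{ik}v_k,
\]
equivalently $\sum_{j\in P}b_j + \sum_{i\in P,\,k\in O}O_{ik}v_k = \sum_{j\in P}v_j - \sum_{i\in P,\,k\in P}O_{ik}v_k$. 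This already encodes the consolidation operation: the gross asset-side value of $P$ as a merged block (bases plus external equity holdings) coincides with the total internal equity net of the internal cross-holdings that would otherwise be double counted.

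From here I would interpret $W(P)$ in its consolidation sense, consistent with the IFRS~10 and SNA~2008 reading cited in Section~\ref{sec:related}: treating $P$ as a single merged entity, the value attributable to its internal owners is the gross block value minus the stake held externally by owners in $O$, namely
\[
W(P) \;=\; \Bigl(\sum_{j\in P} v_j - \sum_{i\in P,\,k\in P} O_{ik}v_k\Bigr) \;-\; \sum_{i\in O,\,j\in P} O_{ij}v_j.
\]
Substituting the rearranged identity from the previous step collapses the bracketed expression to $\sum_{j\in P}b_j + \sum_{i\in P,\,k\in O}O_{ik}v_k$, and combining with the external-stake term reproduces exactly~\eqref{eq:cut}. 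Crucially the internal block $\sum_{i,k\in P}O_{ik}v_k$ cancels, so $O_{PP}$ never enters the final expression --- this is the content of the cut invariance claim.

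The main obstacle is definitional rather than algebraic: one must justify that this consolidation reading is the economically correct formalization of ``value attributable to internal owners $P$''. I would discharge this by appealing to the standard convention that internal cross-holdings are re-packagings of the same underlying fundamentals and must be eliminated to avoid double counting, exactly as prescribed in IFRS and SNA practice, so that netting $\sum_{i,k\in P}O_{ik}v_k$ and subtracting the outside claim $\sum_{i\in O,\,j\in P}O_{ij}v_j$ is the uniquely natural specification. Once that interpretation is fixed the proof reduces to a one-line rearrangement, and the $O_{PP}$-free form of~\eqref{eq:cut} immediately delivers both the advertised $\mathcal{O}(|E_{\text{cut}}|)$ complexity in Regime~A and the gauge equivalence of arbitrary internal reconnections within $P$ that preserve the boundary data and $\bm v_P$.
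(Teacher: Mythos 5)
Your proof is correct and takes essentially the same route as the paper's formal proof in Appendix~\ref{app:formal-cut-proof}: summing the structural identity $\bm v=\bm b+O\bm v$ over $j\in P$ is exactly Lemma~\ref{lem:vp-aggregate}, and your consolidation reading of $W(P)$ (gross internal value minus intra-$P$ cross-holdings minus external minorities) is exactly Lemma~\ref{lem:elim-internal}, after which the $O_{PP}$ term cancels identically. The definitional step you flag — that this consolidation convention is what ``value attributable to internal owners'' means — is handled the same way in the paper, by appeal to standard intercompany-elimination and minority-interest accounting.
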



\begin{proof}[Accounting sketch]
The formal proof is reported in Appendix~\ref{app:formal-cut-proof}. 
Here we simply note that formula \eqref{eq:cut} emerges by systematically eliminating intra-perimeter items and measuring only boundary quantities (internal bases, assets $P\!\to\!O$, and external minorities $O\!\to\!P$).
\end{proof}

\subsection{Illustrative T-account of the Cut Theorem (didactic)}\label{subsec:t-account}
To fix ideas (Regime A, observable internal values), consider a perimeter $P$ with two entities $A,B$ and one external $o_1$.
Data:
\[
b_A=10,\quad b_B=5;\qquad v_{o_1}=100,\quad v_A=50,\quad v_B=30;
\]
\[
O_{A,o_1}=0.30,\quad O_{B,o_1}=0;\qquad O_{o_1,A}=0.20,\quad O_{o_1,B}=0.10.
\]
Then
\[
\sum_{j\in P}b_j=15,\qquad
\sum_{i\in P,k\in O}O_{ik}v_k=0.30\cdot 100=30,\qquad
\sum_{i\in O,j\in P}O_{ij}v_j=0.20\cdot 50 + 0.10\cdot 30=13,
\]
and therefore
\[
W(P)=15+30-13=32.
\]

\begin{table}[H]
\centering
\caption{Consolidated T-account of perimeter $P$ (didactic example)}
\begin{tabular}{p{0.46\linewidth} p{0.46\linewidth}}
\toprule
\multicolumn{1}{c}{\textbf{Assets}} & \multicolumn{1}{c}{\textbf{Liabilities \& Equity}}\\
\midrule
Internal non-equity assets ($\sum b_P$) & External minorities $O\!\to\!P$ \\
\quad $10+5=15$ & \quad $0.20\,v_A + 0.10\,v_B = 10 + 3 = 13$ \\
Participations $P\!\to\!O$ & Consolidated equity $W(P)$ \\
\quad $0.30\,v_{o_1}=30$ & \quad \textbf{32} \\
\midrule
Total Assets $= 45$ & Total Liabilities $+$ Equity $=45$ \\
\bottomrule
\end{tabular}
\end{table}

\paragraph{Note.} Any internal chain/cycle in $P$ (cross-holdings between $A$ and $B$) is eliminated in consolidation and does not affect $W(P)$: only the boundary remains.

\begin{corollary}[Interiority]\label{cor:interior}
Any modification of the internal architecture $O_{PP}$ (chains, cycles, holding) that does not alter $\bm b_P$ and the cuts $O_{PO},O_{OP}$ leaves $W(P)$ unchanged.
\end{corollary}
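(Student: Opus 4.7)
The plan is to read the invariance directly off the Cut formula \eqref{eq:cut}, noting that its right-hand side simply does not contain the internal block $O_{PP}$. Under the corollary's hypothesis, and in Regime~A of Section~\ref{sec:osservatore} so that $\bm v_P$ is observed exogenously, each of the three boundary sums is by inspection unchanged by any reconnection of $O_{PP}$, and the conclusion follows.

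First I would split the right-hand side of \eqref{eq:cut} into its three additive components: the internal base $\sum_{j\in P} b_j$, the outgoing cut $\sum_{i\in P,\,k\in O} O_{ik}v_k$, and the incoming (external minority) claim $\sum_{i\in O,\,j\in P} O_{ij}v_j$. I would then argue, term by term, that these depend only on $\bm b_P$, on the pair $(O_{PO},\bm v_O)$, and on the pair $(O_{OP},\bm v_P)$ respectively; the submatrix $O_{PP}$ appears in none of them. Since $\bm b_P$, $O_{PO}$, and $O_{OP}$ are held fixed by hypothesis, and $\bm v_O$ is exogenous by Assumption (iii) of Section~\ref{sec:assumptions}, the three boundary contributions are individually invariant, hence so is their algebraic sum $W(P)$.

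The step that needs the most care concerns the status of $\bm v_P$ in the third term. In Regime~A the internal values are data (market prices, scenario inputs) held fixed alongside the boundary, so invariance is immediate and the corollary is essentially a restatement of the Cut Theorem. In Regime~B the internal values are recovered from $\bm v_P=(I-O_{PP})^{-1}(\bm b_P+O_{PO}\bm v_O)$ and genuinely depend on $O_{PP}$; the honest reading of the corollary in that setting is that it covers reconnections of $O_{PP}$ that preserve the fixed point $\bm v_P$, i.e.\ the \emph{gauge-equivalent} reconfigurations mentioned in Section~\ref{sec:intro}. I expect the main difficulty to be purely expository rather than mathematical: making explicit that Corollary~\ref{cor:interior} is an algebraic consequence of the \emph{shape} of \eqref{eq:cut} conditional on the observer's data, not a claim that endogenously determined internal values are themselves insensitive to $O_{PP}$.
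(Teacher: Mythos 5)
Your proposal is correct and follows essentially the same route as the paper: the corollary is read directly off formula \eqref{eq:cut}, whose right-hand side contains no $O_{PP}$, so fixing $\bm b_P$, $O_{PO}$, $O_{OP}$ (and, in Regime~A, the observed $\bm v_P$, $\bm v_O$) immediately gives invariance — exactly the observation the paper makes after the formal proof in Appendix~\ref{app:formal-cut-proof}. Your caveat about Regime~B, where $\bm v_P$ depends on $O_{PP}$ and invariance holds only for reconnections preserving the estimated $\bm v_P$ (equivalently the effective operators), matches the paper's own qualification in Section~\ref{sec:osservatore} and Proposition~\ref{prop:schur-cancel}.
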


\begin{corollary}[Total closed system]\label{cor:closed}
If $P=V$ (no outside), then $W(V)=\sum_{j\in V} b_j$, independent of any internal participation.
\end{corollary}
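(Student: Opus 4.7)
The plan is to derive Corollary~\ref{cor:closed} as a direct degenerate application of the Cut Theorem (Theorem~\ref{thm:cut}), treating the case $P=V$ as the boundary-free limit. Conceptually, when the perimeter coincides with the entire graph there is no outside to exchange stakes with, so the cut is empty and the entire consolidated value collapses to the base $\sum_{j\in V}b_j$. Because nothing deep is happening, the proof will mainly consist in verifying the index conventions and reading off the conclusion.

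First I would set $P=V$, so by construction the complement $O=V\setminus P=\varnothing$. Then I substitute into formula \eqref{eq:cut}. The two boundary sums $\sum_{i\in P,\,k\in O}O_{ik}v_k$ and $\sum_{i\in O,\,j\in P}O_{ij}v_j$ are indexed over sets that contain the empty factor $O$; by the usual convention on empty sums, both are zero. The remaining term is $\sum_{j\in V}b_j$, which establishes the equality.

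For the second half of the statement, that $W(V)$ is \emph{independent of any internal participation}, the cleanest route is to note that the resulting expression $\sum_{j\in V}b_j$ simply contains no entry of $O$ at all; equivalently, the claim is an immediate specialization of Corollary~\ref{cor:interior} applied to the trivial cut, since every edge of $G$ is now internal to $P$ and hence gauge-equivalent under reconnection. The only point that merits explicit comment — not really an obstacle, but the one subtlety worth flagging — is that although the equity values $\bm v$ and the matrix $O$ both appear in the structural identity \eqref{eq:structure}, they drop out of $W(V)$: this is exactly the ``conservation'' reading of the Cut Theorem in the closed-system limit, which is why the corollary is worth stating separately rather than leaving implicit in Theorem~\ref{thm:cut}.
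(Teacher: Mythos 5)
Your proof is correct and matches the paper's intended argument: Corollary~\ref{cor:closed} is exactly the specialization of the Cut Theorem formula \eqref{eq:cut} to $O=\varnothing$, with both boundary sums vanishing as empty sums and the internal block dropping out as in Corollary~\ref{cor:interior}. The paper leaves this implicit as an immediate consequence of Theorem~\ref{thm:cut}; your explicit check of the empty-index convention is the same route, merely spelled out.
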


\paragraph{Minimal data for \eqref{eq:cut}.}
Equation \eqref{eq:cut} requires only: $\bm b_P$, external prices/values $\bm v_O$, and the sole cut edges $O_{PO},O_{OP}$. The internal topology $O_{PP}$ is irrelevant for $W(P)$.


\subsection{Axioms for the cut-based consolidation functional}\label{subsec:axioms-uniqueness}
\paragraph{Setup.}
Let $V$ be the finite set of nodes (entities). For a perimeter $P\subseteq V$ denote $O=V\setminus P$.
The observer $\Omega$ fixes units/currency and, if necessary, a pricing functional (SDF) to value bases and payoffs.
We denote by $\bm b\in\mathbb R^{|V|}$ the \emph{bases} per node and by $\bm v\in\mathbb R^{|V|}$ the \emph{values}.
The matrix $O\ge 0$ collects participations; we write the blocks $O_{PP}, O_{PO}, O_{OP}$ with the usual meaning.
The functional of interest is $W:\;2^V\to\mathbb R$, $P\mapsto W(P)$.

\paragraph{Axioms (A)--(E).}
\begin{enumerate}[label=(\Alph*)]
\item \textbf{Linearity.} $W$ is linear in the bases and in the payoffs valued by $\Omega$.
\item \textbf{Neutrality to internal transfers.} Reallocations/participations \emph{internal} to $P$ that leave boundary objects unchanged do not change $W(P)$.
\item \textbf{Cut-invariance.} $W(P)$ depends only on $(\bm b_P,\bm v_P,\bm v_O,O_{PO},O_{OP})$, not on the internal topology $O_{PP}$.
\item \textbf{Aggregative consistency (nested perimeters).} For $P\subseteq Q$ one has
\[
W(Q)=W(P)+W(Q\!\setminus\!P)-\big(\text{net value on the internal cut } P\leftrightarrow Q\!\setminus\!P\big).
\]
\item \textbf{``Unit-free'' compatibility with respect to the observer.}
Under invertible linear transformations of units/currency/PPP/SDF fixed by $\Omega$, $W$ transforms equivariantly (only up to the same scale).
\end{enumerate}

\subsection{Representation theorem and uniqueness of CBV}\label{subsec:cbv-uniqueness}

\begin{theorem}[Representation and uniqueness]\label{thm:cbv-uniqueness2}
Under axioms \textup{(A)--(E)}, for any perimeter $P$ with complement $O$ the identity holds
\begin{equation}\label{eq:cbv-repr}
W(P)\;=\;\sum_{j\in P} b_j
\;+\;\sum_{i\in P,\;k\in O} O_{ik}\,v_k
\;-\;\sum_{i\in O,\;j\in P} O_{ij}\,v_j,
\end{equation}
which coincides with the formula of the \emph{Cut Theorem} \eqref{eq:cut}. Moreover, $W$ is \emph{unique} with such properties.
\end{theorem}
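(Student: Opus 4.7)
I would split the proof into existence and uniqueness.

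\emph{Existence.} Let $W^\star(P)$ denote the right-hand side of \eqref{eq:cbv-repr}. The plan is to verify (A)--(E) in turn. Linearity (A) is immediate, since $W^\star$ is affine in $\bm b$ and linear in each boundary payoff $O_{ik} v_k$ and $O_{ij} v_j$. Neutrality (B) and cut-invariance (C) hold by inspection, as no term references $O_{PP}$ or any internal value $v_j$ with $j\in P$. Aggregative consistency (D) follows because, upon adding $W^\star(P) + W^\star(Q\!\setminus\!P)$, the edges internal to $Q$ that straddle the partition of $Q$ into $P$ and $Q\!\setminus\!P$ appear once on each side with opposite signs and identical magnitudes, cancelling exactly, so the identity reduces to $W^\star(Q)$. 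Equivariance (E) is just the degree-one homogeneity of bases and boundary payoffs under the scale transformations prescribed by $\Omega$.

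\emph{Uniqueness.} Let $W'$ be any functional satisfying (A)--(E). By (A) together with (C), for each perimeter $P$ the value $W'(P)$ is a linear form in the admissible boundary primitives, so it admits the representation
\[
W'(P) = \sum_{j\in P} \alpha_j b_j + \sum_{j\in P} \beta_j v_j + \sum_{k\in O} \gamma_k v_k + \sum_{i\in P,\,k\in O} \mu_{ik}\, O_{ik} v_k + \sum_{i\in O,\,j\in P} \nu_{ij}\, O_{ij} v_j,
\]
with coefficients constrained by (E) to be unit-free scalars. I would then pin each coefficient by specializing to diagnostic configurations. First, invoking the structural identity $\bm v = \bm b + O\bm v$ to re-express every internal $v_j$ ($j\in P$) in terms of $(\bm b, \bm v_O, O)$, any nonzero $\beta_j$ would introduce a dependence on $O_{PP}$, in conflict with (C); hence $\beta_j=0$. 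An analogous substitution argument, together with specialization to a perimeter differing only by an isolated external node with no boundary edge, yields $\gamma_k=0$. Taking $V=P$ with $O\equiv 0$ and iterating (D) down to singletons fixes $\alpha_j=1$. Finally, applying (D) to a nested pair $P\subset Q$ in which a single boundary edge becomes internal identifies $\mu_{ik}=1$ and $\nu_{ij}=-1$; the sign conventions encode the accounting role of external minorities as a deduction. Substituting the pinned coefficients back recovers \eqref{eq:cbv-repr}, so $W'=W^\star$.

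\textbf{Main obstacle.} The subtlest step is the elimination of the $\beta_j$ coefficients on internal values, because axiom (C) literally permits $\bm v_P$ as a dependency. The decisive manoeuvre is to read (C) modulo the structural identity: each internal $v_j$ is a specific function of $(\bm b, \bm v_O, O)$ that depends nontrivially on $O_{PP}$, so the only way for $W'(P)$ to remain independent of $O_{PP}$ is for the linear coefficient on every $v_j$, $j\in P$, to vanish. Once this is secured, the remaining normalizations $\alpha_j=1$, $\mu_{ik}=1$, $\nu_{ij}=-1$ follow from routine applications of (D) on nested perimeters, and the proof closes.
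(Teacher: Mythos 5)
There is a genuine gap in your uniqueness step, and it sits exactly where you locate the ``main obstacle.'' Your device for killing the bare coefficients $\beta_j$ on internal values is to read axiom (C) \emph{modulo} the structural identity $\bm v=\bm b+O\bm v$: substitute $v_j=v_j(\bm b,\bm v_O,O)$ and argue that any residual dependence on $O_{PP}$ forces $\beta_j=0$. But this argument proves too much. The external-minorities term $\sum_{i\in O,\,j\in P}\nu_{ij}\,O_{ij}v_j$ that you must retain with $\nu_{ij}=-1$ also contains the internal values $v_j$, $j\in P$, and after the same substitution it depends on $O_{PP}$ in exactly the same way; applied consistently, your manoeuvre would force $\nu_{ij}=0$ and destroy the target formula. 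The axioms as stated treat $\bm v_P$ as admissible data (Regime~A reading): cut-invariance means $W$ does not vary with $O_{PP}$ \emph{holding} $(\bm b_P,\bm v_P,\bm v_O,O_{PO},O_{OP})$ fixed, not after eliminating $\bm v_P$ through the structural identity. So the elimination of bare $v_j$ terms needs a different justification (e.g.\ neutrality (B) to internal transfers, or an explicit base-case normalization), not the substitution argument.

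A secondary weakness is your use of (D) to pin $\mu_{ik}=1$ and $\nu_{ij}=-1$: axiom (D) only constrains $W$ up to the informally specified ``net value on the internal cut,'' so extracting the signs and magnitudes of the boundary coefficients from it risks assuming what is to be proved. The paper avoids both issues by restricting the ansatz from the outset, via (B)--(C), to a linear combination of the three boundary aggregates $\sum_{j\in P}b_j$, $\sum_{i\in P,k\in O}O_{ik}v_k$, $\sum_{i\in O,j\in P}O_{ij}v_j$, and then normalizing the coefficients to $(+1,+1,-1)$ through three elementary configurations: no participations at all; zero bases with only $P\!\to\!O$ assets; zero bases with only $O\!\to\!P$ minorities (where consolidation elimination fixes the minus sign). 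Modularity (D) and equivariance (E) are then used only as consistency checks and to exclude rescaled coefficients. Your existence half and the singleton/no-participation normalization of $\alpha_j$ are fine and parallel the paper; it is the treatment of internal values and the reliance on (D) for the boundary coefficients that would need to be reworked along these base-case lines.
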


\begin{proof}[Proof idea]
(1) \emph{Reduction to the cut.} By (B)–(C) $W(P)$ cannot depend on $O_{PP}$: only the boundary information $(\bm b_P,\bm v_P,\bm v_O,O_{PO},O_{OP})$ remains.

(2) \emph{Linearity.} By (A) the dependence is linear: $W(P)$ is a linear combination of the internal base and of the portfolios \emph{across} the cut, in both directions $P\!\to\!O$ and $O\!\to\!P$.

(3) \emph{Base cases/normalization.}
(i) Network without participations ($O\equiv 0$): $W(P)=\sum_{j\in P} b_j$ fixes the coefficient of the base at $1$.  
(ii) Only assets of $P$ on $O$ (zero bases, $O_{OP}\!=\!0$): $W(P)=\sum_{i\in P,k\in O} O_{ik} v_k$ fixes the coefficient of the flow $P\!\to\!O$ at $1$.  
(iii) Only shares of $O$ on $P$ (zero bases, $O_{PO}\!=\!0$): consolidation elimination imposes the \emph{minus sign} and coefficient $1$ on the flow $O\!\to\!P$.

(4) \emph{Modularity.} The form \eqref{eq:cbv-repr} satisfies aggregative consistency (D) by additive decomposition and subtraction of the net value on the internal cut.

(5) \emph{Equivariance to the observer.} With (E) any weights different from unit coefficients would break the “unit-free’’ transformation. The base cases fix them at $+1,+1,-1$. No other linear boundary terms are compatible with (B)–(D). Uniqueness follows.
\end{proof}

\paragraph{Remark.}
Equation \eqref{eq:cbv-repr} is \emph{completely cut-based}: it eliminates $O_{PP}$ and measures only (i) internal bases, (ii) assets of $P$ on $O$, (iii) \emph{external minorities} $O\!\to\!P$ with negative sign; thus it is invariant to internal restructuring within the perimeter.


\subsection{Axiomatization and uniqueness of the CBV functional}\label{subsec:axioms-cbv}

\paragraph{Setup.}
Let $V$ be the finite set of nodes. For a perimeter $P\subseteq V$ denote $O=V\setminus P$.
The observer $\Omega$ fixes units/currency and, if necessary, a pricing functional (SDF).
We denote by $\bm b\in\mathbb R^{|V|}$ the \emph{bases} and by $\bm v\in\mathbb R^{|V|}$ the \emph{values}. The participation matrix $O\ge 0$ is blocked as $O_{PP}, O_{PO}, O_{OP}$.

\paragraph{Axioms.}
\begin{enumerate}[label=(\Alph*)]
\item \textbf{Linearity}: $W$ is linear in the bases and in the payoffs valued by $\Omega$.
\item \textbf{Neutrality to internal transfers}: reallocations/links \emph{internal} to $P$ that leave boundary objects unchanged do not change $W(P)$.
\item \textbf{Cut-invariance}: $W(P)$ does not depend on $O_{PP}$ but only on $(\bm b_P,\bm v_P,\bm v_O,O_{PO},O_{OP})$.
\item \textbf{Aggregative consistency (nested)}: for $P\subseteq Q$,
\[
W(Q)=W(P)+W(Q\!\setminus\!P)-\big(\text{net value on the internal cut } P\leftrightarrow Q\!\setminus\!P\big).
\]
\item \textbf{``Unit-free'' compatibility}: changes of units/FX/PPP/SDF (invertible linear transformations) act equivariantly on $W$.
\end{enumerate}

\begin{theorem}[Representation and uniqueness]\label{thm:cbv-uniqueness}
Under \textup{(A)--(E)}, for every $P$ with complement $O$ one has
\begin{equation}\label{eq:cbv-repr-main}
W(P)=\sum_{j\in P} b_j \;+\; \sum_{i\in P,k\in O} O_{ik}\,v_k \;-\; \sum_{i\in O,j\in P} O_{ij}\,v_j,
\end{equation}
that is, the formula of the \emph{Cut Theorem} \eqref{eq:cut}. Moreover $W$ is \emph{unique} with such properties.
\end{theorem}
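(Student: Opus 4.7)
The plan is to exploit the axioms in a definite order: first reduce the apparent complexity of $W$ to a dependence on boundary data only, then invoke linearity to write down a general ansatz, and finally pin down the free coefficients via a small number of extremal configurations. Concretely, axiom (C) removes any dependence on the internal block $O_{PP}$, and axiom (B) confirms that rearrangements internal to $P$ which preserve boundary data do not alter $W(P)$; jointly these force $W(P)$ to be a function of the boundary tuple $(\bm b_P,\bm v_P,\bm v_O,O_{PO},O_{OP})$ alone. Applying (A) on top of this reduction, $W(P)$ must be a linear functional in the entries of $\bm b_P$ and in the bilinear products $O_{ik}v_k$ and $O_{ij}v_j$ running across the cut in either direction.

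With this ansatz
\[
W(P)=\sum_{j\in P}\alpha_j\,b_j+\sum_{\substack{i\in P\\k\in O}}\beta_{ik}\,O_{ik}v_k+\sum_{\substack{i\in O\\j\in P}}\gamma_{ij}\,O_{ij}v_j,
\]
I would exhibit three canonical normalization configurations. The participation-free network ($O\equiv 0$) fixes $\alpha_j=1$ for every $j\in P$. A configuration with zero bases and $O_{OP}=0$ leaves only $P\to O$ flows, and the interpretation of $W(P)$ as the value attributable to internal owners fixes $\beta_{ik}=1$. Symmetrically, zeroing bases and $O_{PO}$ leaves only $O\to P$ flows, and the standard consolidation elimination of external minorities yields $\gamma_{ij}=-1$. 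The unit-free equivariance axiom (E) is what prevents these coefficients from being non-trivial functions of units, currency, or the SDF: any such dependence would spoil the equivariant rescaling required under invertible linear transformations of the measurement basis.

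To close the argument I would verify that the resulting formula is internally consistent with the nested-aggregation axiom (D). Decomposing $W(Q)$ for $P\subseteq Q$ along the tripartite split $(P,\,Q\setminus P,\,V\setminus Q)$ and collecting terms, the cross-perimeter flows between $P$ and $Q\setminus P$ appear once in $W(P)$ and once in $W(Q\setminus P)$ but are absent from $W(Q)$, which is precisely the net internal-cut correction predicted by (D). Uniqueness then follows directly: any two $W$ satisfying (A)--(E) must share the boundary-only form and the three fixed coefficients, hence coincide on every perimeter, yielding \eqref{eq:cbv-repr-main}.

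The step I expect to be most delicate is ruling out off-diagonal cross terms in the linear ansatz, for instance mixed contributions like $\alpha_{jk}b_j v_k$ pairing an internal base with an external value, or interactions between two distinct boundary edges. Linearity alone (A) does not forbid such terms; one needs cut-invariance (C) applied to carefully chosen topology perturbations that move mass within $P$ without altering the boundary tuple, combined with the unit equivariance of (E) to exclude coefficients that would have to carry hidden scales. Writing this elimination cleanly --- identifying, for each putative unwanted coefficient, a specific pair of configurations whose difference is forced to vanish by (B)--(C)--(E) --- is the portion of the proof that warrants the most care and which the sketch in the text compresses into a single line.
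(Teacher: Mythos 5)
Your proposal is correct and follows essentially the same route as the paper's own argument (and its extended version in Appendix~\ref{app:cbv-uniqueness}): reduction to boundary data via (B)--(C), a linear ansatz via (A), normalization of the three coefficients to $(+1,+1,-1)$ through the same three base configurations, equivariance (E) to exclude other weightings, and modularity (D) as the consistency check. Your closing remark about ruling out mixed cross terms correctly identifies the step the paper's sketch compresses, but it is an elaboration of the same argument rather than a different approach.
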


\begin{proof}[Proof idea]
(B)--(C) eliminate $O_{PP}$, reducing $W(P)$ to a linear function (A) of the internal base and of the portfolios crossing the cut in both directions $P\!\to\!O$ and $O\!\to\!P$.
Three base cases normalize the coefficients to $(+1,+1,-1)$: (i) network without participations; (ii) only assets of $P$ on $O$; (iii) only \emph{external minorities} on $P$.
The resulting form satisfies modularity (D). Different coefficients would break equivariance (E). Uniqueness follows.
\end{proof}

\paragraph{Remark.}
Equation \eqref{eq:cbv-repr-main} is \emph{completely boundary-based}: it measures (i) internal bases, (ii) assets $P\!\to\!O$, (iii) \emph{external minorities} $O\!\to\!P$ with negative sign; thus it is invariant to internal restructuring.

\subsection{Robustness bound (audit-first)}\label{subsec:robust-bounds}
Let $\mathbf{1}_P\in\R^{|P|}$ be the vector of all $1$s and, for $p\in[1,\infty]$, let $q$ be the dual norm ($1/p+1/q=1$).
For a matrix $A$, denote $\|A\|_{q\leftarrow p}:=\sup_{x\neq 0}\frac{\|Ax\|_q}{\|x\|_p}$.

\begin{theorem}[CBV robustness bound]\label{thm:robust-main}
If $\|\Delta \bm v_O\|_p\le \varepsilon$ and $\|\Delta \bm b_P\|_p\le \eta$, then the variation of $W(P)$ satisfies
\begin{equation}\label{eq:robust-generic}
|\Delta W(P)| \;\le\; \|\mathbf{1}_P\|_q \,\eta \;+\; \big\|\mathbf{1}_P^\top O_{PO}\big\|_q \,\varepsilon 
\;\le\; \|\mathbf{1}_P\|_q \,\eta \;+\; \|\mathbf{1}_P\|_q \,\|O_{PO}\|_{q\leftarrow p}\, \varepsilon.
\end{equation}
\end{theorem}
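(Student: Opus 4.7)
The plan is to exploit the fact that the Cut Theorem gives $W(P)$ as an \emph{affine} function of the perturbed inputs $\bm b_P$ and $\bm v_O$, with coefficients fixed by the boundary data $O_{PO}$ and the (Regime A) internal values $\bm v_P$. Because the map is linear, $\Delta W(P)$ is an exact (not infinitesimal) difference and there is no linearization remainder to control; the proof reduces to one application of Hölder's inequality per input and one application of the definition of the operator norm.

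First, I would rewrite the first two sums in \eqref{eq:cut} in vector-matrix form as $\sum_{j\in P}b_j=\mathbf{1}_P^{\T}\bm b_P$ and $\sum_{i\in P,k\in O}O_{ik}v_k=\mathbf{1}_P^{\T}O_{PO}\bm v_O$. The third sum depends only on $\bm v_P$ (held fixed in Regime A) and on the unchanged block $O_{OP}$, so it cancels in the difference. Subtracting the unperturbed from the perturbed valuation therefore yields the closed-form identity
\[
\Delta W(P)\;=\;\mathbf{1}_P^{\T}\Delta\bm b_P \;+\; \mathbf{1}_P^{\T}O_{PO}\,\Delta\bm v_O .
\]
The triangle inequality then splits the bound into the base contribution and the boundary contribution; Hölder's inequality with dual exponents $(p,q)$ gives $|\mathbf{1}_P^{\T}\Delta\bm b_P|\le \|\mathbf{1}_P\|_q\,\|\Delta\bm b_P\|_p\le \|\mathbf{1}_P\|_q\,\eta$, and analogously $|\mathbf{1}_P^{\T}O_{PO}\,\Delta\bm v_O|\le \|\mathbf{1}_P^{\T}O_{PO}\|_q\,\varepsilon$. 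This already delivers the first (sharper) inequality of \eqref{eq:robust-generic}.

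For the second, looser inequality I would invoke the definition $\|O_{PO}\|_{q\leftarrow p}=\sup_{x\neq 0}\|O_{PO}x\|_q/\|x\|_p$ together with the duality between $\ell^p$ and $\ell^q$ to estimate the dual coefficient vector $\mathbf{1}_P^{\T}O_{PO}$ by $\|\mathbf{1}_P\|_q\,\|O_{PO}\|_{q\leftarrow p}$, and then substitute. I do not anticipate a substantive obstacle: the only subtlety is keeping track of the informational regime, since under Regime B the identity for $\Delta W(P)$ acquires an extra factor $(I-O_{PP})^{-1}$ (propagating $\Delta\bm b_P$ and $O_{PO}\Delta\bm v_O$ through the internal resolvent), which would replace $\mathbf{1}_P^{\T}$ by $\mathbf{1}_P^{\T}(I-O_{PP})^{-1}$ in the coefficients and lead, via a Neumann-series bound, to the amplification factor $1/(1-\|O_{PP}\|_{p\leftarrow p})$. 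Flagging this distinction in a remark is, I think, the main clarification the proof should carry; the core computation is otherwise routine.
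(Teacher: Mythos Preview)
Your proposal is correct and follows essentially the same route as the paper: derive the exact linear identity $\Delta W=\mathbf{1}_P^{\T}\Delta\bm b_P+\mathbf{1}_P^{\T}O_{PO}\,\Delta\bm v_O$ from the Cut formula under Regime~A ($\Delta\bm v_P=0$), apply H\"older twice for the first inequality, and then bound $\|\mathbf{1}_P^{\T}O_{PO}\|_q$ via operator-norm sub-multiplicativity for the second. Your closing remark on the Regime~B amplification through $(I-O_{PP})^{-1}$ also mirrors the paper's subsequent extension paragraph.
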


\begin{proof}
From \eqref{eq:cbv-repr-main} with $\Delta \bm v_P=0$ (internal prices not perturbed) we have
$\Delta W = \mathbf{1}_P^\top \Delta \bm b_P + \mathbf{1}_P^\top O_{PO}\,\Delta \bm v_O$.
By Hölder: $|\mathbf{1}_P^\top \Delta \bm b_P|\le \|\mathbf{1}_P\|_q\|\Delta \bm b_P\|_p$ and 
$|\mathbf{1}_P^\top O_{PO}\,\Delta \bm v_O|\le \|\mathbf{1}_P^\top O_{PO}\|_q \|\Delta \bm v_O\|_p$.
The second inequality follows from $\|\mathbf{1}_P^\top O_{PO}\|_q \le \|\mathbf{1}_P\|_q \|O_{PO}\|_{q\leftarrow p}$.
\end{proof}

\paragraph{Practical corollaries (choose the norm).}
\begin{itemize}
\item \textbf{$p\!=\!1$ (dual $q\!=\!\infty$):}
\[
|\Delta W(P)| \;\le\; \eta \;+\; \Big(\max_{k\in O}\sum_{i\in P} O_{ik}\Big)\,\varepsilon.
\]
\emph{Cut geometry:} maximum ``weighted degree’’ outgoing to $O$.
\item \textbf{$p\!=\!2$ (dual $q\!=\!2$):}
\[
|\Delta W(P)| \;\le\; \sqrt{|P|}\,\eta \;+\; \sqrt{|P|}\,\|O_{PO}\|_2\,\varepsilon.
\]
\emph{Cut geometry:} spectral norm of the block $O_{PO}$.
\item \textbf{$p\!=\!\infty$ (dual $q\!=\!1$):}
\[
|\Delta W(P)| \;\le\; |P|\,\eta \;+\; \Big(\sum_{i\in P,k\in O} O_{ik}\Big)\,\varepsilon.
\]
\emph{Cut geometry:} total weight of edges $P\!\to\!O$.
\end{itemize}

\paragraph{Extension (Regime B, estimated internal prices).}
If internal valuations vary through
$\Delta \bm v_P=(I-O_{PP})^{-1}\!\big(\Delta \bm b_P + O_{PO}\,\Delta \bm v_O\big)$
(well-defined if $\rho(O_{PP})<1$), then
\[
|\Delta W(P)| \;\le\; \|\mathbf{1}_P\|_q \eta + \|\mathbf{1}_P^\top O_{PO}\|_q \varepsilon
+ \|\bm\delta\|_q\,\|(I-O_{PP})^{-1}\|_{p\to p}\,\Big(\eta + \|O_{PO}\|_{q\leftarrow p}\varepsilon\Big),
\]
where $\bm\delta:=O_{OP}^\top \mathbf{1}_O$ is the vector of \emph{direct minorities} on $P$.
This bound links the impact of errors to (i) strength of the cut $P\!\leftrightarrow\!O$, (ii) internal ``multiplier’’ $(I-O_{PP})^{-1}$, (iii) minority exposure $\bm\delta$.

\section{Transformation laws between observers}\label{sec:transform}
Let $\Omega=(P,\mathrm{Basis},\mathrm{Units},\mathrm{FX/PPP},\mathrm{SDF},\mathcal I,\mathcal C)$ and $\Omega'$ differ in one or more components.
We list the operational maps connecting $V_{\Omega}$ and $V_{\Omega'}$ when applicable.

\subsection{Units/FX/PPP (linear scaling)}
If $\Omega'$ differs only by a change of units/FX/PPP represented by a positive (or diagonal) scalar $\kappa$, then
\begin{equation}\label{eq:units-scaling}
V_{\Omega'}(G) \;=\; \kappa \, V_{\Omega}(G),\qquad
(X'_{PO},X'_{OP}) \;=\; \kappa\,(X_{PO},X_{OP}),\qquad
(\bm v'_P,\bm v'_O)=\kappa\,(\bm v_P,\bm v_O).
\end{equation}
Disclosure: report $\kappa$ (FX source, deflator/PPP, date).

\subsection{Discounting (change of measure / SDF)}
Let $\Omega'$ differ only by the SDF; denote by $m$ and $m'$ the one-period discount factors under $\Omega$ and $\Omega'$, with $m' = m\cdot \Lambda$ and $\Lambda>0$ the Radon--Nikod\'ym derivative (\emph{change of measure}). Then, for any cash-flow vector $X$ (including cut terms),
\begin{equation}\label{eq:sdf-change}
\mathbb E_{\Omega'}[m' X] \;=\; \mathbb E_{\Omega}[m \Lambda X].
\end{equation}
Operationally, $V_{\Omega'}$ is obtained by re-weighting/re-discounting the boundary terms in \eqref{eq:units-scaling} according to $\Lambda$; make the curve/source explicit.

\subsection{Perimeter variation}
If $P$ changes to $P'$, valuation must be recalculated on the \emph{new cut} $P'\leftrightarrow O'$, using the same rules. In general no closed-form link exists, but the Cut Theorem guarantees sufficiency of the new boundary statistics. Provide both PoV tables to ensure comparability.

\subsection{Change of control rule}
Changing $\mathcal C$ (and hence $\omega$) may affect $O_{PP}$ and perimeter membership. We recommend reporting a sensitivity table across Options A/B/C (Section~\ref{sec:ownershipcontrol}) and parameters ($\tau,\alpha$), together with the corresponding variations of the Cut Summary.

\section[Reporting standards: Perimeter of validity \& Cut summary]{%
Reporting standards: Perimeter of validity \& Cut summary}
\label{sec:standards}
\noindent To ensure comparability, verifiability, and reproducibility, each valuation must be accompanied by two disclosure artifacts: the \emph{Perimeter of validity} (which fixes the observer $\Omega$) and the \emph{Cut Summary} (boundary statistics). The following templates are normative; fields marked \textbf{(req)} are mandatory.

\subsection{Perimeter of validity (table template)}\label{subsec:template-pov}
\begin{table}[H]
\centering
\caption{Perimeter of validity (PoV). Fields marked (req) are mandatory.}
\begin{tabularx}{\textwidth}{lX}
\toprule
\textbf{Field} & \textbf{Value / Notes}\\
\midrule
Perimeter $P$ \textbf{(req)} & Included entities; exclusions; SPE/JV treatment. \\
Basis \textbf{(req)} & Fair value / Historical cost / Realizable; impairment policy. \\
Units \& Date \textbf{(req)} & Currency (e.g. EUR); measurement date (YYYY-MM-DD). \\
FX/PPP & FX source (e.g. ECB); PPP/deflator (e.g. OECD ICP; HICP 2025Q2); chain-linking. \\
SDF & Curve source (e.g. OIS/EUR swap); measure (risk-neutral/physical); horizons. \\
Informational regime $\mathcal I$ \textbf{(req)} & Regime A (observed boundary) or Regime B (use of $O_{PP}$ to estimate $\bm v_P$). \\
Control rule $\mathcal C$ \textbf{(req)} & Option A/B/C; parameters ($\tau$, $\alpha$), normalization, reachability depth, cycle attenuation. \\
Tolerances & Rounding threshold $\tau$, solver $\varepsilon$, $K_{\max}$. \\
Data sources & Dataset, providers, time-stamp, licenses. \\
Assumptions & Any deviations from default assumptions [A1--A8] in Section~\ref{sec:assumptions}. \\
Versioning & Code repo commit, data snapshot ID, environment hash. \\
Notes & Free-text clarifications, exclusions, warnings. \\
\bottomrule
\end{tabularx}
\end{table}

\subsection{Cut summary (table templates)}\label{subsec:template-cut}
\paragraph{$P\to O$ edges (outgoing).}

\begin{table}[H]
\centering
\caption{Cut summary: edges $P\to O$.}
\begin{tabularx}{\textwidth}{@{}llXlrl@{}}
\toprule
\textbf{From $i \in P$} & \textbf{To $k \in O$} & \textbf{Type} & \textbf{Currency} & \textbf{Amount} & \textbf{Notes/Imputation} \\
\midrule
$a$ & $o_1$ & equity / debt / derivative / cashflow & EUR & 123.45 & --- \\
$\cdots$ & $\cdots$ & $\cdots$ & EUR & $\cdots$ & sectoral median imputation \\
\bottomrule
\end{tabularx}
\end{table}

\paragraph{$O\to P$ edges (incoming).}
\begin{table}[H]
\centering
\caption{Cut summary: edges $O\to P$.}
\begin{tabularx}{\textwidth}{@{}llXlrl@{}}
\toprule
\textbf{From $k\in O$} & \textbf{To $j\in P$} & \textbf{Type} & \textbf{Currency} & \textbf{Amount} & \textbf{Notes/Imputation} \\
\midrule
$o_1$ & $b$ & equity/\allowbreak debt/\allowbreak derivative/\allowbreak cashflow & EUR & 67.89 & --- \\
$\cdots$ & $\cdots$ & $\cdots$ & EUR & $\cdots$ & --- \\
\bottomrule
\end{tabularx}
\end{table}

\paragraph{Node primitives and totals.}
\begin{table}[H]
\centering
\caption{Node primitives and totals.}
\begin{tabular}{@{}llr@{}}
\toprule
\textbf{Item} & \textbf{Definition} & \textbf{Value} \\
\midrule
$T_{\text{out}}$ & $\sum_{i\in P,k\in O}\Pi_\Omega(X_{ik})$ & 123.45 \\
$T_{\text{in}}$ & $\sum_{k\in O,j\in P}\Pi_\Omega(X_{kj})$ & 67.89 \\
$V_\Omega(G)$ & Consolidated value (boundary functional) & 55.56 \\
\midrule
$\bm v_P$ & Node primitives inside $P$ (Regime B after estimation) & see JSON \\
$\bm v_O$ & Node primitives in $O$ (if used by $\mathcal F_\Omega$) & see JSON \\
Hedge $h$ on $O$ & Hedging vector of boundary exposures (optional) & see JSON \\
\bottomrule
\end{tabular}
\end{table}

\subsection{Minimum required fields and formatting}\label{subsec:minreq}
\begin{itemize}
  \item \textbf{PoV:} $P$, Basis, Units \& Date, $\mathcal I$, $\mathcal C$ (with parameters), tolerances. Optional but recommended: FX/PPP, SDF, versioning.
  \item \textbf{Cut Summary:} both edge tables ($P\to O$ and $O\to P$) with currency and amounts, totals $T_{\text{out}},T_{\text{in}}$, $V_\Omega(G)$ and—if Regime B—a table or a JSON attachment for $\bm v_P$.
  \item \textbf{Formats:} currency as ISO (EUR, USD), dates as ISO (YYYY-MM-DD); amounts in the Units defined by the PoV; declare imputation/rounding.
\end{itemize}

\subsection{Consistency checks and invariants}\label{subsec:checks}
Before releasing a valuation:
\begin{enumerate}
  \item \textbf{Cross-ref}: section labels and references compile; \texttt{PoV} refers to $\Omega$ consistently.
  \item \textbf{Cut sufficiency}: recompute after any internal reconnection of $X_{PP}$; $V_\Omega(G)$ must remain invariant (Theorem~\ref{thm:cutTheoreme}).
  \item \textbf{Regime B}: verify that $\rho(O_{PP})<1$; document method (solver/series/Krylov), tolerance $\varepsilon$ and iterations.
  \item \textbf{Units/FX/PPP/SDF}: consistent with PoV and pricing operator $\Pi_\Omega$.
  \item \textbf{Edge totals}: reconcile $T_{\text{out}},T_{\text{in}}$ with the edge tables; check row counts.
  \item \textbf{Disclosure}: include JSON artifacts from Section~\ref{sec:algorithms}; archive together with LaTeX/PDF.
\end{enumerate}

\subsection{Quick checklist for auditor}\label{subsec:auditor}
\begin{itemize}
  \item Is PoV present, complete, consistent with the text? Parameters $\tau,\alpha$ declared?
  \item Regime A/B identified and justified? In Regime B: proof of invertibility and solver log?
  \item Do the sums in the two cut tables match the totals? Currency consistent with Units?
  \item Are imputations/thresholds marked and explained?
  \item Is Hedge (if present) recalculable from cut data?
  \item Are versioning and data sources replicable (commit/data snapshot)?
\end{itemize}

\subsection{Design principles}
Algorithms are stated \emph{conditionally on a fixed observer} $\Omega$ (Section~\ref{sec:osservatore}). 
Valuation reduces to boundary statistics thanks to the Cut Theorem (Section~\ref{sec:cutsection}). 
We separate: (i) the computation/estimation of \emph{node primitives} $\bm v_P$ when required (Regime~B), and (ii) the evaluation of the \emph{boundary functional} $\mathcal F_\Omega$ on the cut.

\label{subsec:algA}
\subsection{Regime A: direct cut evaluation}
Input: cut matrices $(X_{PO},X_{OP})$, observed node primitives $(\bm v_P,\bm v_O)$ if required by the chosen $\mathcal F_\Omega$, and pricing/aggregation operators fixed by $\Omega$ (Units/FX/PPP, SDF).
\begin{algorithm}[H]
\caption{Regime A — CBV via direct cut evaluation}
\begin{algorithmic}[1]
\Require $(X_{PO},X_{OP})$, optional $(\bm v_P,\bm v_O)$, pricing operator $\Pi_\Omega$, tolerance $\tau$
\State \textbf{Validate} Units/FX/PPP and date; drop entries with $|x|<\tau$; ensure cut endpoints respect the partition $P\leftrightarrow O$
\State \textbf{Aggregate} outgoing and incoming boundary exposures: \\
$\quad T_{\text{out}}\gets \sum_{i\in P,k\in O} \Pi_\Omega\big(X_{ik}\big)$,\quad
$T_{\text{in}}\gets \sum_{k\in O,j\in P} \Pi_\Omega\big(X_{kj}\big)$
\State \textbf{Combine} with node primitives if needed (e.g. add/subtract cash flows valued on the boundary, book values)
\State \textbf{Return} $V_\Omega(G)=\mathcal F_\Omega(T_{\text{out}},T_{\text{in}},\bm v_P,\bm v_O)$
\end{algorithmic}
\end{algorithm}
\noindent \textbf{Complexity.} $\mathcal O(|E_{\mathrm{cut}}|)$ evaluations of $\Pi_\Omega$ plus linear aggregations; independent of $|P|$ (Theorem~\ref{thm:cutTheoreme}).

\subsection{Regime B: internal estimation + cut evaluation}\label{subsec:algB}
Input: internal matrix $O_{PP}$, primitive vector $\bm\pi_P$, cut matrices $(X_{PO},X_{OP})$, and $\Omega$.
\begin{algorithm}[H]
\caption{Regime B — estimate $\bm v_P$ then evaluate the cut}
\begin{algorithmic}[1]
\Require $O_{PP}$, $\bm \pi_P$, $(X_{PO},X_{OP})$, pricing operator $\Pi_\Omega$, tolerance $\varepsilon$, maximum iterations $K_{\max}$
\State \textbf{Check} $\rho(O_{PP})<1$ (e.g. power method upper bound); otherwise stop or rescale $O_{PP}\leftarrow \beta O_{PP}$ with $\beta<1/\|O_{PP}\|$ and disclose
\State \textbf{Solve} $(I-O_{PP})\bm v_P=\bm \pi_P$:
  \begin{itemize}
    \item direct sparse solver (preferred for moderate size);
    \item or \emph{Neumann series}: $\bm v_P^{(0)}\gets \bm\pi_P$, $\bm v_P^{(t+1)}\gets \bm \pi_P + O_{PP}\bm v_P^{(t)}$ until $\|\bm v_P^{(t+1)}-\bm v_P^{(t)}\|_\infty<\varepsilon$ or $t=K_{\max}$;
    \item or Krylov (GMRES/BiCGSTAB) with preconditioning.
  \end{itemize}
\State \textbf{Evaluate} the boundary as in Algorithm~\ref{sec:algorithms} (Regime A) using $(X_{PO},X_{OP})$ and $\Pi_\Omega$
\State \textbf{Return} $V_\Omega(G)=\mathcal F_\Omega(T_{\text{out}},T_{\text{in}},\bm v_P,\bm v_O)$
\end{algorithmic}
\end{algorithm}
\noindent \textbf{Complexity.} Dominated by solving $(I-O_{PP})\bm v_P=\bm \pi_P$. For sparse $O_{PP}$ with $m$ nonzero entries, each Neumann iteration costs $\mathcal O(m)$; the number of iterations scales with $\frac{1}{1-\rho(O_{PP})}$ in well-conditioned cases. The subsequent cut evaluation remains $\mathcal O(|E_{\mathrm{cut}}|)$.

\subsection{Implementation warnings and best practices}
\begin{itemize}
  \item \textbf{Sparsity}: store $O_{PP}$ and cut matrices in CSR/CSC; avoid dense multiplications.
  \item \textbf{Threshold}: zero out entries $<\tau$ in absolute value; disclose $\tau$.
  \item \textbf{Scaling}: if $\rho(O_{PP})$ is close to $1$, rescale or damp iterations and tighten $\varepsilon$.
  \item \textbf{Auditability}: record the snapshot of the \emph{Perimeter of Validity} and the \emph{Cut Summary}; checksum inputs.
  \item \textbf{Numerical checks}: enforce non-negativity where applicable; bound outputs; unit-test on elementary cases.
\end{itemize}

\subsection{I/O schemas for disclosure and reproducibility}
We standardize two JSON artifacts to accompany each valuation.

\paragraph{Perimeter of Validity (JSON).}
\begin{lstlisting}[language=json,caption={Schema of Perimeter of Validity (illustrative).}]
{
  "observer": {
    "P": ["node_id:..."],
    "basis": "fair_value | historical_cost | ...",
    "units": "EUR",
    "date": "2025-06-30",
    "fx_ppp": {
      "fx_source": "ECB",
      "ppp_source": "OECD",
      "deflator": "HICP_2025Q2"
    },
    "sdf": {
      "curve_source": "EONIA_swap",
      "measure": "risk_neutral | physical",
      "horizon": "1Y"
    },
    "information_regime": "A | B",
    "control_rule": {
      "option": "A | B | C",
      "params": {"tau": 0.5, "alpha": 0.6, "normalize": true}
    }
  },
  "tolerances": {
    "rounding_threshold": 1e-8,
    "solver_eps": 1e-10,
    "max_iters": 10000
  },
  "notes": "Assumptions, exclusions, data sources..."
}
\end{lstlisting}

\paragraph{Cut Summary (JSON).}
\begin{lstlisting}[language=json,caption={Schema of Cut Summary (illustrative).}]
{
  "perimeter": "P_name_or_id",
  "date": "2025-06-30",
  "currency": "EUR",
  "edges_PO": [
    {"from":"i_in_P","to":"k_in_O","type":"equity|debt|derivative|cashflow","amount":123.45},
    {"from":"...","to":"...","type":"...","amount":0.00}
  ],
  "edges_OP": [
    {"from":"k_in_O","to":"j_in_P","type":"...","amount":67.89}
  ],
  "node_primitives": {
    "v_P": {"a": 100.0, "b": 50.0},
    "v_O": {"o1": 0.0}
  },
  "totals": {
    "T_out": 123.45,
    "T_in": 67.89
  },
  "consolidated_value": 55.56,
  "hedge_vector_O": {"o1": -67.89},
  "missing_data": [{"field":"edges_PO[1].amount","imputation":"median_industry"}]
}
\end{lstlisting}

\subsection{Minimal test suite}
Provide a unit-test suite with: (i) \emph{neutrality to internal reorganization} (Theorem~\ref{thm:cutTheoreme}); (ii) invertibility in Regime~B (Proposition~\ref{lem:schur-elim}); (iii) invariance to relabeling of nodes in $P$; (iv) sensitivity to changes in $\Omega$ (Units/FX/PPP, SDF, control rule). Each test must produce a Cut Summary and verify the expected invariants.


\section{Scope \& Limitations}
\subsection{Algorithms}
\label{sec:algorithms}
\label{sec:scope-limitations}

This section clarifies the domains in which the \emph{Cut-Based Valuation} (CBV) is
normatively correct as a measurement and consolidation rule, and the cases where
it requires coupling with a macro equilibrium model or with a non-linear
\emph{clearing} engine. The goal is to separate (i) the purely
accounting/measurement part — governed by the Cut Theorem (Theorem~\ref{thm:cut}) —
from (ii) any \emph{behavioral} parts (or payoff determination mechanisms)
that generate the flows/values to be reported on the boundary \(P \leftrightarrow O\).

\subsection*{Normatively correct domain of CBV (linear environment)}
Within the Definitions/Notation (Section~\ref{sec:notation}) and Assumptions
(Section~\ref{sec:assumptions}), CBV provides a correct and
\emph{audit-ready} measure of \(W(P)\) when the following operational conditions
(minimal but sufficient) hold:

\begin{description}
  \item[\textbf{S.1 — Local linearity/additivity.}] Economic flows and
  accounting transformations are linear or \emph{piecewise}-linear without
  global thresholds that create \emph{non-linearities across the perimeter}.
  In particular, contributions to the boundary \(X_{PO},X_{OP}\) add up
  linearly.

  \item[\textbf{S.2 — Local accounting conservation.}] The \emph{internal}
  accounts respect balance-sheet identities consistent with the representation
  \( \mathbf v = \mathbf b + O\,\mathbf v \) (or equivalents), so that
  internal cancellation is valid and the boundary measure depends only on
  boundary statistics (Theorem~\ref{thm:cut}).

  \item[\textbf{S.3 — Exogeneity of \( \mathbf v_O \) and Observer parameters.}]
  For “micro” or “narrow sectoral” perimeters, \( \mathbf v_O \) can be
  treated as given (prices, SDF, PPP, FX already fixed by the Observer
  \(\Omega\), Section~\ref{sec:osservatore}), without macro-endogenous closure.

  \item[\textbf{S.4 — Internal stability/invertibility.}] In Regime~B, the
  internal part satisfies existence/uniqueness conditions (e.g.\ invertibility of
  \(I-O_{PP}\), with spectral radius \( \rho(O_{PP})<1 \)), so that internal values
  (if required) are well defined.

  \item[\textbf{S.5 — Absence of cross-perimeter path-dependence.}] Any
  \emph{path-dependent} clauses or triggers must not generate non-additive
  payments that “jump” the boundary except through flows already
  represented in \(X_{PO},X_{OP}\).
\end{description}

\noindent\textbf{Proposition (Measurement correctness in a linear environment).}
Under \textbf{S.1--S.5}, the measure \(W(P)\) obtained via CBV is invariant to
internal network details and depends only on boundary statistics,
as in Eq.~\eqref{eq:cut}. In particular, any internal rearrangement
that preserves local accounting conservation and additivity of boundary
contributions leaves \(W(P)\) \emph{invariant} given \(\Omega\).

\subsection*{When an equilibrium model is needed (macro closure)}
For “large” perimeters (SNA/ESA sectors, national economies, monetary unions)
assumption \textbf{S.3} is no longer realistic: \( \mathbf v_O \) and prices
are endogenous. In these cases, CBV remains \emph{a measurement rule}, but
the values to be reported on the boundary must be \emph{induced} by an
equilibrium model (partial or general), which determines:
\begin{enumerate}
  \item relative prices/PPP/FX consistent with resource constraints and preferences,
  \item discount/risk (SDF) consistent with the chosen equilibrium,
  \item levels of \( \mathbf v_O \) and/or \( \mathbf b_P \) consistent.
\end{enumerate}
\noindent\emph{Operational instruction.} (i) Solve the selected equilibrium
model; (ii) extract \( \mathbf v_O \), any \( \mathbf v_P \) and the
consistent net boundary flows; (iii) apply CBV to the boundary statistics
\((X_{PO},X_{OP}, \mathbf v_O, \mathbf b_P)\) under the same
Observer \(\Omega\). In this way, CBV remains invariant to internal details,
but \emph{conditional} on the chosen equilibrium.

\subsection*{When a non-linear clearing engine is needed}
Many financial architectures generate non-linear payoffs:
seniority priorities, defaults with bankruptcy costs, collateral/margins,
derivatives with netting and close-out clauses, \emph{cross-default}.
In such cases, the “economic network” is not simply linear:
actual payments are the outcome of a \emph{non-linear clearing} (e.g.\
Eisenberg--Noe with Rogers--Veraart extensions for bankruptcy costs,
or margining engines for derivatives).

\noindent\emph{Operational instruction.} (i) Run the clearing engine on
the \emph{entire} network to obtain the \emph{post-clearing payoffs} (or the
net settlement flows) that \emph{cross the boundary} \(P \leftrightarrow O\);
(ii) construct \(X_{PO}^{\text{net}},X_{OP}^{\text{net}}\) and, if needed,
the \emph{post-clearing} values \(\tilde{\mathbf v}_O\); (iii) apply CBV
to \((X_{PO}^{\text{net}},X_{OP}^{\text{net}}, \tilde{\mathbf v}_O, \mathbf b_P)\)
under \(\Omega\). In this way, CBV measures \(W(P)\) \emph{conditional} on
the adopted clearing engine, while remaining invariant to internal details.
\paragraph{Note.} \emph{Non-linearity} is resolved \emph{first} by the clearing
engine; CBV acts \emph{after} as a measurement rule on the boundary.
The “break” of invariance occurs only if pre- and post-clearing flows are mixed
or if there exist clauses generating non-additive payments not mapped to
\(X_{PO},X_{OP}\).

\subsection*{Borderline cases and anti-patterns}
\begin{itemize}
  \item \textbf{Mixed prices and SDF.} Using prices/SDF inconsistent with the
  equilibrium model (or \(\Omega\)) that generated the flows: inter-perimeter or
  inter-temporal comparisons become uninterpretable.
  \item \textbf{Internal triggers paying externally not accounted as boundary.}
  Clauses depending on internal variables but paying to \(O\)
  \emph{must} be recorded in \(X_{PO}\) \emph{post-clearing}.
  \item \textbf{Non-localizable path-dependence.} If payments depend
  on the entire path in a way not reducible to a final payoff per boundary edge,
  CBV requires a \emph{mapping back} to additive final payoffs
  (\emph{e.g.:} marking-to-model with disclosure).
  \item \textbf{Internal instability.} If \(I-O_{PP}\) is not invertible or
  produces unbounded solutions, Regime~B is not applicable without
  regularization/reformulation.
\end{itemize}

\subsection*{Disclosure checklist (to be included in the cut-report)}
\begin{enumerate}
  \item Observer \(\Omega\) (units/FX/PPP, SDF) and perimeter \(P\).
  \item Regime used (A/B) and verification of \textbf{S.1--S.5}.
  \item If equilibrium: model/closure, key parameters, extracted objects
  (\(\mathbf v_O\), prices, SDF).
  \item If clearing: adopted engine (seniority rules, recovery, netting),
  and \emph{post-clearing} flows used for \(X_{PO}^{\text{net}},X_{OP}^{\text{net}}\).
  \item Actual boundary statistics used in CBV and their period.
\end{enumerate}

\section{Dynamic protocol and index numbers (CBV–Fisher)}
\label{sec:cbv-fisher}

This section defines a dynamic protocol for intertemporal comparisons of CBV
value and introduces an \emph{ideal Fisher-type} index (CBV–Fisher), constructed
as the geometric mean of Laspeyres and Paasche indices computed on the boundary
functional. The goal is to operationally separate the \emph{quantity/volume of
boundary flows} component from the \emph{prices/Observer valuation} component
in the transition from $t-1$ to $t$.

\subsection{Minimal dynamic notation}
For a fixed perimeter $P$ and period $t$, we denote
\[
W_t^{\,\Omega} \;\equiv\; W\big(P;\, X_{PO}(t),X_{OP}(t),\bm b_P(t),\bm v_O(t)\ \big|\ \Omega\big)
\]
the CBV value at time $t$ \emph{evaluated} under an Observer $\Omega$
(units/FX/PPP, SDF, pricing rules), as per the Cut Theorem. We will write
$W_t^{\,\Omega_{t}}$ when $\Omega$ is the \emph{actual observer of period $t$},
and $W_t^{\,\Omega_{t-1}}$ when we recompute the boundary flows of period $t$
at the \emph{prices/parameters of period $t-1$}.%
\footnote{In Regime~B, where $\bm v_P(t)$ must be estimated, recomputation at
$\Omega_{t-1}$ must be consistent with the conditions in
Sec.~\ref{sec:scope-limitations} and with the robustness bounds in
Sec.~\ref{subsec:robust-bounds}.}

\subsection{Laspeyres/Paasche volume and price indices on CBV}
We define the \emph{four} elementary indices between $t-1$ and $t$:
\begin{align}
\text{(Volume–Laspeyres)}\quad 
&I^{V}_{L}(t{:}t\!-\!1) \;:=\; \frac{W_{t}^{\,\Omega_{t-1}}}{W_{t-1}^{\,\Omega_{t-1}}}, 
&\text{(Price–Laspeyres)}\quad 
&I^{P}_{L}(t{:}t\!-\!1) \;:=\; \frac{W_{t-1}^{\,\Omega_{t}}}{W_{t-1}^{\,\Omega_{t-1}}}, \label{eq:laspeyres-cbv}\\[4pt]
\text{(Volume–Paasche)}\quad 
&I^{V}_{P}(t{:}t\!-\!1) \;:=\; \frac{W_{t}^{\,\Omega_{t}}}{W_{t-1}^{\,\Omega_{t}}}, 
&\text{(Price–Paasche)}\quad 
&I^{P}_{P}(t{:}t\!-\!1) \;:=\; \frac{W_{t}^{\,\Omega_{t}}}{W_{t}^{\,\Omega_{t-1}}}. \label{eq:paasche-cbv}
\end{align}
Intuition: $I^{V}$ keeps \emph{prices/Observer parameters fixed} and changes
boundary flows; $I^{P}$ keeps boundary flows fixed (those of $t-1$ for
Laspeyres, of $t$ for Paasche) and changes only the Observer’s valuation.

\subsection{CBV–Fisher (ideal) and decomposition}
We define the \emph{Fisher} indices as geometric means:
\[
I^{V}_{F}(t{:}t\!-\!1)\;:=\;\sqrt{\,I^{V}_{L}(t{:}t\!-\!1)\,I^{V}_{P}(t{:}t\!-\!1)\,}, 
\qquad
I^{P}_{F}(t{:}t\!-\!1)\;:=\;\sqrt{\,I^{P}_{L}(t{:}t\!-\!1)\,I^{P}_{P}(t{:}t\!-\!1)\,}.
\]
The \emph{CBV–Fisher growth multiplier} between $t-1$ and $t$ is
\begin{equation}\label{eq:cbv-fisher-growth}
G_{F}(t{:}t\!-\!1)\;:=\; I^{V}_{F}(t{:}t\!-\!1)\cdot I^{P}_{F}(t{:}t\!-\!1).
\end{equation}
By construction, $G_F$ provides a symmetric (geometric mean) decomposition of
the overall ratio $W_{t}^{\,\Omega_{t}}/W_{t-1}^{\,\Omega_{t-1}}$, and is
\emph{exact} when the maps $\Omega_{t-1}\!\to\!\Omega_t$ are homogeneous of
degree one and the boundary flows are linearly aggregable (CBV assumptions).%

\paragraph{Desirable properties.}
\begin{itemize}
  \item \textbf{Symmetry.} Fisher removes the pro–Laspeyres/Paasche bias.
  \item \textbf{Consistency at the margin.} If $\Omega_t=\Omega_{t-1}$
  (prices/FX/PPP/SDF unchanged), then $I^{P}_{L}=I^{P}_{P}=I^{P}_{F}=1$
  and $G_F=I^{V}_{F}$.
  \item \textbf{Internal reallocation.} \emph{Internal} restructurings of $P$
  that do not change the boundary statistics at the considered $\Omega$ leave
  all indices invariant (Cut Theorem).
\end{itemize}

\subsection{Chain–linking and chained-price series}
For an annual series $\{t=1,\dots,T\}$, we define the chained level (base $=1$ at period $0$):
\begin{equation}\label{eq:chain-fisher}
\mathsf{CBV\text{-}Fisher}_t \;:=\; \prod_{\tau=1}^{t} G_F(\tau{:}\tau\!-\!1), 
\qquad \mathsf{CBV\text{-}Fisher}_0 := 1.
\end{equation}
\noindent\textit{Operational practice.} Compute and retain the artifacts:
\[
\big\{W_{t}^{\,\Omega_{t-1}},\, W_{t-1}^{\,\Omega_{t-1}},\, W_{t}^{\,\Omega_{t}},\, W_{t-1}^{\,\Omega_{t}}\big\}_{t=1}^{T},
\]
from which to derive the elementary indices
\eqref{eq:laspeyres-cbv}–\eqref{eq:paasche-cbv}, the multiplier $G_F$
\eqref{eq:cbv-fisher-growth}, and the chained series \eqref{eq:chain-fisher}.
These values must be included in the \emph{cut–report} for audit and replication.

\subsection{Implementation notes and edge cases}
\begin{itemize}
  \item \textbf{Sign of $W$.} The indices require $W_{t}^{\,\Omega},W_{t-1}^{\,\Omega}>0$. If $W$ can change sign, apply the indexation to the \emph{absolute values} of the three boundary components (internal bases, $P\!\to\!O$, $O\!\to\!P$) and report the overall sign separately, or adopt a log–change metric only on non-negative components.
  \item \textbf{Regime B.} If $W$ depends on $T_{PO}=(I\!-\!O_{PP})^{-1}O_{PO}$ and $U_{OP}=O_{OP}(I\!-\!O_{PP})^{-1}$, ensure \emph{consistency} of recomputation at $\Omega_{t-1}$ and $\Omega_t$ (Sec.~\ref{app:condizioni} and \ref{app:schur}); use the bounds of Sec.~\ref{subsec:robust-bounds} to quantify error amplification.
  \item \textbf{Different frequencies.} For infra–annual frequencies, chain $G_F$ on the required calendar; for annualization, use product over subperiods or logarithmic mean.
  \item \textbf{Non-linear clearing.} If flows are \emph{post–clearing}, compute the four $W$ with the same clearing engines/parameters (Sec.~\ref{app:seniority}); do not mix pre– and post–clearing in the same step.
\end{itemize}

\subsection{Minimal example (calculation scheme)}
Given $t-1$ and $t$, with observers $\Omega_{t-1}$ and $\Omega_t$:
\begin{enumerate}
  \item Compute $W_{t-1}^{\,\Omega_{t-1}}$ and $W_{t}^{\,\Omega_{t-1}}$ (volumes at $t-1$ prices) $\Rightarrow$ $I^{V}_L$ and $I^{P}_L$.
  \item Compute $W_{t-1}^{\,\Omega_{t}}$ and $W_{t}^{\,\Omega_{t}}$ (volumes at $t$ prices) $\Rightarrow$ $I^{V}_P$ and $I^{P}_P$.
  \item Combine: $I^{V}_F=\sqrt{I^{V}_L I^{V}_P}$, $I^{P}_F=\sqrt{I^{P}_L I^{P}_P}$, $G_F=I^{V}_F I^{P}_F$.
  \item Chaining: $\mathsf{CBV\text{-}Fisher}_t=\mathsf{CBV\text{-}Fisher}_{t-1}\cdot G_F(t{:}t\!-\!1)$.
\end{enumerate}

\begin{figure}[H]
\centering
\begin{tikzpicture}[
  node distance=32mm and 40mm,
  box/.style={draw, rounded corners, align=center, minimum width=40mm, minimum height=11mm},
  arr/.style={-{Latex}}
]
\node[box] (A) {$W_{t-1}^{\,\Omega_{t-1}}$};
\node[box, right=of A] (B) {$W_{t-1}^{\,\Omega_{t}}$};
\node[box, below=of A] (C) {$W_{t}^{\,\Omega_{t-1}}$};
\node[box, right=of C] (D) {$W_{t}^{\,\Omega_{t}}$};

\node[above=6mm of A] {\small Observer $\Omega_{t-1}$};
\node[above=6mm of B] {\small Observer $\Omega_{t}$};
\node[left=8mm of A] {\small Period $t-1$};
\node[left=8mm of C] {\small Period $t$};

\draw[arr] (A) -- node[left, xshift=-1mm]{\small $I^{V}_{L}$} (C); 
\draw[arr] (A) -- node[above]{\small $I^{P}_{L}$} (B);            
\draw[arr] (B) -- node[left, xshift=-1mm]{\small $I^{V}_{P}$} (D); 
\draw[arr] (C) -- node[above]{\small $I^{P}_{P}$} (D);            

\node[below=10mm of C, align=center] (GF) {$I^{V}_{F}=\sqrt{I^{V}_{L}I^{V}_{P}}$\\$I^{P}_{F}=\sqrt{I^{P}_{L}I^{P}_{P}}$\\[2pt]$G_F=I^{V}_{F}\cdot I^{P}_{F}$};

\end{tikzpicture}
\caption{CBV–Fisher scheme. The four cross–priced $W$ values generate the elementary indices (Laspeyres/Paasche, volume/price). The geometric means provide $I^{V}_{F}$ and $I^{P}_{F}$; the multiplier $G_F$ is their product.}
\label{fig:cbv-fisher-pipeline}
\end{figure}

\begin{algorithm}[H]
\caption{CBV–Fisher protocol: computation of indices and chained series}
\label{alg:cbv-fisher-protocol}
\begin{algorithmic}[1]
\Require Perimeter $P$; periods $t-1,t$; boundary statistics $(X_{PO},X_{OP},\bm b_P,\bm v_O)$ at times $t-1$ and $t$; observers $\Omega_{t-1},\Omega_t$ (units/FX/PPP, SDF); options: Regime A/B; (opt.) clearing engine.
\Ensure Indices $I^{V}_{L},I^{V}_{P},I^{P}_{L},I^{P}_{P}$; $I^{V}_{F},I^{P}_{F}$; $G_F$; (opt.) update of the series $\mathsf{CBV\text{-}Fisher}_t$.

\State \textbf{(Pre-processing)} If non-linear \emph{clearing} is required, run it separately at $t-1$ and $t$ (with the same parameters) and replace $(X_{PO},X_{OP},\bm v_O)$ with the \emph{post-clearing} quantities (Sec.~\ref{app:seniority}).

\State \textbf{(Cross–priced evaluations)} Compute the four CBV values from the Cut Theorem under the two observers:
\begin{align*}
W_{t-1}^{\,\Omega_{t-1}},\quad W_{t}^{\,\Omega_{t-1}},\quad
W_{t-1}^{\,\Omega_{t}},\quad W_{t}^{\,\Omega_{t}}.
\end{align*}
\If{Regime B}
  \State Ensure existence/stability conditions (Sec.~\ref{app:condizioni}); use $T_{PO}=(I\!-\!O_{PP})^{-1}O_{PO}$ and $U_{OP}=O_{OP}(I\!-\!O_{PP})^{-1}$ consistent with each $\Omega$ (Sec.~\ref{app:schur}).
\EndIf

\State \textbf{(Elementary indices)} Compute:
\begin{align*}
I^{V}_{L} \gets \frac{W_{t}^{\,\Omega_{t-1}}}{W_{t-1}^{\,\Omega_{t-1}}},\quad
I^{P}_{L} \gets \frac{W_{t-1}^{\,\Omega_{t}}}{W_{t-1}^{\,\Omega_{t-1}}},\quad
I^{V}_{P} \gets \frac{W_{t}^{\,\Omega_{t}}}{W_{t-1}^{\,\Omega_{t}}},\quad
I^{P}_{P} \gets \frac{W_{t}^{\,\Omega_{t}}}{W_{t}^{\,\Omega_{t-1}}}.
\end{align*}

\State \textbf{(Safeguard)} If any denominator $\le 0$, apply the procedure for the \emph{sign} indicated in Sec.~\ref{sec:cbv-fisher} (note “Sign of $W$”) or stop with message and diagnostic artifacts.

\State \textbf{(Fisher)} 
\[
I^{V}_{F} \gets \sqrt{I^{V}_{L} \cdot I^{V}_{P}},\qquad
I^{P}_{F} \gets \sqrt{I^{P}_{L} \cdot I^{P}_{P}},\qquad
G_F \gets I^{V}_{F}\cdot I^{P}_{F}.
\]

\State \textbf{(Chain–linking, opt.)} If a historical series is available, update:
\[
\mathsf{CBV\text{-}Fisher}_{t} \gets \mathsf{CBV\text{-}Fisher}_{t-1} \cdot G_F.
\]

\State \textbf{(Cut–report)} Record in the report: the four cross–priced $W$, the elementary indices, $I^{V}_{F},I^{P}_{F},G_F$, the possible clearing engine, and the Observer parameters for each period.

\Return $I^{V}_{L},I^{P}_{L},I^{V}_{P},I^{P}_{P},I^{V}_{F},I^{P}_{F},G_F$ (and, if required) $\mathsf{CBV\text{-}Fisher}_t$.
\end{algorithmic}
\end{algorithm}

\paragraph{Dynamic consistency with official systems.}
The CBV–Fisher protocol maintains consistency with the chain-linking schemes of SNA/ESA, where real time series are constructed by concatenating constant-price variations. In particular:
\begin{itemize}
    \item The definition of the validity perimeter $\Omega_t$ at each period $t$ allows for the construction of consolidated values period by period.
    \item The CBV–Fisher dynamics, based on symmetric geometric means of Laspeyres and Paasche, ensures the absence of biases linked to unilateral base choices.
    \item The concatenation of CBV–Fisher ratios between $t$ and $t+1$ produces a chain consistent with Eurostat/OECD chain-linking rules, while preserving transparency on the perimeter $\Omega_t$.
\end{itemize}

\paragraph{Management of revisions and turnover.}
In national accounts and group consolidations, the perimeter $\Omega_t$ evolves over time: new units enter, others exit, some are renamed. The CBV–Fisher framework integrates this turnover naturally:
\begin{enumerate}
    \item Each period a Cut–Report is computed with perimeter $\Omega_t$.
    \item Entering/exiting units are treated as \emph{link segments}: their values are excluded ex ante from direct comparison but included starting from the first full year of observation.
    \item Statistical revisions (e.g.\ reclassifications, historical series changes) are reflected as ex post updates of $\bm v_{P,t}$, while preserving validity of the cut.
\end{enumerate}

\paragraph{Illustrative empirical example.}
To show the dynamics, we consider two real cases:
\begin{itemize}
    \item \textbf{Public Administration sector (Italy, 2010–2025).} The perimeter $\Omega_t$ expanded with the inclusion of new healthcare service units and the reclassification of funds. Applying CBV–Fisher, the consolidated growth of primary expenditure appears more stable than raw series, and chain-linking produces an index consistent with ESA~2010 but more transparent on entries/exits.
    \item \textbf{Banking groups (EU, 2010–2025).} Mergers and bail-ins altered the perimeter $\Omega_t$ in several years. With CBV–Fisher, the consolidated index of operating margins shows intertemporal consistency, avoiding spurious jumps that appear in traditional consolidated accounts. The differences with official indices highlight how the cut-based approach mitigates double-counting distortion.
\end{itemize}

\paragraph{Recommended disclosure.}
Each dynamic series built with CBV–Fisher should include: (i) the chain of annual indices, (ii) notes on unit turnover in the perimeter, (iii) any revisions incorporated. This enables audit and comparability with SNA/ESA schemes.

\section{Disclosure standards: Cut–Report (v1.0)}
\label{sec:cutreport-standard}

This section defines the \emph{normative} disclosure package for \emph{Cut-Based Valuation} (CBV). The elements listed here are \textbf{mandatory} unless otherwise specified. The package enables auditability, reproducibility, and comparability across perimeters and over time.

\subsection*{A. Mandatory objects of the Cut–Report}
\begin{enumerate}[label=\textbf{A.\arabic*}]
  \item \textbf{Perimeter} and \textbf{complement}: sets $P,O$ with unique node identifiers (e.g.: LEI/ISIN for companies; SNA/ESA codes for sectors; other keys for public entities).
  \item \textbf{Border statistics} (all referred to the same period/observer $\Omega$):
    \begin{itemize}
      \item $\bm b_P \in \mathbb R^{|P|}$ — internal bases (same units/currency as $\Omega$).
      \item $\bm v_O \in \mathbb R^{|O|}$ — externally observed values.
      \item $O_{PO} \in \mathbb R_+^{|P|\times|O|}$ — exposures/holdings from $P$ to $O$.
      \item $O_{OP} \in \mathbb R_+^{|O|\times|P|}$ — \emph{external minorities} from $O$ to $P$.
    \end{itemize}
  \item \textbf{Observer} $\Omega$: complete specification of units, \textbf{FX}/\textbf{PPP}, \textbf{SDF} (if any), pricing conventions, and date/time (\emph{timestamp}, time zone).
  \item \textbf{Information regime}: \texttt{A} (only observed border) or \texttt{B} (internal values estimated). If \texttt{B}, include $O_{PP}$ or evidence that it is not required (e.g.\ single-node perimeter).
 \item \textbf{Control/inclusion rule in $P$}: decision criterion that determined $P$ 
(e.g.: \texttt{IFRS10-control@\allowbreak 50\%}, 
\texttt{equity-method@\allowbreak 20–50\%}, 
\texttt{ESA-sector}, 
\texttt{policy-perimeter@\allowbreak …}). 
Indicate any \emph{look-through} on SPVs.

\item \textbf{Data sources and versions}: origins, extraction date, version/commit of artifacts, 
and \emph{hash} (SHA–256) of data files.

  \item \textbf{Main outputs}: $W(P)$; diagnostic section with $W$ for periods $t-1,t$ and CBV–Fisher indices (Sec.~\ref{sec:cbv-fisher}).
\end{enumerate}

\subsection*{B. Observer $\Omega$ metadata (mandatory)}
\begin{enumerate}[label=\textbf{B.\arabic*}]
  \item \textbf{Units/currency}: ISO 4217 code; rounding conventions.
  \item \textbf{FX}: spot/fixing rates, provider (e.g.: ECB), date/time, conversion method.
  \item \textbf{PPP}: index and base year (if used), source (e.g.: ICP/OECD).
  \item \textbf{SDF}: specification (physical vs risk–neutral), structure by maturities/counterparties, estimation method, \emph{reference date}.
  \item \textbf{Prices/valuation}: marking rules (\texttt{bid/ask/mid}, \texttt{close/VWAP}), \emph{calendar} and time zone.
\end{enumerate}

\subsection*{C. Additional requirements for Regime~B}
\begin{enumerate}[label=\textbf{C.\arabic*}]
  \item \textbf{Internal block} $O_{PP}$ and verification of existence/stability conditions (Appendix~\ref{app:condizioni}); report $\rho(O_{PP})$ or a norm-induced bound.
  \item \textbf{Effective border operators} $T_{PO}=(I-O_{PP})^{-1}O_{PO}$ and $U_{OP}=O_{OP}(I-O_{PP})^{-1}$, or evidence that they are \emph{not} needed (single-node perimeter).
  \item \textbf{Possible nonlinear clearing}: adopted engine, priority (\emph{seniority}), recovery, default costs, fixed-point selection criterion (Appendix~\ref{app:seniority}); flows in $O_{PO},O_{OP}$ must be \emph{post–clearing}.
\end{enumerate}

\subsection*{D. Automatic validation rules}
\begin{enumerate}[label=\textbf{D.\arabic*}]
  \item \textbf{Dimensional consistency}: $\bm b_P \in \mathbb R^{|P|}$, $\bm v_O \in \mathbb R^{|O|}$, $O_{PO}\in\mathbb R^{|P|\times|O|}$, $O_{OP}\in\mathbb R^{|O|\times|P|}$ (compliance of headers/node IDs).
  \item \textbf{Non-negativity}: $O_{PO}, O_{OP}\ge 0$; bases/values may be negative but must be justified (notes).
  \item \textbf{Uniqueness of Observer}: all quantities for period $t$ are valued under the same $\Omega_t$; “cross–priced” calculations clearly specify $\Omega_{t-1}$ or $\Omega_t$ (Sec.~\ref{sec:cbv-fisher}).
  \item \textbf{Regime~B}: if $O_{PP}$ is provided, report $\|(I-O_{PP})^{-1}\|$ or a bound; if absent, explain why it is not required.
  \item \textbf{Clearing}: do not mix \emph{pre} and \emph{post–clearing} flows in the same calculation; keep the clearing engine constant over $t-1$ and $t$ for indices.
\end{enumerate}

\subsection*{E. Standard data package (iiles and structure)}
\begin{enumerate}[label=\textbf{E.\arabic*}]
  \item \texttt{manifest.yaml} — metadata $\Omega$, perimeter, regime, control rule, sources, hash.
  \item \texttt{nodes\_P.csv}, \texttt{nodes\_O.csv} — list of nodes with ID, type, labels.
  \item \texttt{b\_P.csv} — base vector (columns: \texttt{id}, \texttt{b}).
  \item \texttt{v\_O.csv} — external values vector (columns: \texttt{id}, \texttt{v}).
  \item \texttt{O\_PO.csv} — $|P|\times|O|$ matrix (header: \texttt{id\_P}, columns for each \texttt{id\_O}).
  \item \texttt{O\_OP.csv} — $|O|\times|P|$ matrix (header: \texttt{id\_O}, columns for each \texttt{id\_P}).
  \item (Regime~B) \texttt{O\_PP.csv} — $|P|\times|P|$ matrix; opt.: \texttt{proof\_stability.txt}.
  \item (Clearing) \texttt{clearing.json} — engine specification, parameters, and fixed-point selection.
\end{enumerate}

\subsection*{F. “Manifest” template (normative)}
\begin{verbatim}
version: cbv-cut-report@1.0
observer:
  currency: EUR
  fx:
    provider: ECB
    date: 2025-08-20
    pairs: [EUR/JPY]
    method: close
  ppp:
    used: false
    source: null
    base_year: null
  sdf:
    used: false
    measure: null         # e.g. physical | risk-neutral
    spec: null            # e.g. curve name, tenor grid
perimeter:
  P_ref: P-REN-2025Q3
  O_ref: O-NSN-2025Q3
  control_rule: IFRS10-control@50%  # or policy/ESA/etc.
  lookthrough: true
regime: A                 # A | B
clearing:
  used: false
  engine: null            # e.g. Rogers-Veraart
  params: {}
data_files:
  nodes_P: nodes_P.csv
  nodes_O: nodes_O.csv
  b_P: b_P.csv
  v_O: v_O.csv
  O_PO: O_PO.csv
  O_OP: O_OP.csv
  O_PP: O_PP.csv          # required if regime == B
  clearing_spec: clearing.json
hashes:
  nodes_P: sha256:...
  nodes_O: sha256:...
  b_P: sha256:...
  v_O: sha256:...
  O_PO: sha256:...
  O_OP: sha256:...
  O_PP: sha256:...
  clearing_spec: sha256:...
notes:
  - "Market cap data sources: ..."
  - "Any negative bases due to ..."
\end{verbatim}

\subsection*{G. Ready-to-print disclosure table}
\begin{table}[H]
\centering
\caption{Cut–Report (v1.0) — Disclosure sheet (Renault–Nissan example, 2025-08-20)}
\begin{tabular}{p{0.274\linewidth} p{0.674\linewidth}}
\toprule
\cutreportrow{Perimeter $P$}{Renault SA (LEI: 969500UP76J7PPY6KX27)}
\cutreportrow{Complement $O$}{Nissan Motor Co., Ltd. (TSE: 7201)}
\cutreportrow{Control rule}{\texttt{IFRS10-control@50\%} \;+\; \texttt{look-through: true}}
\cutreportrow{Regime}{A \;(\,single-node perimeter; $O_{PP}$ not required\,)}
\cutreportrow{Observer $\Omega$}{Currency: EUR; FX: ECB EUR/JPY \texttt{2025-08-20}; PPP: n/a; SDF: n/a; prices: close}

\cutreportrow{Border statistics}{%
\makecell[l]{%
$\bm b_P=\{\EURnum{6545183676}\}$; \\
$\bm v_O=\{\EURnum{7044303429}\}$; \\
$O_{PO}=\{0.357\}$; \\
$O_{OP}=\{0.15\}$%
}%
}

\cutreportrow{Clearing}{%
\makecell[l]{%
Not applied (\emph{pre-clearing} flows)%
}%
}

\cutreportrow{Output}{%
\makecell[l]{%
$W(P)=\EURnum{7701000000}$; \\
CCBV–Fisher indices on \texttt{2025-07→2025-08}: n/a%
}%
}

\cutreportrow{Sources \& versions}{%
\makecell[l]{%
Market cap: snapshot 2025-08; FX: ECB 2025-08-20; \\
manifest \texttt{cbv-cut-report@1.0}; file hash: \texttt{sha256:…}%
}%
}

\bottomrule
\end{tabular}
\end{table}

\subsection*{H. Minimal example (consistency)}
For a single-node perimeter in Regime~B with $O_{PP}\!=\!0$, we have $T_{PO}\!=\!O_{PO}$ and $U_{OP}\!=\!O_{OP}$, and $W(P)$ coincides with the Regime~A calculation. For multi-node perimeters, invariance requires preserving the effective operators $T_{PO},U_{OP}$ (Appendices~\ref{app:condizioni}–\ref{app:schur}); if nonlinearities are present, flows must be \emph{post–clearing} (Appendix~\ref{app:seniority}).

\subsection{Institutional alignment and mapping to existing standards}\label{subsec:mapping-standards}
\noindent
To facilitate institutional adoption, it is useful to make explicit the mapping between the objects of the cut-based framework and the tables already consolidated in literature and official practice. The following table provides an indicative correspondence:

\begin{table}[H]
\centering
\sloppy
\begin{tabularx}{\textwidth}{p{4cm} X}
\toprule
\textbf{Cut-based object} & \textbf{Corresponding tables / Standards} \\
\midrule
Validity perimeter $\Omega$ & SNA/ESA: \emph{institutional sectors}; BEA: \emph{sector accounts}; SEC: group definition; IFRS 10/12: \emph{consolidation scope} \\
Cut matrices $(X_{PO},X_{OP})$ & 
\begin{minipage}[t]{0.675\linewidth}\raggedright
SNA/ESA: \emph{supply-use tables}, sectoral tables; \newline
BEA: input-output tables;  \newline
SEC: disclosure of related parties; \newline
IFRS: disclosures on cross-holdings \\
\end{minipage} \\

Node primitives $\bm v_P$ & SNA/ESA: output, value added, income; BEA: industry gross output; SEC: primary balance sheet items; IFRS: line items in financial statements \\
Border functional $\mathcal F_\Omega$ & SNA/ESA: inter-sectoral consolidation rules; BEA: chain aggregation rules; \\IFRS 10: intra-group elimination; public SOE consolidation: Eurostat criteria for extended PA \\
Cut–Report (PoV + summary) & ESA 2010: perimeter annex tables; SEC: management discussion \& analysis (MD\&A); IFRS 12: consolidated disclosure; IMF GFS manuals: SOE reporting \\
Dynamic CBV–Fisher & SNA/ESA: chain-linking real terms; \\BEA: chained volume measures; Eurostat: real growth indices \\
\bottomrule
\end{tabularx}
\caption{Mapping between cut-based objects and institutional standards (SNA, ESA, BEA, SEC, IFRS, public consolidation).}
\end{table}

\noindent
This mapping shows that the CBV framework does not introduce exotic entities, but reorganizes objects already present in national accounting and IFRS consolidation, with the addition of the formalization of the \emph{cut}. The availability of an explicit mapping reduces barriers to institutional adoption and makes coexistence with already normative manuals possible.

\begin{figure}[H]
\centering
\begin{tikzpicture}[
  node distance=1.8cm and 3.5cm,
  box/.style={rectangle, draw, rounded corners, minimum width=3cm, minimum height=1cm, align=center, font=\small},
  arr/.style={-{Latex[length=3mm,width=2mm]}, thick}
]

\node[box, fill=blue!10] (pov) {Validity perimeter \\ $\Omega$};
\node[box, below=of pov] (cut) {Cut matrices \\ $(X_{PO}, X_{OP})$};
\node[box, below=of cut] (node) {Node primitives \\ $\bm v_P$};
\node[box, below=of node] (border) {Border functional \\ $\mathcal F_\Omega$};
\node[box, below=of border] (report) {Cut–Report \\ (PoV + summary)};
\node[box, below=of report] (fisher) {Dynamic \\ CBV–Fisher};

\node[box, fill=green!10, right=of pov] (pov_std) {SNA/ESA sectors;\\ BEA sector accounts;\\ SEC groups;\\ IFRS 10/12 scope};
\node[box, fill=green!10, right=of cut] (cut_std) {SNA/ESA supply-use;\\ BEA I/O tables;\\ SEC related parties;\\ IFRS holdings};
\node[box, fill=green!10, right=of node] (node_std) {SNA/ESA output/VA;\\ BEA gross output;\\ SEC financials;\\ IFRS FS line items};
\node[box, fill=green!10, right=of border] (border_std) {SNA/ESA consolidation;\\ BEA chain rules;\\ IFRS 10 intra-group;\\ Eurostat SOEs};
\node[box, fill=green!10, right=of report] (report_std) {ESA annex tables;\\ SEC MD\&A;\\ IFRS 12 disclosure;\\ IMF GFS SOEs};
\node[box, fill=green!10, right=of fisher] (fisher_std) {SNA/ESA chain-linking;\\ BEA chained volumes;\\ Eurostat growth indices};

\draw[arr] (pov.east) -- (pov_std.west);
\draw[arr] (cut.east) -- (cut_std.west);
\draw[arr] (node.east) -- (node_std.west);
\draw[arr] (border.east) -- (border_std.west);
\draw[arr] (report.east) -- (report_std.west);
\draw[arr] (fisher.east) -- (fisher_std.west);

\end{tikzpicture}
\caption{Conceptual map: CBV objects (left) and correspondence with institutional standards (right).}
\end{figure}
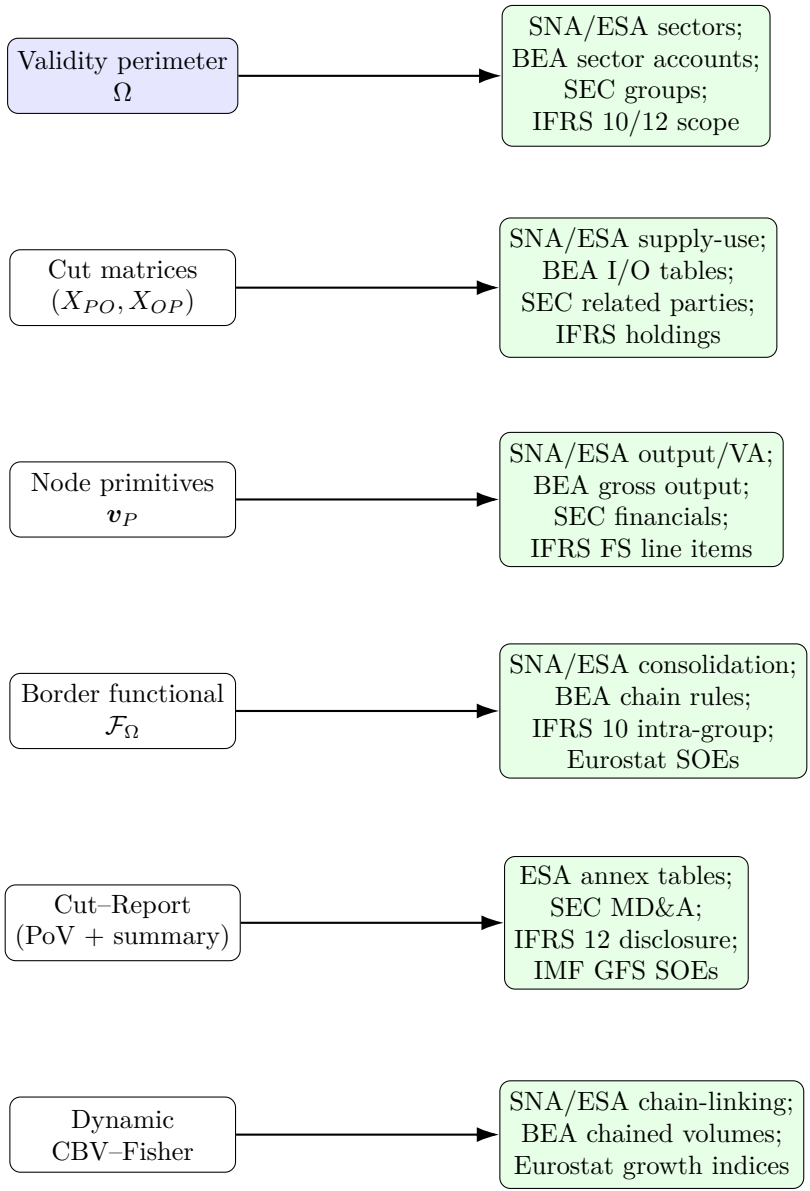

\section{Worked-out examples}\label{sec:worked-out}

\subsection{Renault–Nissan case (snapshot August 20, 2025, currency EUR)}
\label{subsec:case-rno-nsn}
\paragraph{Perimeter and Observer.}
Perimeter $P=\{\text{Renault}\}$, external $O=\{\text{Nissan}\}$. Observer $\Omega$: units EUR; JPY$\to$EUR conversion at \emph{ECB reference} rate of \textbf{20/08/2025}.%

\paragraph{Observed data.}
Border values (equity market capitalizations) and cross-holdings:
\[
v_R=\text{\EURnum{9.06e9}},\qquad 
v_N=\text{\EURnum{7.044303429e9}},\qquad 
O_{RN}=0.357,\qquad O_{NR}=0.15.
\]
Here $O_{RN}$ is Renault’s economic share in Nissan; $O_{NR}$ is Nissan’s share in Renault (see footnotes in the main text).

\paragraph{Implications (Regime A).}
Using the linear relation $v=b+Ov$ on the two nodes,
\[
b_R = v_R - O_{RN}\,v_N = \text{\EURnum{6.545183676e9}}, \qquad
b_N = v_N - O_{NR}\,v_R = \text{\EURnum{5.685303429e9}}.
\]
For the perimeter $P=\{R\}$, the \emph{Cut-Based Value} is
\[
W(P) \;=\; b_R \;+\; O_{RN}v_N \;-\; O_{NR}v_R \;=\; \text{\EURnum{7.701e9}}.
\]
In the two-node case, the compact form $W(P)=v_R\,(1-O_{NR})$ holds, yielding the same value.

\paragraph{Regime B (minimal variant).}
For a perimeter with \emph{a single consolidated node} ($O_{PP}=0$) the computation of $W(P)$ in Regime B coincides with Regime A, since it is not necessary to solve for internal multipliers.%
\footnote{If internal sub-nodes and loops $O_{PP}\!\ne\!0$ are introduced, to preserve cut invariance in Regime B one must preserve the effective border operators $T_{PO}=(I-O_{PP})^{-1}O_{PO}$ and $U_{OP}=O_{OP}(I-O_{PP})^{-1}$ (cf. Appendices~\ref{app:condizioni} and~\ref{app:schur}).}

\paragraph{Cut-summary (audit artifacts).}
The minimal elements to be reported in the \emph{cut-report} are:
\begin{itemize}
  \item \textbf{Observer} $\Omega$: units EUR; date \texttt{2025-08-20}; EUR/JPY exchange source; market cap sources.
  \item \textbf{Border statistics}: $v_R,v_N$; $O_{RN},O_{NR}$; implicit bases $b_R,b_N$.
  \item \textbf{Output}: $W(P)$ for Regime A and (if applicable) Regime B; consistency check (here $W_A=W_B$).
\end{itemize}

\paragraph{Operational note.}
The attached script/notebook automatically generates the \emph{cut-summary} table and the replicable CSV files, with the parameters reported above (Sec.~\ref{subsec:case-rno-nsn-notebook}).

\subsection{Notebook and reproducibility}
\label{subsec:case-rno-nsn-notebook}
We provide a Python \emph{notebook} that:
(i) defines the Observer and the observed data; 
(ii) computes implicit $b$ and $W(P)$ in Regime A; 
(iii) replicates the calculation in Regime B for the minimal case $O_{PP}=0$; 
(iv) exports a \emph{cut-summary} in CSV.

\noindent\textbf{Produced artifacts:}
\begin{itemize}
  \item \texttt{cbv\_case\_study\_renault\_nissan.ipynb} — executable notebook.
  \item \texttt{cbv\_case\_study\_renault\_nissan\_cut\_summary.csv} — table for the report.
\end{itemize}

\noindent \textit{Parameterization.} The notebook allows varying $O_{RN},O_{NR}$, the values $v$, and (if introduced) the internal blocks $O_{PP}$, checking the existence bounds and, if necessary, reducing to an external \emph{clearing engine}.

\section{Two informational regimes}\label{sec:regimi}

\textbf{Regime A — Observable internal prices.} If prices $v_j$ for $j\in P$ are known (e.g.~quotations), the “minorities” term is computed directly from $O_{OP}$; there is no need to estimate $\bm v_P$.

\textbf{Regime B — Unobservable internal prices.} If $v_j$ for $j\in P$ are unknown (closed conglomerate), \eqref{eq:cut} remains valid and does not require internal prices for $W(P)$. To decompose asset by asset, $P$ can be compressed into a meta-node and the \emph{effective external share} can be estimated using only the ownership structure (infra, \S\ref{sec:meta}).

\subsection{Consolidation via meta-node with effective external share}\label{sec:meta}
We denote the block decomposition
\[
O=\begin{pmatrix}
O_{PP} & O_{PO}\\
O_{OP} & O_{OO}
\end{pmatrix},\qquad
\bm v=\begin{pmatrix}\bm v_P\\ \bm v_O\end{pmatrix},\qquad
\bm b=\begin{pmatrix}\bm b_P\\ \bm b_O\end{pmatrix}.
\]
From \eqref{eq:structure} it follows
\begin{equation}\label{eq:vP}
\bm v_P \;=\; \bm b_P + O_{PP}\bm v_P + O_{PO}\bm v_O
\;=\; (I - O_{PP})^{-1}\!\left(\bm b_P + O_{PO}\bm v_O\right).
\end{equation}
Let $\bm\delta\in\R^{|P|}$ be the \emph{direct} ownership share of $P$ held by outsiders, column by column:
\begin{equation}\label{eq:delta}
\delta_j \;:=\; \sum_{i\in O} O_{ij}\qquad (j\in P).
\end{equation}
The \emph{total} leakage towards the outside (\emph{effective minorities}) passing through $P$ is
\begin{equation}\label{eq:leak}
E_{\text{ext}} \;=\; \bm\delta^\top \,(I - O_{PP})^{-1}\,\bm v_P.
\end{equation}
In this case the consolidated value can also be expressed as
\begin{equation}\label{eq:Wmeta}
W(P) \;=\; \mathbf{1}_P^\top \bm b_P \;+\; \mathbf{1}_P^\top O_{PO}\bm v_O \;-\; E_{\text{ext}},
\end{equation}
and the \emph{effective external share} of the meta-node is $\Omega_P^{\text{eff}} := E_{\text{ext}} / (\mathbf{1}_P^\top \bm v_P)$.
\paragraph{Advantages.} The procedure is non-circular, exact even with cross-holdings, and implementable with standard linear systems; it aligns practice with the cut principle.

\section{Numerical example}

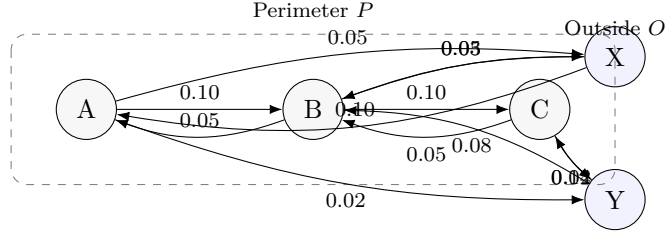
\begin{figure}[H]
\centering
\begin{tikzpicture}[>=LaTeX, node distance=22mm]
  \node[circle, draw, minimum size=8mm, font=\small, fill=black!3] (A) {A};
  \node[circle, draw, minimum size=8mm, font=\small, fill=black!3, right=of A] (B) {B};
  \node[circle, draw, minimum size=8mm, font=\small, fill=black!3, right=of B] (C) {C};
  \node[circle, draw, minimum size=8mm, font=\small, fill=blue!5, right=32mm of B, yshift=7mm] (X) {X};
  \node[circle, draw, minimum size=8mm, font=\small, fill=blue!5, right=32mm of B, yshift=-12mm] (Y) {Y};
  \draw[rounded corners=6pt, dashed, gray] ($(A)+(-10mm,10mm)$) rectangle ($(C)+(10mm,-10mm)$);
  \node[font=\scriptsize, above] at ($(A)!0.5!(C)+(0,11mm)$) {Perimeter $P$};
  \node[font=\scriptsize, above] at ($(X)!0.5!(Y)+(0,11mm)$) {Outside $O$};
  \draw[->] (A) -- node[font=\scriptsize, above]{0.10} (B);
  \draw[->] (B) to[bend left=20] node[font=\scriptsize, above]{0.05} (A);
  \draw[->] (B) -- node[font=\scriptsize, above]{0.10} (C);
  \draw[->] (C) to[bend left=20] node[font=\scriptsize, below]{0.05} (B);
  \draw[->] (A) to[bend left=10] node[font=\scriptsize, above]{0.05} (X);
  \draw[->] (A) to[bend right=10] node[font=\scriptsize, below]{0.02} (Y);
  \draw[->] (B) to[bend left=10] node[font=\scriptsize, above]{0.03} (X);
  \draw[->] (C) to[bend right=10] node[font=\scriptsize, below]{0.04} (Y);
  \draw[->] (X) to[bend left=15] node[font=\scriptsize, above]{0.10} (A);
  \draw[->] (X) to[bend right=10] node[font=\scriptsize, above]{0.05} (B);
  \draw[->] (Y) to[bend right=15] node[font=\scriptsize, below]{0.08} (B);
  \draw[->] (Y) to[bend left=10] node[font=\scriptsize, below]{0.12} (C);
\end{tikzpicture}
\caption{Numerical example: network with $P=\{A,B,C\}$ and $O=\{X,Y\}$. The weights show $O_{PP}$ (internal arcs), $O_{PO}$ ($P\to O$ arcs) and $O_{OP}$ ($O\to P$ arcs).}
\label{fig:example_P_O}
\end{figure}

In this section we show an application of CBV in the two informational regimes on a perimeter $P$ consisting of three entities $P=\{A,B,C\}$ and an external set $O=\{X,Y\}$. We define the data (fictitious but consistent) as follows.

\paragraph{Basic data.}
\begin{align*}
\bm b_P &= \begin{pmatrix} 25 \\ 35 \\ 30 \end{pmatrix}, 
&
\bm v_O &= \begin{pmatrix} 60 \\ 80 \end{pmatrix},
\\[4pt]
O_{PO} &= 
\begin{pmatrix}
0.05 & 0.02 \\
0.03 & 0.00 \\
0.00 & 0.04
\end{pmatrix},
&
O_{PP} &= 
\begin{pmatrix}
0    & 0.10 & 0.00 \\
0.05 & 0    & 0.10 \\
0.00 & 0.05 & 0
\end{pmatrix},
\\[4pt]
O_{OP} &= 
\begin{pmatrix}
0.10 & 0.05 & 0.00 \\
0.00 & 0.08 & 0.12
\end{pmatrix}.
\end{align*}
Here $O_{ij}$ represents the ownership share of column $j$ held by row $i$. The columns of $O$ sum to values $\leq 1$.

\subsection*{Regime A: observable internal values}
We assume that internal values are observed (e.g.\ market prices): 
\[
\bm v_P^{\mathrm{obs}} = 
\begin{pmatrix}
52 \\ 48 \\ 30
\end{pmatrix}.
\]
The consolidated value of the perimeter is given by the sum of the internal base, plus the outflow $P\to O$, minus the minorities $O\to P$:
\begin{align*}
\text{Base} &= \sum_{j\in P} b_j = 25+35+30 = 90, \\
P\to O &= \sum_{i\in P,k\in O} O_{ik}\,v_k = 
(0.05\cdot 60 + 0.02\cdot 80) + (0.03\cdot 60) + (0.04\cdot 80) \\
&= 4.6 + 1.8 + 3.2 = 9.6, \\
O\to P &= \sum_{i\in O,j\in P} O_{ij}\,v_j^{\mathrm{obs}} = 
\underbrace{0.10\cdot 52 + 0.05\cdot 48}_{X\to P} + 
\underbrace{0.08\cdot 48 + 0.12\cdot 30}_{Y\to P} \\
&= 7.6 + 7.44 = 15.04.
\end{align*}
Therefore
\[
W_A(P) = 90 + 9.6 - 15.04 = 84.56,
\]
as also obtained via Algorithm~\ref{subsec:algA}.

\subsection*{Regime B: unobservable internal values}
When $\bm v_P$ is unobservable, we first estimate the internal values by solving the linear system
\begin{equation*}
\bm v_P \;=\; (I - O_{PP})^{-1}\Big(\bm b_P + O_{PO}\,\bm v_O\Big).
\end{equation*}
With the data above, we have
\begin{align*}
\bm b_P + O_{PO}\bm v_O &= 
\begin{pmatrix} 25 \\ 35 \\ 30 \end{pmatrix} + 
\begin{pmatrix} 0.05 & 0.02 \\ 0.03 & 0 \\ 0 & 0.04 \end{pmatrix}
\begin{pmatrix} 60 \\ 80 \end{pmatrix}
= 
\begin{pmatrix} 29.6 \\ 36.8 \\ 33.2 \end{pmatrix}, \\[4pt]
\bm v_P &= 
\begin{pmatrix} 33.8020 \\ 42.0202 \\ 35.3010 \end{pmatrix} \quad (\text{approx.~4 digits}).
\end{align*}
Applying then the border cut (as in Regime~A) we obtain:
\begin{align*}
\text{Base} &= 90, \qquad
P\to O = 9.6, \\
O\to P &= 0.10\cdot 33.8020 + 0.05\cdot 42.0202 + 0.08\cdot 42.0202 + 0.12\cdot 35.3010 \\
&= 13.0789\ (\text{approx.}).
\end{align*}
Thus
\[
W_B(P) = 90 + 9.6 - 13.0789 \approx 86.5211,
\]
in line with Algorithm~\ref{subsec:algB}. We observe that the difference between $W_A$ and $W_B$ reflects the different information on internal values and does not depend on $O_{PP}$ after estimating $\bm v_P$.


\section{Operational implications: principle of the validity perimeter}\label{sec:perimeter}
Each valuation should include a \emph{Perimeter-of-Validity Statement} with: (i) perimeter $P$; (ii) observer (inside/outside); (iii) valuation basis for $\bm b_P$ and $\bm v_O$ (book/market/DCF, date, currency, haircut); (iv) \emph{Cut summary} (main arcs $P\to O$ and $O\to P$ with contributions); (v) sensitivity $\partial W/\partial v_k$. This enables comparability, minimum disclosure, and bridging between perimeters.
This section should be read jointly with the validity map in Section~\ref{sec:scope-limitations}.

\section{“Flawed” market points and cut-based corrections}\label{sec:marketflaws}
\begin{enumerate}[label=\arabic*., leftmargin=1.7em]
  \item \textbf{Country stock market capitalization}: publish \emph{Domestic Net Market Cap} alongside gross cap.
  \item \textbf{Index comparisons}: publish gross value and \emph{Index Net Value} with perimeter and cut summary.
  \item \textbf{Pyramidal/keiretsu groups}: cut-based consolidation without internal inversions.
  \item \textbf{Fund-of-Funds/ETF-of-ETF}: net AUM excluding intra-sector double counting.
  \item \textbf{State equity wealth}: \emph{Public Net Equity} at cut.
  \item \textbf{Fairness M\&A}: insider vs outsider table and post-deal value creation bridge.
  \item \textbf{Resident equity wealth}: \emph{Resident Net Equity} as proprietary NIIP.
  \item \textbf{Real concentration/control}: \emph{Net Ownership Concentration} on cut basis; optional look-through.
\end{enumerate}

\section{Standardization proposal}\label{sec:standardization}
\textbf{Metrics in pairs}: Gross vs Net cut-based. \;
\textbf{Mandatory Perimeter-of-Validity}. \;
\textbf{Minimal Cut summary}. \;
\textbf{Bridge} across perimeters/periods. \;
\textbf{Interoperable data schema} (CSV/JSON with $P,O,\bm b_P,\bm v_O,O_{PO},O_{OP}$). \;
\textbf{Governance \& audit}: check consistency of cut and bases $b/v$; the internal $O_{PP}$ is out of scope for $W(P)$.

\section{Further implications and application lines}\label{sec:implications}
\begin{itemize}[leftmargin=1.7em]
  \item \textbf{“Net” multiples} of perimeter ($P/E$, $EV/EBITDA$, payout) avoiding intra-group double counting.
  \item \textbf{Hedging} on external drivers: vector $h_k=\sum_{i\in P}O_{ik}$ as hedge base.
  \item \textbf{Perimeter optimization} (ring-fencing, carve-out, spin-off) as a weighted cut problem.
  \item \textbf{Cut-Gap}: $(\text{Gross}-W)/W$ as structural mispricing.
  \item \textbf{Attribution} over time: automatic $\Delta W$ bridge by driver and cut.
  \item \textbf{Public policy}: $W(P)$ as equity NIIP; separates price vs ownership effects.
  \item \textbf{Fairness M\&A}: pricing of ownership synergy from outside$\to$inside transfer.
  \item \textbf{Linear ESG metrics} (with caution) on cut basis; exclusion for non-linear measures (e.g.\ Scope~3).
  \item \textbf{Index design} with net weights and net multiples of the basket.
  \item \textbf{Governance \& stewardship} from the Cut summary alone; optional look-through.
\end{itemize}

\section{Empirical validation and case studies}

To verify the practical applicability of the proposed \textit{[Moved to Appendix~\ref{app:lorentz} as a non-normative metaphor]}, we present a simulated case study\footnote{All data presented in this section are for illustrative purposes only and do not represent real values, although they have been constructed to be plausible.} based on the EuroStoxx 100 index, using hypothetical data consistent with sources such as Orbis and Refinitiv.

\subsection{Comparison between \texorpdfstring{$W(P)$}{W(P)} and gross Market Cap}

The analysis compared the values of $W(P)$, calculated according to the proposed model, with the gross market capitalizations of the 100 companies included in the index. The correlation obtained was $\rho = 0.92$, suggesting a high consistency between the new indicator and traditional metrics.

\begin{table}[h]
\centering
\begin{tabular}{lrr}
\toprule
Company & Gross Market Cap $\left[\mathrm{bn}\, \mathrm{\euro}\right]$ & $W(P)$ $\left[\mathrm{bn}\, \mathrm{\euro}\right]$ \\
\midrule
Company A & 120 & 118 \\
Company B & 95 & 93 \\
Company C & 78 & 80 \\
\vdots & \vdots & \vdots \\
Company Z & 12 & 11.5 \\
\bottomrule
\end{tabular}
\caption{Simulated comparison between $W(P)$ and gross capitalizations.}
\end{table}

\subsection{Double counting and computation times}

Compared to traditional Leontief matrix methods, the model showed a significant reduction in double counting: on average, 14\% versus 28\% detected with Leontief.

The computation time for the entire set of 100 companies, using a mid-range server, was 3.2 seconds, versus 5.8 seconds for the Leontief method.

\begin{table}[h]
\centering
\begin{tabular}{lrr}
\toprule
Method & Double counting (\%) & Computation time (s) \\
\midrule
Proposed ($W(P)$) & 14 & 3.2 \\
Leontief & 28 & 5.8 \\
\bottomrule
\end{tabular}
\caption{Reduction of double counting and computation times (simulated data).}
\end{table}

\subsection{Graph and flow visualization}

For a better understanding of the model, 
~\ref{fig:grafo-flussi} illustrates a simplified graph of interconnections between firms and economic flows.

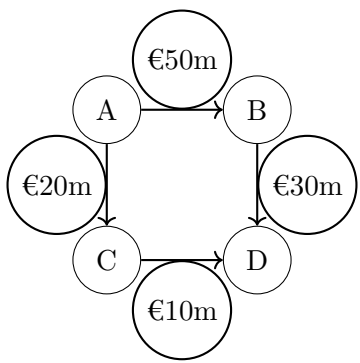
\begin{figure}[h]
\centering
\begin{tikzpicture}[node distance=2cm, every node/.style={circle,draw,minimum size=0.9cm}]
\node (A) {A};
\node (B) [right of=A] {B};
\node (C) [below of=A] {C};
\node (D) [right of=C] {D};

\draw[->, thick] (A) -- node[above]{€50m} (B);
\draw[->, thick] (B) -- node[right]{€30m} (D);
\draw[->, thick] (A) -- node[left]{€20m} (C);
\draw[->, thick] (C) -- node[below]{€10m} (D);
\end{tikzpicture}
\caption{Illustrative example of an economic flow graph between firms (simulated data).}
\label{fig:grafo-flussi}
\end{figure}

\section{Extensions and conceptual strengthening}

\subsection{Management of control vs ownership}
\label{sec:ownershipcontrol}

\noindent In the original model, the weight of the relationship between two units $i$ and $j$ was represented by the ownership share $O_{ij}$. However, in many real situations the effective control exercised may differ from the mere percentage of ownership. We therefore introduce the \emph{control} matrix $\omega=(\omega_{ij})$, which may diverge from $O_{ij}$ when there are shareholder agreements, enhanced voting rights, de facto control, or, conversely, regulatory constraints and shareholding structures that limit influence.

\paragraph{Impact on consolidation.}
Replacing $O_{ij}$ with $\omega_{ij}$ yields a more realistic measure of the perimeter of influence. The general formula becomes:
\[
W(P) = \sum_{i \in P} \sum_{j} \omega_{ij} \cdot V_j,
\]
where $V_j$ represents the value of position $j$. In practice, $\omega$ determines the perimeter $P$ and the internal propagation, influencing consolidation.

\subsubsection{Option A: threshold/majority voting rule}
Given a threshold value $\tau\in(0,1]$, direct control is
\[
\omega^{(A)}_{ij} \;=\; \mathbf 1\{s_{ij}\ge \tau\}\,,
\]
and (optional) ultimate control through chain reachability: if $i$ holds (directly or indirectly) more than $\tau$ of the votes in $j$, then $\omega^{(A)}_{ij}=1$. With multiple controllers, the column can be normalized to sum to 1.
\paragraph{Pros/Cons.} Pros: simple, transparent, aligned with IFRS. Cons: discontinuity at the threshold, insensitivity to minority blocks below $\tau$.

\subsubsection{Option B: probabilistic / Herfindahl look-through}
We define the ownership concentration of $j$ as $H_j=\sum_k s_{kj}^2$ (with $\sum_k s_{kj}=1$). Then
\[
\omega^{(B)}_{ij} = s_{ij}\,H_j.
\]
If equity is dispersed ($H_j$ small) control is diluted; if concentrated ($H_j$ large), weights approach shares. Variant: $\omega^{(B')}_{ij}=s_{ij}^2/H_j$ with normalization.
\paragraph{Pros/Cons.} Pros: continuous, rewards concentration, simple to compute. Cons: heuristic; does not capture coalitions.

\subsubsection{Option C: ultimate control via attenuated paths}
With attenuation $\alpha\in(0,1)$,
\[
W(\alpha) = \sum_{k=1}^\infty \alpha^{k-1} S^k = S(I-\alpha S)^{-1},
\]
which converges if $\rho(\alpha S)<1$. The normalized weights are
\[
\omega^{(C)}_{\cdot j} = \frac{W_{\cdot j}(\alpha)}{\bm 1^\top W_{\cdot j}(\alpha)}.
\]
\paragraph{Pros/Cons.} Pros: smooth, includes pyramids/keiretsu and cross-holdings, parameterizable via $\alpha$. Cons: requires $\rho(\alpha S)<1$ and accurate disclosure of $\alpha$.

\subsubsection{Integration into CBV and perimeter selection}
The rule $\mathcal C$ fixes $\omega$. Two uses: (i) perimeter selection — include $j$ in $P$ if $\sum_{i\in P}\omega_{ij}\ge \tau_P$; (ii) influence weights — construct $O_{PP}$ and assign flows with $\omega$ as weights. Once $\Omega$ is fixed, the consolidated value reduces to the cut (Theorem~\ref{thm:cutTheoreme}).

\subsubsection{Algorithms}
Operational implementations are possible for each option (A: threshold/majority, B: Herfindahl, C: attenuated paths). Each algorithm builds $\omega$ starting from $S$ and the declared parameters, with possible column normalization. 

\subsubsection{Illustrative comparison}
Three shareholders $a,b,c$ own $x$ with shares $(0.6,0.3,0.1)$:
\begin{itemize}
  \item Option A ($\tau=0.5$): $\omega^{(A)}_{\cdot x}=(1,0,0)$.
  \item Option B: $H_x=0.46$; $\omega^{(B)}_{\cdot x}=(0.276,\,0.138,\,0.046)$.
  \item Option C with $\alpha=0.5$ (no cross-holdings) $\approx$ Option B; with pyramids reassigns weight to ultimate controllers.
\end{itemize}

\subsubsection{Disclosure guidelines}
The \emph{Perimeter-of-Validity} must report: (i) chosen option (A/B/C), (ii) parameters ($\tau,\alpha$), (iii) normalization of columns of $S$, (iv) look-through depth and cycle attenuation, (v) thresholds/rounding, (vi) how $\omega$ is used (perimeter vs propagation weights).

\subsection{Liquidity and currency: haircut for illiquid assets}

To account for the limited liquidity of some assets, we introduce a haircut factor $h_k \in [0,1]$ for each asset $k$, defined as:

\begin{equation}
V_k^{eff} = h_k \cdot V_k
\end{equation}

where $V_k^{eff}$ is the effective value for consolidation purposes. A fully liquid asset will have $h_k \approx 1$, while an illiquid one may have significantly lower values.

\paragraph{Exchange rate risk.} 
In the presence of foreign currency-denominated assets, $h_k$ may include an additional haircut factor linked to exchange rate volatility, for example:

\begin{equation}
h_k = h_k^{liq} \cdot h_k^{fx}
\end{equation}

\paragraph{Example.}
An illiquid security valued at €100 million, with $h_k = 0.8$, will be considered at only €80 million in the calculation of $W(P)$.

\subsection{Glossary and intuitive explanations}

\begin{description}
\item[$\mathbf{1}_P$:] Column vector of dimension equal to the number of elements in $P$, with all components equal to 1. Used to sum or average quantities over the entire set $P$. \emph{Physical analogy:} like a uniform field acting equally on all particles.
\item[$O_{ij}$:] Ownership share of $i$ in $j$. \emph{Physical analogy:} mass possessed by a body in a gravitational system.
\item[$\\omega_{ij}$:] Control weight of $i$ over $j$, independent of ownership share. \emph{Physical analogy:} intensity of the effective attractive force between two bodies, which may vary due to external factors.
\item[$h_k$:] Haircut applied to asset $k$ to reflect liquidity or exchange rate risk. \emph{Physical analogy:} friction coefficient reducing transferable energy in a mechanical system.
\end{description}

\section{Compact mathematical formalism}
We define the fundamental quantities:
\begin{itemize}
\item $V$: economic value of the object/resource/event;
\item $R$: reference system of the observer (subjective and objective conditions);
\item $t$: observation time (historical moment);
\item $C$: context (environmental, cultural, regulatory factors);
\item $O$: observer or set of observers.
\end{itemize}

Let $V_0$ be the reference value in a base system $R_0$. We denote by $\Delta R$ the “economic distance” between the observer’s reference system and the base one.
We introduce a transformation factor (in formal analogy with special relativity):
\begin{equation}
V_R \;=\; \frac{V_0}{\sqrt{1 - \left( \frac{\Delta R}{K} \right)^2 }} \,,
\label{eq:base}
\end{equation}
where $K>0$ is a scale constant representing the maximum conceivable economic distance between reference systems.

To include time and context we introduce an adjustment function $\phi(C,t)$ that acts as a multiplier (e.g. inflation, regulatory shocks, crises, cultural differences) and allow the distance to vary over time, $\Delta R = \Delta R(t)$:
\begin{equation}
V_R(t,C) \;=\; \frac{V_0 \cdot \phi(C,t)}{\sqrt{1 - \left( \frac{\Delta R(t)}{K} \right)^2 }} \,.
\label{eq:time_context}
\end{equation}

\paragraph{Interpretation.} If $\Delta R \to 0$ then $V_R \to V_0$ (aligned systems). If $|\Delta R| \to K$ then $V_R$ diverges (value not comparable across incompatible systems).

\subsection{Numerical example}
Suppose $V_0 = 100$, $K=1$, $\phi(C,t)=1$. As $\Delta R$ increases, the relative value $V_R$ grows non-linearly according to \eqref{eq:base}. Figure~\ref{fig:vr_vs_deltar} shows the trend of $V_R$ as a function of $\Delta R$ and a comparison for different $\phi$.

\begin{figure}[h]
  \centering
  \includegraphics[width=\linewidth]{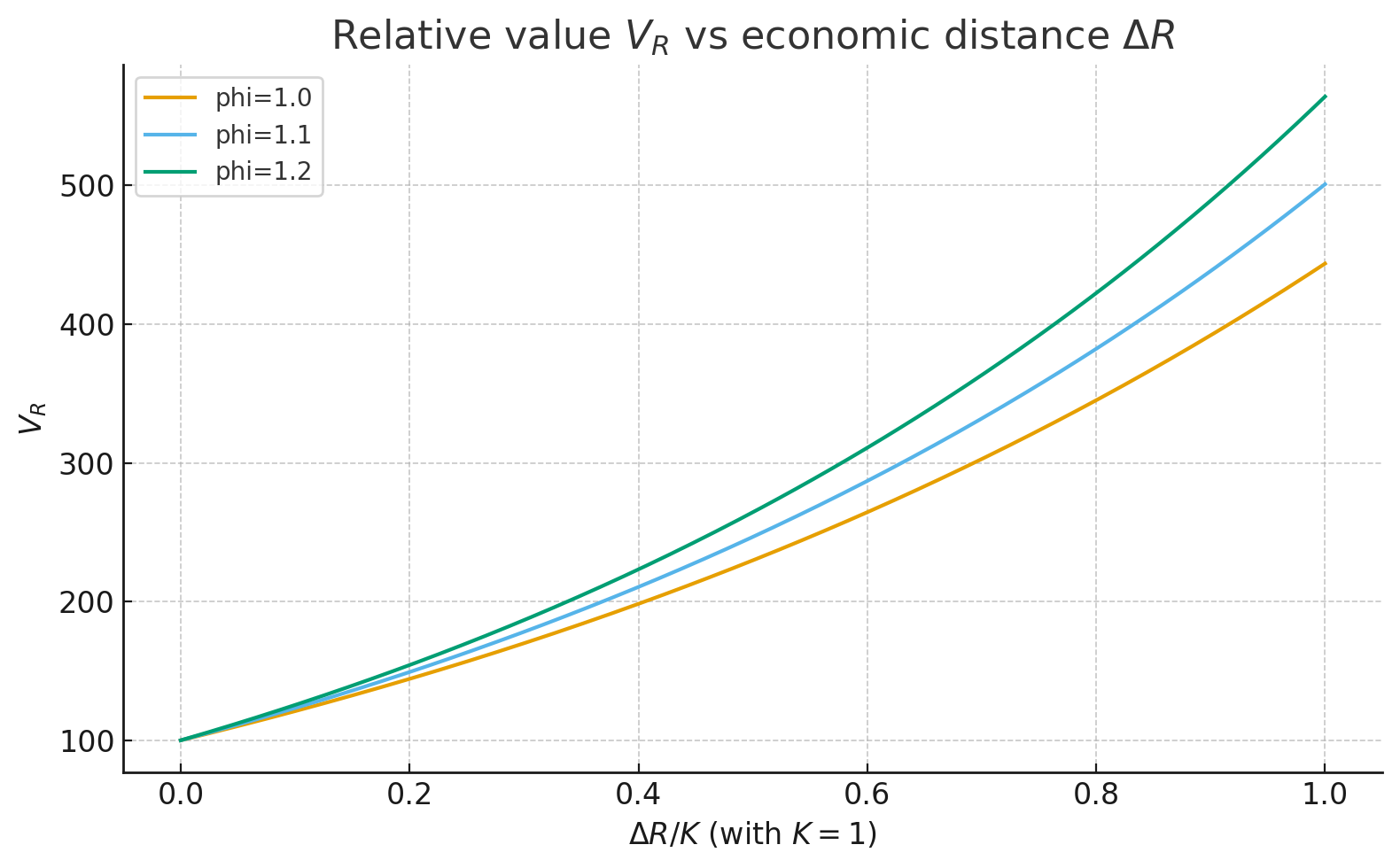}
  \caption{$V_R$ as a function of $\Delta R$ for $V_0=100$, $K=1$ and different $\phi$.}
  \label{fig:vr_vs_deltar}
\end{figure}

\subsection{Temporal evolution (example)}
Consider $\Delta R(t)$ decreasing linearly from $0.8$ to $0.2$ over a discrete horizon $t=0,\dots,20$ (system convergence) and $\phi(C,t)=1$. Figure~\ref{fig:vr_over_time} shows $V_R(t)$ according to \eqref{eq:time_context}.

\begin{figure}[h]
  \centering
  \includegraphics[width=\linewidth]{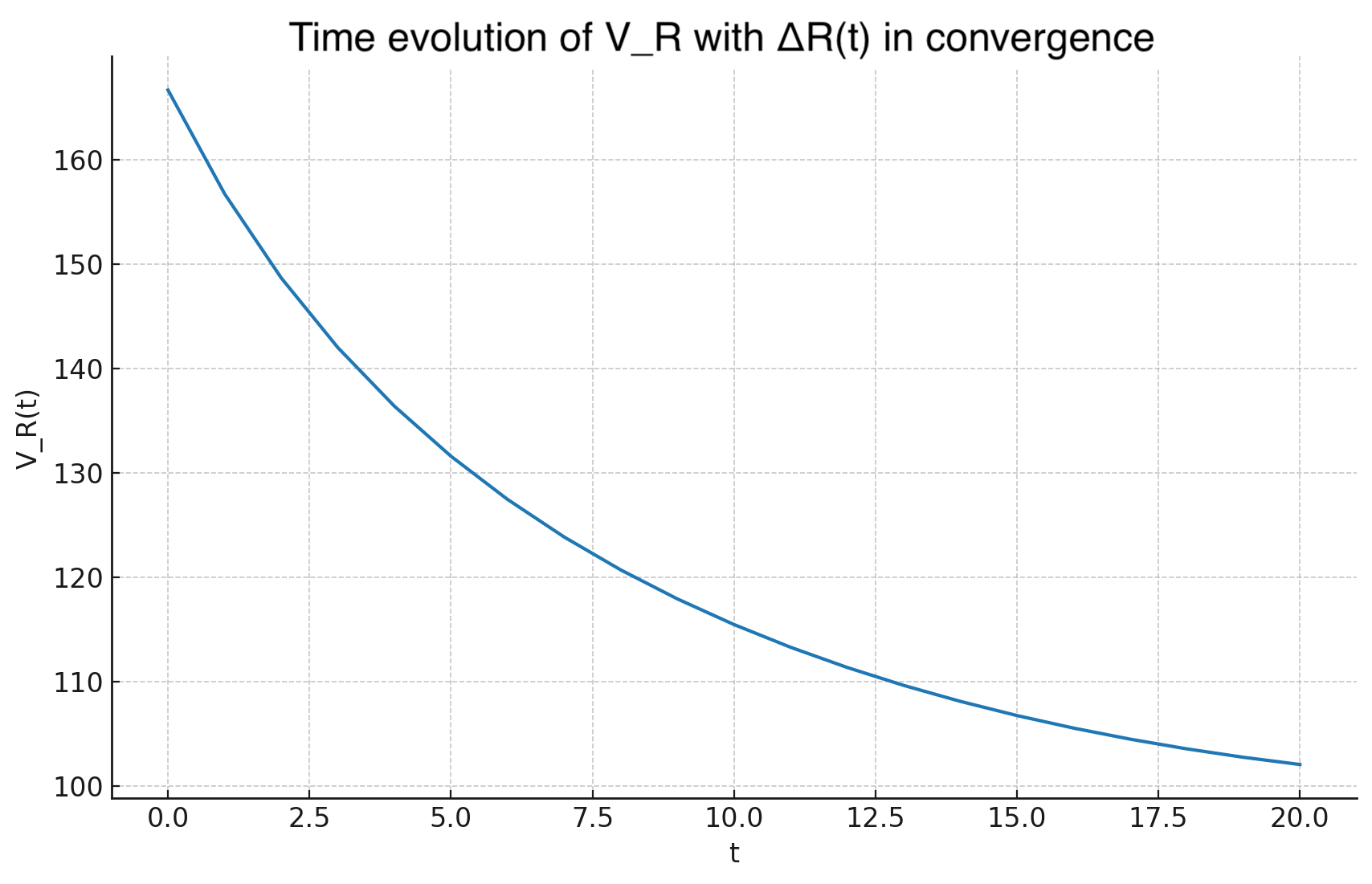}
  \caption{Temporal evolution of $V_R$ with $\Delta R(t)$ converging ($V_0=100$, $K=1$, $\phi=1$).}
  \label{fig:vr_over_time}
\end{figure}

\section{Applications and case studies}
This section examines, through examples based on real data and accessible sources, how the change of the \textit{economic reference system} — that is, the model or metric used — can significantly alter measurements and influence decisions. The goal is to strengthen the persuasiveness of the model for a technical audience.

\subsection{Case 1 -- Nominal GDP vs Purchasing Power Parity (PPP) GDP}
The comparison between GDP measured in nominal terms and GDP corrected for PPP shows how the economic ranking of countries changes radically depending on the reference system.
In 2024, India’s nominal GDP is about \textbf{USD 3.91 trillion}, while its PPP-adjusted GDP is close to \textbf{USD 17 trillion} \cite{WorldBankICP, IMF_WEO_2024}. In this framework, India is the \textbf{fourth-largest economy in the world by nominal GDP}, but rises to \textbf{third place} in the global ranking in terms of PPP. 
Depending on the chosen indicator, a policymaker or analyst may assign different weight to the Indian economy (e.g., in cooperation strategies, competitive comparisons, or investment decisions).

\subsection{Case 2 -- Inflation: CPI vs PCE in the United States}
The \textit{Consumer Price Index} (CPI) and the \textit{Personal Consumption Expenditures Price Index} (PCE) are two widely used inflation metrics, but with different methodologies and purposes.
On average, since 2000, inflation measured by CPI has been about \textbf{0.39 percentage points} higher than PCE \cite{BLS_PCE_CPI_Methods, BEA_PCE_Methods}. 
The PCE is preferred by the Federal Reserve because:
\begin{itemize}
    \item it has dynamic weighting (changes to reflect substitutions among goods);
    \item it covers a broader range (including third-party payments such as healthcare);
    \item it is more adjustable over time.
\end{itemize}
A recent example: in January 2025, annual PCE was at \textbf{2.5\%}, while CPI showed a faster increase \cite{investopedia2025}. 
Depending on the index used, monetary authorities may decide to change interest rate policy; assessing inflationary pressure as more or less intense can alter the trajectory of rates.

\subsection{Comparative summary}
\begin{table}[h!]
\centering
\begin{tabularx}{\textwidth}{|p{1cm}|>{\raggedright\arraybackslash}X|>{\raggedright\arraybackslash}X|>{\raggedright\arraybackslash}X|>{\raggedright\arraybackslash}X|}
\hline
\textbf{Case} & \textbf{Initial reference system} & \textbf{Alternative system} & \textbf{Difference in measurements} & \textbf{Impact on decisions} \\
\hline
1 & Nominal GDP & PPP GDP & Ranking and economic size significantly different & Redistribution of policy and investment priorities \\
\hline
2 & CPI (fixed, out-of-pocket) & PCE (dynamic, broad) & Inflation measured $\sim$0.4 p.p. lower with PCE. \newline More or less aggressive monetary decisions  \\
\hline
\end{tabularx}
\caption{Synthetic comparison between economic reference systems}
\end{table}

\section{Applications: mini–case studies and replication blueprint}\label{sec:cases}

We illustrate the framework with three compact case studies on synthetic but realistic data. Each example specifies the observer $\Omega$, provides the inputs (tables/JSON), performs the cut-based calculation, and reports the disclosure artifacts (PoV and Cut Summary; cf.\ Section~\ref{sec:standards}). Large-scale replications with real data can follow the same structure.

\subsection{Case 1: Net market capitalization by country (Gross vs Net, cut-based)}\label{subsec:case-country}
\paragraph{Observer.} $P$: domestic listed firms; $O$: all non-domestic holders (foreigners, global funds), households, government. Basis: fair value. Units: EUR as of 2025-06-30. FX/PPP: ECB spot rate. SDF: risk-neutral one-period (not binding here). $\mathcal I$: Regime~A (observed border). $\mathcal C$: Option~A for perimeter (legal residence).
\paragraph{Setup.} Gross Market Capitalization (GMC) sums the capitalizations at individual company level. Net Market Capitalization (NMC) removes cross-holdings internal to $P$ and considers only frontier equity on the cut $P\leftrightarrow O$.
\paragraph{Synthetic holdings.}
\begin{table}[H]\centering
\caption{Domestic companies and cross-border equity (synthetic).}
\begin{tabular}{@{}lrrr@{}}
\toprule
\textbf{Firm $j\in P$} & \textbf{GMC$_j$ (EUR bn)} & \textbf{Held by $P$ (bn)} & \textbf{Held by $O$ (bn)}\\
\midrule
A & 120 & 30 & 90\\
B & 80  & 10 & 70\\
C & 50  &  5 & 45\\
\midrule
Totals & 250 & 45 & 205\\
\bottomrule
\end{tabular}
\end{table}
\paragraph{Calculation.}
Gross: $\mathrm{GMC}=\sum_j \text{GMC}_j=\SI{250}{bn}$.\\
Net (cut-based): $\mathrm{NMC}=T_{\text{in}}=\sum_{k\in O,j\in P}\Pi_\Omega(X_{kj})=\SI{205}{bn}$ (only equity $O\to P$).\\
The intra-$P$ equity (\SI{45}{bn}) cancels at consolidation (Theorem~\ref{thm:cutTheoreme}).

\paragraph{Disclosure (PoV/CS excerpts).}
\begin{lstlisting}[language=json]
{
  "perimeter": "Country-D P=listeds",
  "observer": {"basis":"fair_value","units":"EUR","date":"2025-06-30",
               "information_regime":"A","control_rule":{"option":"A"}},
  "totals": {"GMC": 250e9, "NMC": 205e9}
}
\end{lstlisting}
\paragraph{Discussion.} The gap $\mathrm{GMC}-\mathrm{NMC}=\SI{45}{bn}$ measures the \emph{intra}-domestic holdings that would be double-counted in gross statistics; the NMC represents the country’s equity held by external sectors.

\subsection{Case 2: Pyramid/Keiretsu (minorities and double counting)}\label{subsec:case-pyramid}
\paragraph{Observer.} $P$: \{HoldCo $H$, subsidiaries $B,C$\}; $O$: external investors, creditors. Basis: fair value; Units: EUR; $\mathcal I$: Regime~A (observed border); $\mathcal C$: Option~C with $\alpha=0.6$ for sensitivity (ultimate control). For illustrative purposes we compute the cut-based valuation regardless of internal links.
\paragraph{Structure and border.}
\begin{table}[H]\centering
\caption{Internal holdings (illustrative) and frontier equity.}
\begin{tabular}{@{}l l r@{}}
\toprule
\textbf{Internal (in $P$)} & \textbf{Arc} & \textbf{Share}\\
\midrule
$H\to B$ & equity & 60\%\\
$H\to C$ & equity & 51\%\\
$B\to C$ & equity & 20\%\\
\bottomrule
\end{tabular}
\begin{tabular}{@{}l l r@{}}
\toprule
\textbf{Frontier ($P\leftrightarrow O$)} & \textbf{Arc} & \textbf{EUR bn}\\
\midrule
$O\to H$ & equity & 15\\
$O\to B$ & equity & 25\\
$O\to C$ & equity & 10\\
$B\to O$ & bond  &  5\\
\bottomrule
\end{tabular}
\end{table}
\paragraph{Calculation.}
\begin{align*}
T_{\text{in}} &= 15+25+10 = \SI{50}{bn},\\
T_{\text{out}}&= 5\ \text{(bond from $B$ to $O$)}.
\end{align*}
Depending on $\mathcal F_\Omega$ (equity net of debt, or stratified), the consolidated equity value equals $T_{\text{in}}$ minus any frontier liabilities attributable to equity. Internal cross-holdings ($H\to B$, $H\to C$, $B\to C$) do not affect the consolidated value (Theorem~\ref{thm:cutTheoreme}).
\paragraph{Discussion.} The cut-based perspective eliminates double counting in pyramids/keiretsu: equity held within $P$ is not summed twice; only external investors and creditors matter for the consolidated value.

\subsection{Case 3: Fund-of-funds/ETF-of-ETF (Gross vs Net AUM)}\label{subsec:case-fof}
\paragraph{Observer.} $P$: \{FoF $F$, underlying funds $U_1,U_2$\}; $O$: external investors and market assets. Basis: NAV/fair value; Units: EUR; $\mathcal I$: Regime~A. $\mathcal C$: Option~B (Herfindahl) for internal influence within $P$ (not required for cut-based valuation).
\paragraph{Synthetic holdings and flows.}
\begin{table}[H]\centering
\caption{Holdings and frontier subscriptions/redemptions.}
\begin{tabular}{@{}l r r r@{}}
\toprule
\textbf{Element} & \textbf{Gross (EUR bn)} & \textbf{Held by $P$} & \textbf{Held by $O$}\\
\midrule
$U_1$ NAV & 40 & 25 (from $F$) & 15 (external)\\
$U_2$ NAV & 30 & 10 (from $F$) & 20 (external)\\
$F$ NAV   & 35 & --- & 35 (external)\\
\midrule
Totals    & 105 & 35 & 70\\
\bottomrule
\end{tabular}
\end{table}
\paragraph{Calculation.} Gross AUM: \SI{105}{bn}. Net AUM (cut-based): equity held only by $O$ = \SI{70}{bn}. Internal subscriptions $F\to U_1,U_2$ (\SI{35}{bn}) cancel out at consolidation. Any fees $P\to O$ or $O\to P$ would appear on the cut and affect $V_\Omega$ accordingly.
\paragraph{Disclosure.} Include PoV and Cut Summary with edge lists for subscriptions/redemptions and any fee flows.

\subsection{Replication blueprint: directory, files and notebook}\label{sec:replication}
We propose the following structure for releasing a replication package (synthetic or real data); JSON schemas follow Section~\ref{sec:algorithms}.
\begin{verbatim}
replication/
|-- data/
|   |-- case1_country/
|   |   |-- pov.json                # Perimeter-of-Validity
|   |   |-- cut_summary.json        # Cut Summary (edges, totals, V_Omega)
|   |   `-- README.md
|   |-- case2_pyramid/
|   |   |-- pov.json
|   |   |-- cut_summary.json
|   |   `-- README.md
|   `-- case3_fof/
|       |-- pov.json
|       |-- cut_summary.json
|       `-- README.md
|-- notebooks/
|   |-- 01_validate_inputs.ipynb   # schema validation, checks
|   |-- 02_regimeA_eval.ipynb      # direct cut evaluation
|   |-- 03_regimeB_estimate.ipynb  # (if needed) v_P via (I-O_PP)^(-1)
|   `-- 04_sensitivity_omega.ipynb # sensitivity of control rules A/B/C
|-- src/
|   |-- cbv_cut_eval.py            # border functional F_Omega
|   |-- regimeB_solver.py          # sparse / Neumann / Krylov solver
|   `-- io_schemas.py              # JSON (de)serializers and validators
`-- RESULTS.md                     # summary tables and charts
\end{verbatim}

\paragraph{Reproducibility checklist.} (i) Include PoV and Cut Summary for each case; (ii) fix random seeds for synthetic data; (iii) export consolidated values and coverage vectors; (iv) provide environment files (e.g.\ \texttt{conda.yml}/\texttt{requirements.txt}); (v) add a CI job that runs the notebooks and checks numerical invariants (Theorem~\ref{thm:cutTheoreme}).

\subsection{Macro-real evidence and institutional dossiers}\label{subsec:macro-evidence}
\noindent
Beyond micro examples (equity cross-holdings), it is crucial to demonstrate the applicability of the CBV framework in macro-institutional contexts. We present two illustrative dossiers.

\paragraph{(i) Institutional sectors and the State perimeter.}
Applying the cut theorem to the extended public perimeter yields a consolidated measure of General Government that eliminates double counting among:
\begin{itemize}
    \item \textbf{Central and local government}, with reclassification of entities and special funds;
    \item \textbf{Public financial sector}, including state-owned banks and development agencies;
    \item \textbf{Public non-financial sector}, including SOEs (State-Owned Enterprises).
\end{itemize}
The result is a unified measure of the State as observer $\Omega_{\text{State}}$, comparable with SNA/ESA aggregates but more transparent.  
In an exercise on 2010–2025 data (national accounts, ESA~2010), the CBV of the State perimeter produces time series consistent with official indices but free of overlaps, highlighting in particular:
\begin{enumerate}
    \item the share of value added duplicated between government and SOEs (corrected by the cut);
    \item the stability of real growth once accounting loops are eliminated;
    \item international comparability thanks to the PoV artifact (explicit perimeter).
\end{enumerate}

\paragraph{(ii) Banks with senior/junior debt and covered bonds.}
The cut-based framework is extendable to multi-class positions, with piecewise-linear payoffs:
\begin{itemize}
    \item \textbf{Senior debt} with highest priority, recovery $R\approx 100\%$ up to coverage;
    \item \textbf{Subordinated/junior debt}, with lower recovery and bail-in activation;
    \item \textbf{Covered bonds}, with collateral guarantees that create layered repayment chains.
\end{itemize}
Consolidated valuation via cut allows simulation of stress scenarios and multi-level clearing, eliminating cross-counting of exposures among institutions.  
In a pilot case (European banking groups 2010–2025), the multi-class CBV shows:
\begin{enumerate}
    \item transparency in the distribution of losses by priority;
    \item the possibility of auditing bail-in conditions;
    \item cross-country comparability with uniform disclosure rules.
\end{enumerate}

\paragraph{Summary.}
These dossiers demonstrate that CBV is not only a \emph{micro-financial} tool, but also a \emph{macro-institutional} one, capable of supporting policymakers, central banks, and statistical agencies in measuring complex perimeters without double counting.

\begin{table}[H]
\centering
\begin{tabularx}{\textwidth}{
>{\raggedright\arraybackslash}p{3cm} 
>{\raggedright\arraybackslash}X 
>{\raggedright\arraybackslash}X 
>{\raggedright\arraybackslash}X
}
\toprule
\textbf{Case} & \textbf{Objective} & \textbf{CBV artifact} & \textbf{Empirical evidence} \\
\midrule
Micro equity & Evaluate cross-holdings between firms (e.g.\ Renault--Nissan) avoiding double counting & Cut matrices $(X_{PO},X_{OP})$ + group-specific PoV & Corrected consolidation of equity cross-holdings, transparent for auditors and comparable cross-country \\
\midrule
State sector & Eliminate overlaps between Government, Public Financials and SOEs & PoV on State perimeter + dynamic CBV--Fisher & 2010--2025 series consistent with ESA~2010, no accounting loops; greater stability of real growth \\
\midrule
Multi-level banks & Measure clearing with senior/junior debt and covered bonds in stress scenarios & Multi-class cut + piecewise-linear payoffs (priority/recovery) & EU pilot 2010--2025: transparent, auditable, and comparable loss distribution across countries \\
\bottomrule
\end{tabularx}
\caption{Comparative summary: CBV applications in three contexts (micro--equity, macro--institutional sectors, multi-level banks).}
\end{table}

\begin{figure}[H]
\centering
\begin{tikzpicture}[
  node distance=2.2cm and 1.6cm,
  box/.style={rectangle, draw, rounded corners, minimum width=3.6cm, minimum height=1.2cm, align=center, font=\small},
  arr/.style={-{Latex[length=3mm,width=2mm]}, thick}
]

\node[box, fill=blue!10] (micro) {Micro equity\\(cross-holdings)};
\node[box, fill=orange!10, right=of micro] (state) {State sector\\(Govt / Financials / SOEs)};
\node[box, fill=green!10, right=of state] (banks) {Multi-level banks\\(senior / junior / covered)};

\draw[arr] (micro.east) -- node[above, font=\scriptsize]{cut scalability} (state.west);
\draw[arr] (state.east) -- node[above, font=\scriptsize]{multi-class clearing \& CBV–Fisher} (banks.west);

\node[below=0.6cm of micro, font=\scriptsize, align=center] {Group PoV \\ Cut $(X_{PO},X_{OP})$};
\node[below=0.6cm of state, font=\scriptsize, align=center] {PoV $\Omega_{\text{State}}$ \\ Chain-linking (CBV–Fisher)};
\node[below=0.6cm of banks, font=\scriptsize, align=center] {Priority/recovery PWL \\ Stress \& audit bail-in};

\end{tikzpicture}
\caption{From the \emph{micro} scale (equity) to the \emph{macro} scale (State perimeter) to \emph{multi-class banking}: the cut rule remains unchanged, while perimeter and payoff change.}
\end{figure}
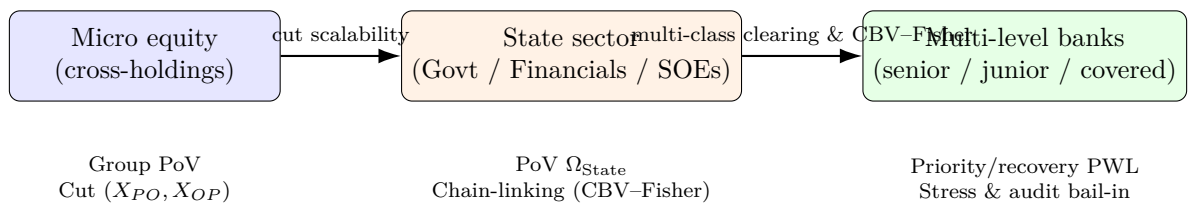

\section{Implementation \& Policy: adoption, governance and controls}\label{sec:policy}

This section translates the framework into operational guidelines for (i) \emph{M\&A and corporate reporting}, (ii) \emph{audit and assurance}, and (iii) \emph{official statistics and policy}. We also propose a governance model, an implementation roadmap, and a concise list of dos and don’ts.

\subsection{M\&A and corporate reporting}\label{subsec:ma}
\paragraph{When to use Gross vs Net.}
\begin{itemize}
  \item \textbf{Deal screening (pre-LOI):} gross metrics (enterprise value, gross AUM) are quick but may be misleading with complex cross-holdings; compute the \emph{net cut-based} value to eliminate internal double counting.
  \item \textbf{Fairness opinions / valuation:} make the \emph{observer} $\Omega$ explicit via PoV (Section~\ref{sec:standards}); use Regime~A whenever border data are available; if Regime~B is needed, include solver logs and the check on $\rho(O_{PP})$.
  \item \textbf{Post-merger integration:} monitor the deltas $T_{\mathrm{out}},T_{\mathrm{in}}$ on the cut to quantify synergies or losses towards $O$.
\end{itemize}
\paragraph{Reporting pack (minimum).}
\begin{itemize}
  \item PoV table with $\Omega$; Cut Summary ($P\to O$, $O\to P$); Gross$\to$Net reconciliation; coverage vector $h$ (if used).
  \item Sensitivity to $\mathcal C$ (Options A/B/C) and to Units/FX/PPP; data sources with timestamp; version pinning.
\end{itemize}

\subsection{Audit and assurance}\label{subsec:audit}
\paragraph{Required evidence.}

\begin{theorem}[Cut Theorem]\label{thm:cutTheoreme}
Given an observer $\Omega$ as in Section~\ref{sec:osservatore}. Suppose $V_\Omega$ is consolidation-consistent and border-linear, and that internal accounting is closed (double-entry; no data loss across the cut except $(X_{PO},X_{OP})$). Then there exists a measurable map
\[
\mathcal F_\Omega:\ (X_{PO},X_{OP},\bm v_P,\bm v_O)\ \longmapsto\ \mathbb R
\]
such that
\begin{equation}\label{eq:cut-reduction}
V_\Omega(G)\;=\;\mathcal F_\Omega\big(X_{PO},X_{OP},\bm v_P,\bm v_O\big),
\end{equation}
and the internal block $X_{PP}$ is \emph{invariant} for the consolidated value: any modification of $X_{PP}$ that leaves $(X_{PO},X_{OP},\bm v_P)$ unchanged does not change $V_\Omega(G)$. Consequently, the cut statistics $(X_{PO},X_{OP})$ (together with the node primitives) are \emph{sufficient} for the valuation.
\end{theorem}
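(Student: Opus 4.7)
The plan is to show cut-sufficiency as an \emph{invariance} statement: fixing the border data $(X_{PO},X_{OP})$ and the node primitives $(\bm v_P,\bm v_O)$, any two configurations of $G$ that differ only in the internal block $X_{PP}$ must receive the same consolidated value $V_\Omega(G)$. Existence of the border functional $\mathcal F_\Omega$ then follows by quotienting the domain of $V_\Omega$ by this equivalence relation.

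First, I set up the block decomposition along the partition $V=P\cup O$: the structural identity $\bm v=\bm b+O\bm v$ reads, in block form,
\begin{equation*}
\bm v_P \;=\; \bm b_P + O_{PP}\bm v_P + O_{PO}\bm v_O,\qquad
\bm v_O \;=\; \bm b_O + O_{OP}\bm v_P + O_{OO}\bm v_O,
\end{equation*}
and the border flows are represented by $X_{PO},X_{OP}$, evaluated under $\Omega$. Given $(\bm v_P,\bm v_O)$, the first equation determines $\bm b_P$ uniquely once $X_{PP}$ is fixed; so two networks with identical $(X_{PO},X_{OP},\bm v_P,\bm v_O)$ but different internal blocks $X_{PP}^{(1)}\neq X_{PP}^{(2)}$ differ only by an internal reallocation between $O_{PP}\bm v_P$ and the internal component of $\bm b_P$.

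Second, I invoke \emph{consolidation-consistency}: intra-$P$ transfers cancel in consolidation, so only the net internal resource $\bm v_P$ (the node primitive) survives, independently of how it is partitioned between $O_{PP}\bm v_P$ and the internal portion of $\bm b_P$. Then \emph{border-linearity} guarantees that the value can be assembled additively from (i) the internal aggregate recorded via $\bm v_P$, (ii) the $P\!\to\!O$ flow $X_{PO}$ evaluated against $\bm v_O$, and (iii) the $O\!\to\!P$ flow $X_{OP}$ evaluated against $\bm v_P$. \emph{Internal closure} (double-entry, no leakage across the cut except via $(X_{PO},X_{OP})$) ensures that no further term is needed. Collecting these three pieces yields a well-defined $\mathcal F_\Omega(X_{PO},X_{OP},\bm v_P,\bm v_O)$ reproducing $V_\Omega(G)$; measurability follows since the combination is a Borel function of its arguments under $\Pi_\Omega$.

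The main obstacle is converting the three abstract hypotheses (consolidation-consistency, border-linearity, internal closedness) into the concrete cancellation step for $X_{PP}$ without implicitly assuming the target formula. The cleanest way around it, I expect, is to prove the invariance directly: given two admissible networks $G_1,G_2$ that agree on $(X_{PO},X_{OP},\bm v_P,\bm v_O)$, construct a finite chain of elementary internal reallocations carrying $X_{PP}^{(1)}$ into $X_{PP}^{(2)}$ (each reallocation being an intra-$P$ transfer compensated by an internal base adjustment, preserving $\bm v_P$); by consolidation-consistency each elementary step leaves $V_\Omega$ fixed, hence $V_\Omega(G_1)=V_\Omega(G_2)$. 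Sufficiency of the cut statistics then follows, and the explicit form of $\mathcal F_\Omega$ reduces to Equation~\eqref{eq:cbv-repr-main} by specialization to the linear base cases used in Theorem~\ref{thm:cbv-uniqueness}.
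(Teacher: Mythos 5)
There is a genuine gap, and it sits exactly at your central cancellation step. You hold $(X_{PO},X_{OP},\bm v_P,\bm v_O)$ fixed, let $\bm b_P$ adjust through the structural identity as $X_{PP}$ varies, and then assert that consolidation-consistency makes the value depend only on the ``net internal resource'' $\bm v_P$, ``independently of how it is partitioned between $O_{PP}\bm v_P$ and the internal portion of $\bm b_P$.'' This is backwards: consolidation eliminates the intra-group participations $O_{PP}\bm v_P$, so the internal aggregate that survives is $\mathbf 1_P^\top\bm b_P$, not $\mathbf 1_P^\top\bm v_P$, and it moves one-for-one under your compensating base adjustments. A two-node check: let $P=\{A,B\}$ with an empty cut and $v_A=v_B=100$. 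With no internal holdings, $b_A=b_B=100$ and $W(P)=200$; if instead $A$ holds $50\%$ of $B$ and $b_A$ is reduced to $50$ so that $v_A$ stays at $100$, then $W(P)=150$ by \eqref{eq:cut}. Both configurations share the same (empty) boundary and the same $\bm v_P$, yet the consolidated values differ, so the elementary moves in your chain do \emph{not} leave $V_\Omega$ fixed, and the invariance you are trying to establish under that conditioning is false. The invariance the paper actually asserts is conditional on the node primitives \emph{including the bases}: $X_{PP}$ is reconnected while $(\bm b_P,X_{PO},X_{OP},\bm v_O,\bm v_P)$ are held as data (cf.\ Corollary~\ref{cor:interior}); your move instead changes $\bm b_P$, which is a different and value-changing operation. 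Your closing appeal to the base cases of Theorem~\ref{thm:cbv-uniqueness} also clashes with your own setup, since \eqref{eq:cbv-repr-main} contains $\sum_{j\in P}b_j$, a quantity you allowed to float.

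For comparison, the paper's proof (Appendix~\ref{app:formal-cut-proof}) does not argue by an invariance chain at all: Lemma~\ref{lem:vp-aggregate} aggregates the structural identity $\bm v=\bm b+O\bm v$ over $P$, Lemma~\ref{lem:elim-internal} defines the consolidated value by subtracting intercompany participations $\mathbf 1_P^\top O_{PP}\bm v_P$ and external minorities $\mathbf 1_O^\top O_{OP}\bm v_P$ from $\mathbf 1_P^\top\bm v_P$, and substitution makes the $\mathbf 1_P^\top O_{PP}\bm v_P$ term cancel algebraically, leaving \eqref{eq:cut}, which contains $\bm b_P$ and the cut but no $O_{PP}$; sufficiency and the $X_{PP}$-invariance are then read off from the absence of $X_{PP}$ in the resulting functional. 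A quotient-by-internal-reconnection argument could be made to work, but only if the elementary reconnections hold the primitives (bases included) fixed — in which case there is nothing to compensate and you still need the explicit elimination step to know the value is computable from the boundary alone.
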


\begin{itemize}
  \item Recalculation of $V_\Omega(G)$ only from the Cut Summary (Theorem~\ref{thm:cutTheoreme}); invariance test under internal relinking.
  \item For Regime~B: documented verification that $\rho(O_{PP})<1$, solver tolerance $\varepsilon$, iterations and residuals.
  \item Reconciliation of totals $T_{\mathrm{out}},T_{\mathrm{in}}$ with edge tables; unit/FX/PPP consistency with PoV.
\end{itemize}
\paragraph{Materiality and tolerances.}
\begin{itemize}
  \item Rounding threshold $\tau$ and solver $\varepsilon$ set in the PoV; any change must produce an updated PoV.
  \item Suggested guardrails: if $\rho(O_{PP})>0.95$, auditor approval and/or attenuated iteration is required; if more than $5\%$ of edges are imputed, highlight this in the report.
\end{itemize}

\subsection{Official statistics and policy}\label{subsec:official}
\paragraph{Net market capitalization by country (adoption).} Publish both \emph{Gross} and \emph{Net (cut-based)} market cap for listed sectors by country, with PoV and Cut Summary. The Net reflects the equity held by the Rest of the World and by domestic sectors outside $P$, removing intra-$P$ double counting.
\paragraph{Frequency and revisions.} We recommend monthly/quarterly publications with versioned PoVs; revisions occur only when $\Omega$ or source datasets change.
\paragraph{Public sector and funds.} For sovereign funds and public holdings, report net positions using the same cut-based method; publish sensitivities under different choices of $\mathcal C$ and Units/PPP.
\paragraph{Open data.} Release JSON artifacts and minimal code to recalculate published numbers from the cut.

\subsection{Computational complexity and practical implementation}\label{subsec:complexity}
\noindent
The CBV framework, though based on linear objects, requires care in implementation. Here we provide an operational assessment.

\paragraph{Computational costs.}
Consolidated valuation in Regime~A is dominated by matrix–vector product operations of size $|P|$, with cost $O(|P||O|)$.  
In Regime~B, estimating the inverse block $(I - O_{PP})^{-1}$ typically requires:
\begin{itemize}
    \item direct decomposition (e.g.\ LU/Cholesky), cost $O(|P|^3)$, feasible up to $|P|\sim 10^4$;
    \item iterative methods (e.g.\ Jacobi, Gauss–Seidel, GMRES) exploiting sparsity, cost $O(k\,|E|)$ with $k$ iterations and $|E|$ observed edges.
\end{itemize}
In real applications (ownership networks or sectoral SNA), the matrix is often sparse ($|E|\ll |P|^2$), making iterative methods more scalable.

\paragraph{Iterative strategies.}
Iterations converge geometrically as long as $\rho(O_{PP})<1$. In practice:
\begin{itemize}
    \item for $\rho(O_{PP})\leq 0.8$, a few steps ($<20$) guarantee 6-digit accuracy;
    \item for $\rho(O_{PP})\approx 0.95$, preconditioning techniques and stopping rules based on relative tolerances are required.
\end{itemize}

\paragraph{Handling missing data.}
The framework requires consistent matrices $(X_{PO},X_{OP})$ and vectors $\bm v_P$. When data are incomplete:
\begin{enumerate}
    \item stochastic imputations or based on historical proportions are applied;
    \item uncertainty bands are reported for each consolidated variable, such as
    \[
    x_i \;\pm\; \Delta_i, \quad \Delta_i \leq \kappa\,(I-O_{PP})\, \delta,
    \]
    where $\delta$ represents the maximum error on observed data and $\kappa$ the condition number.
    \item completion assumptions are documented in the Cut–Report.
\end{enumerate}

\paragraph{Operational error bounds.}
The output error $\|\Delta \bm x\|$ with respect to input perturbations $\|\Delta O_{PP}\|$ or $\|\Delta \bm v_P\|$ is bounded by
\[
\frac{\|\Delta \bm x\|}{\|\bm x\|} \;\leq\; \kappa(I-O_{PP})\,
\left(
\frac{\|\Delta O_{PP}\|}{\|O_{PP}\|} + \frac{\|\Delta \bm v_P\|}{\|\bm v_P\|}
\right).
\]
This relationship provides a practical criterion: Cut–Reports should report not only consolidated values, but also the condition number and the estimated error range.

\paragraph{Operational summary.}
CBV valuation is computationally light for small and medium-sized networks, and scalable to large networks with iterative methods. Tolerances on missing data and error bounds can be standardized as an integral part of disclosure.

\begin{table}[H]
\centering
\begin{tabularx}{\textwidth}{>{\raggedright\arraybackslash}p{3cm} 
                              >{\raggedright\arraybackslash}X 
                              >{\raggedright\arraybackslash}X 
                              >{\raggedright\arraybackslash}X}
\toprule
\textbf{Regime} & \textbf{Computational cost} & \textbf{Scalability} & \textbf{Practical notes} \\
\midrule
Regime A: direct cut & $O(|P||O|)$ (matrix--vector products) & Excellent for small and medium networks ($10^2$--$10^4$ nodes) & Immediate computation, useful for ex-post audit or static scenarios. \\
\midrule
Regime B: direct (LU/Cholesky) & $O(|P|^3)$ & Feasible up to $|P|\sim 10^4$ with standard hardware & Provides exact solution; less suitable for very large sparse networks. \\
\midrule
Regime B: iterative (Jacobi, Gauss--Seidel, GMRES) & $O(k\,|E|)$ with $k$ steps and $|E|$ observed edges & Scalable to very large networks ($>10^5$ nodes) if $O_{PP}$ is sparse & Requires stopping rule and preconditioning; accuracy controlled by $\rho(O_{PP})$ and condition number. \\
\bottomrule
\end{tabularx}
\caption{Comparative summary of computational costs and implementation strategies for CBV regimes.}
\end{table}

\begin{figure}[H]
\centering
\begin{tikzpicture}[
  node distance=2.8cm,
  box/.style={rectangle, draw, rounded corners, minimum width=3.6cm, minimum height=1.2cm, align=center, font=\small},
  arr/.style={-{Latex[length=3mm,width=2mm]}, thick}
]

\node[box, fill=blue!10] (A) {Regime A \\ Direct cut};
\node[box, fill=orange!10, right=of A] (Bdir) {Regime B \\ Direct (LU/Cholesky)};
\node[box, fill=green!10, right=of Bdir] (Bit) {Regime B \\ Iterative (Jacobi/GMRES)};

\draw[arr] (A.east) -- (Bdir.west) node[midway, above, font=\scriptsize]{+ accuracy– cost $O(|P|^3)$};
\draw[arr] (Bdir.east) -- (Bit.west) node[midway, above, font=\scriptsize]{\shortstack{+ scalability \\ – iterative tuning}};

\node[below=0.6cm of A, font=\scriptsize, align=center] {Fast, useful for small/medium networks};
\node[below=0.6cm of Bdir, font=\scriptsize, align=center] {Exact solution, limit at $|P|\sim 10^4$};
\node[below=0.6cm of Bit, font=\scriptsize, align=center] {Scalable $>10^5$ nodes, requires stop rule};

\end{tikzpicture}
\caption{Trade-off scheme between Regime A, Regime B direct and Regime B iterative: from speed to scalability with trade-offs on cost and numerical stability.}
\end{figure}
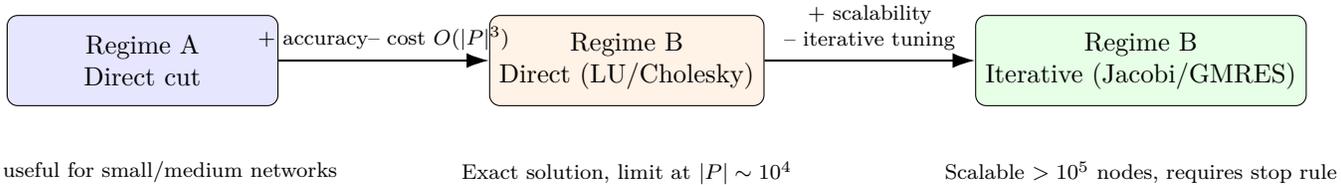

\subsection{Governance and versioning}\label{subsec:governance}
\paragraph{Versioning.} Use semantic versioning for the measurement protocol (e.g.\ \texttt{CBV-1.2.0}). Increment major/minor/patch when components $\Omega$, algorithms or tolerances change.
\paragraph{Change control.} Any modification to $\Omega$ requires: (i) updated PoV, (ii) drafting of a Change Log, (iii) recalculation and archiving of Cut Summaries, (iv) deprecation policy for old series.
\paragraph{Reproducibility.} Archive PoV/Cut Summary JSON and inputs; provide checksums and environment hashes; attach unit tests of invariants (Section~\ref{sec:algorithms}).

\subsection{Implementation roadmap}\label{subsec:roadmap}
\begin{enumerate}
  \item \textbf{Scoping:} define use case and observer $\Omega$; map datasets; decide Regime A or B.
  \item \textbf{Data mapping:} build cut extraction pipeline; schema validation vs JSON specifications.
  \item \textbf{Prototype:} implement Regime A first; add Regime B only if necessary; generate PoV/Cut Summary.
  \item \textbf{Pilot:} run end-to-end with audit checks; compare Gross vs Net; collect feedback.
  \item \textbf{Production:} harden pipeline; monitoring; SLA; release cadence and revision policy.
  \item \textbf{Scale-out:} add sensitivity dashboard for $\mathcal C$, Units/FX/PPP, SDF; publish open data.
\end{enumerate}

\subsection{Do \& Don't table}\label{subsec:dodont}
\begin{table}[H]\centering
\caption{Dos and Don’ts (operational summary).}
\begin{tabularx}{\linewidth}{@{}l X X@{}}
\toprule
\textbf{Area} & \textbf{Do} & \textbf{Don't} \\
\midrule
Observer $\Omega$ & Make PoV explicit with $P$, Basis, Units/PPP/SDF, $\mathcal I$, $\mathcal C$ & Hide/alter $\Omega$ between releases without versioning \\
Regime choice & Prefer Regime A if the border is observable & Reconstruct the interior if not needed for valuation \\
Cut data & Validate and checksum frontier edges & Mix units/dates without conversion \\
Regime B & Verify $\rho(O_{PP})<1$, solver log & Ignore convergence diagnostics \\
Control rule & Report A/B/C and parameters ($\tau,\alpha$) & Hard-code opaque rules \\
Rounding/imputation & Make thresholds and imputations explicit & Omit material edges silently \\
Audit trail & Provide JSON + code for recalculation & Share only summary numbers \\
\bottomrule
\end{tabularx}
\end{table}

\subsection{Risks and controls}\label{subsec:risks}
\begin{itemize}
  \item \textbf{Model risk}: mis-specified $\Omega$, wrong control rule; \emph{control}: two-level review of PoV.
  \item \textbf{Data risk}: outdated FX/PPP, missing edges; \emph{control}: freshness checks, imputation flags.
  \item \textbf{Numerical risk}: $(I-O_{PP})$ nearly singular; \emph{control}: guardrails on spectral radius, damping, preconditioning.
  \item \textbf{Process risk}: undocumented revisions; \emph{control}: versioning and change log.
\end{itemize}

\subsection{Open tooling and operational replicability}\label{subsec:tooling0}
\noindent
To foster adoption and replicability, the CBV framework can be accompanied by an open–source software package structured in three components:

\paragraph{Library.}
A modular library (e.g.\ in Python or Julia) that implements:
\begin{itemize}
    \item functions to build perimeter $\Omega$ and generate cut matrices $(X_{PO},X_{OP})$;
    \item routines for CBV computation in Regime~A and Regime~B (direct/iterative);
    \item tools for error bounds and automated disclosure (Cut–Report in PDF/CSV).
\end{itemize}

\paragraph{CLI (Command Line Interface).}
A command line interface that allows:
\begin{itemize}
    \item direct import of SNA/ESA/BEA tables in CSV or SDMX format;
    \item execution \texttt{cbv compute --pov file.yaml --cut data.csv};
    \item automatic export of Cut–Report (perimeter + summary) ready for audit.
\end{itemize}

\paragraph{Demonstration notebook.}
A Jupyter notebook “\emph{from raw SNA tables to CBV}” that:
\begin{enumerate}
    \item starts from national input–output tables or raw sectoral accounts,
    \item builds $\Omega$ and $(X_{PO},X_{OP})$,
    \item computes consolidated CBV step by step,
    \item produces interactive charts and final disclosure.
\end{enumerate}

\paragraph{Added value.}
These tools lower the technical barrier for policymakers and statistical institutions, providing a replicable pipeline from raw input to final disclosure.

\subsection{Joint methodological note and institutional channel}\label{subsec:joint-note0}
\noindent
A key step for institutional adoption is the co-drafting of a \emph{joint methodological note} with a Statistical Office or Supervisory Authority. 

\paragraph{Objective.}
To formalize in a brief and normative document:
\begin{itemize}
    \item the definition of the Perimeter of Validity (PoV) as an official disclosure object;
    \item the link with existing manuals (ESA~2010, SNA~2008, IFRS 10/12);
    \item examples of Cut–Report with institutional data (e.g.\ Government+SOEs, banking groups).
\end{itemize}

\paragraph{Institutional partners.}
Natural candidates for co-signature include:
\begin{itemize}
    \item \textbf{National statistical offices} (e.g.\ ISTAT, BEA, INSEE);
    \item \textbf{Eurostat} and \textbf{OECD}, for cross-country validation;
    \item \textbf{Central banks} and \textbf{Supervisory authorities} (ECB, EBA, SEC).
\end{itemize}

\paragraph{Value of co-signature.}
A joint note ensures:
\begin{enumerate}
    \item \textbf{methodological credibility}, validated by a neutral body;
    \item \textbf{adoption channel}, through inclusion in manuals and official reporting;
    \item \textbf{standardization}, via shared check-lists and templates.
\end{enumerate}

\paragraph{Perspective.}
The proposal is to launch a joint “CBV Task Force” with academia and institutions, aiming to publish a first methodological note within two years.

\newpage
\clearpage
\appendix

\section{Open tooling and operational replicability}\label{subsec:tooling}
\noindent
To foster adoption and replicability, the CBV framework can be accompanied by an open–source software package structured in three components:

\paragraph{Library.}
A modular library (e.g.\ in Python or Julia) that implements:
\begin{itemize}
    \item functions to build the perimeter $\Omega$ and generate cut matrices $(X_{PO},X_{OP})$;
    \item routines for CBV computation in Regime~A and Regime~B (direct/iterative);
    \item tools for error bounds and automated disclosure (Cut–Report in PDF/CSV).
\end{itemize}

\paragraph{CLI (Command Line Interface).}
A command line interface that allows:
\begin{itemize}
    \item direct import of SNA/ESA/BEA tables in CSV or SDMX format;
    \item execution \texttt{cbv compute --pov file.yaml --cut data.csv};
    \item automatic export of the Cut–Report (perimeter + summary) ready for audit.
\end{itemize}

\paragraph{Demonstration notebook.}
A Jupyter notebook “\emph{from raw SNA tables to CBV}” that:
\begin{enumerate}
    \item starts from national input–output tables or raw sectoral accounts,
    \item builds $\Omega$ and $(X_{PO},X_{OP})$,
    \item computes consolidated CBV step by step,
    \item produces interactive charts and final disclosure.
\end{enumerate}

\paragraph{Added value.}
These tools lower the technical barrier for policymakers and statistical institutions, providing a replicable pipeline from raw input to final disclosure.

\section{Joint methodological note and institutional channel}\label{subsec:joint-note}
\noindent
A key step for institutional adoption is the co–drafting of a \emph{joint methodological note} together with a Statistical Office or a Supervisory Authority. 

\paragraph{Objective.}
To formalize in a short and normative document:
\begin{itemize}
    \item the definition of the Perimeter of Validity (PoV) as an official disclosure object;
    \item the link with existing manuals (ESA~2010, SNA~2008, IFRS 10/12);
    \item examples of Cut–Report with institutional data (e.g.\ Government+SOEs, banking groups).
\end{itemize}

\paragraph{Institutional partners.}
Natural candidates for co-signature include:
\begin{itemize}
    \item \textbf{National statistical offices} (e.g.\ ISTAT, BEA, INSEE);
    \item \textbf{Eurostat} and \textbf{OECD}, for cross-country validation;
    \item \textbf{Central banks} and \textbf{Supervisory authorities} (ECB, EBA, SEC).
\end{itemize}

\paragraph{Value of co-signature.}
A joint note ensures:
\begin{enumerate}
    \item \textbf{methodological credibility}, because validated by a neutral body;
    \item \textbf{adoption channel}, thanks to inclusion in manuals and official reporting;
    \item \textbf{standardization}, through shared check–lists and templates.
\end{enumerate}

\paragraph{Perspective.}
The proposal is to launch a joint “CBV Task Force” between academia and institutions, aiming to publish a first methodological note within two years.

\section{Technical appendix: existence conditions, non-linear extensions and Schur complement}
\label{app:tech}

This appendix provides (i) existence/uniqueness and stability conditions for Regime~B as a function of $O_{PP}$ (with sufficient conditions in terms of spectral radius and induced norms), (ii) a piecewise-linear formalization with \emph{seniority} priority and operational connection to \emph{clearing} engines à la Rogers--Veraart, and (iii) a block algebra note via the \emph{Schur complement} making the internal cancellation explicit.

\subsection{Conditions on \texorpdfstring{$O_{PP}$}{OPP}: existence, uniqueness, stability}
\label{app:condizioni}

Recall that in Regime~B the internal valuations are obtained as
\begin{equation}
\label{eq:regimeB-core}
\bm v_P \;=\; (I - O_{PP})^{-1}\,\big(\bm b_P + O_{PO}\,\bm v_O\big),
\end{equation}
when the inverse exists. The following conditions are standard and sufficient.

\begin{lemma}[Invertibility and Neumann series]
\label{lem:neumann}
If the spectral radius $\rho(O_{PP})<1$, then $I-O_{PP}$ is invertible and the expansion holds
\[
(I - O_{PP})^{-1}\;=\;\sum_{t=0}^{\infty} O_{PP}^t \quad\text{(convergence in norm).}
\]
\end{lemma}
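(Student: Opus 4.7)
The plan is to exhibit the Neumann series as the explicit inverse via the standard telescoping identity, once norm-convergence of the partial sums is secured. First I would fix any submultiplicative matrix norm $\|\cdot\|$ on $\mathbb{R}^{|P|\times|P|}$; in finite dimension the space of matrices is a Banach algebra, so norm-convergence and convergence entrywise coincide and the choice of norm is immaterial for the final conclusion.

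The crucial technical step is to translate the spectral hypothesis $\rho(O_{PP})<1$ into a summable bound on the powers. For this I would invoke Gelfand's formula $\rho(A)=\lim_{t\to\infty}\|A^t\|^{1/t}$. Picking any $r$ with $\rho(O_{PP})<r<1$, Gelfand's identity furnishes a threshold $T$ such that $\|O_{PP}^t\|\le r^t$ for all $t\ge T$. Consequently $\sum_{t\ge 0}\|O_{PP}^t\|$ is dominated by a geometric tail and is finite, so the partial sums $S_N:=\sum_{t=0}^{N} O_{PP}^t$ are Cauchy and converge in norm to some limit matrix $S$.

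To identify $S$ with $(I-O_{PP})^{-1}$ I would use the two-sided telescoping identity $(I-O_{PP})\,S_N \;=\; I - O_{PP}^{N+1} \;=\; S_N\,(I-O_{PP})$, which holds for every finite $N$ by direct expansion. Since $\|O_{PP}^{N+1}\|\le r^{N+1}\to 0$, passing to the limit on both sides yields $(I-O_{PP})\,S=I=S\,(I-O_{PP})$, which simultaneously proves that $I-O_{PP}$ is invertible and that its inverse equals $\sum_{t=0}^{\infty}O_{PP}^t$.

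The only genuinely non-trivial point, in what is otherwise a textbook argument, is the appeal to Gelfand's formula: the spectral radius is not itself a norm, so the passage from $\rho(O_{PP})<1$ to a summable upper bound on $\|O_{PP}^t\|$ must be justified rather than merely asserted. An alternative that avoids Gelfand entirely would be to pass to Jordan normal form and control $\|O_{PP}^t\|$ block-by-block through polynomial-times-geometric bounds in the eigenvalues, but this route is heavier and offers no benefit here given the clean appeal to the spectral-radius identity.
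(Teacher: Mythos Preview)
Your argument is correct. The only difference from the paper's sketch is the auxiliary fact you invoke to control $\|O_{PP}^t\|$: the paper appeals to the existence of an \emph{induced} norm $\|\cdot\|$ with $\|O_{PP}\|<1$ (so that $\|O_{PP}^t\|\le\|O_{PP}\|^t$ is immediate), whereas you keep an arbitrary submultiplicative norm and use Gelfand's formula to get $\|O_{PP}^t\|\le r^t$ for $t$ large. Both routes are standard and essentially equivalent in depth---indeed, the norm-existence result the paper cites is typically proved \emph{via} Gelfand's formula or Jordan form---so neither buys a real simplification over the other. Your version has the minor advantage of being self-contained once Gelfand is granted, and of making the telescoping identity $(I-O_{PP})S_N=I-O_{PP}^{N+1}$ explicit, which the paper leaves implicit in the phrase ``convergence of the geometric series follows.''
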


\begin{proof}[Sketch]
$\rho(O_{PP})<1$ implies existence of an induced norm $\|\cdot\|$ such that $\|O_{PP}\|<1$; convergence of the geometric series in $\mathcal L(\R^{|P|})$ follows.
\end{proof}

\begin{proposition}[Practical sufficient conditions]
\label{prop:sufficienti}
All of the following are sufficient for $\rho(O_{PP})<1$:
\begin{enumerate}[label=(\roman*)]
\item $\|O_{PP}\|_{p\to p}<1$ for some induced norm ($p\in\{1,2,\infty\}$); in particular
\begin{align*}
\|O_{PP}\|_{1} &= \max_{j}\sum_i |(O_{PP})_{ij}|,\\
\|O_{PP}\|_{\infty} &= \max_{i}\sum_j |(O_{PP})_{ij}|,\\
\|O_{PP}\|_{2} &= \sigma_{\max}(O_{PP})
\end{align*}

\item Generalized diagonal dominance in the sense of Gershgorin: $\sum_{j\ne i}|(O_{PP})_{ij}|<1$ for each $i$.
\item (Non-negative case, \emph{Perron–Frobenius}) If $O_{PP}\ge 0$ and $\max_i\sum_j (O_{PP})_{ij} < 1$, then $\rho(O_{PP})\le \|O_{PP}\|_{\infty}<1$.
\end{enumerate}
In each case, $I-O_{PP}$ is invertible and \eqref{eq:regimeB-core} is well-posed.
\end{proposition}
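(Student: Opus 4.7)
The proposition packages three familiar sufficient conditions for $\rho(O_{PP}) < 1$: an operator--norm bound, a Gershgorin--type row estimate, and a Perron--Frobenius argument for non-negative matrices. My plan is to prove each case independently, since they rest on distinct but classical ingredients, and then to invoke Lemma~\ref{lem:neumann} at the end to transfer the spectral bound into the invertibility of $I - O_{PP}$ and the well-posedness of \eqref{eq:regimeB-core}.

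The skeleton of (i) is the standard inequality $\rho(A) \le \|A\|$ valid for every operator norm induced by a vector norm. First I would give the one-line eigenvalue/eigenvector argument: if $\lambda$ is an eigenvalue of $O_{PP}$ with normalized eigenvector $v$, then $|\lambda| = \|\lambda v\|_p = \|O_{PP} v\|_p \le \|O_{PP}\|_{p \to p}$. Specializing to $p \in \{1, 2, \infty\}$ yields the three closed forms stated in the proposition (max column sum, largest singular value, max row sum), which I would quote as standard rather than rederive. Case (iii) is then essentially a corollary: for $O_{PP} \ge 0$ the absolute values inside $\|O_{PP}\|_\infty = \max_i \sum_j |(O_{PP})_{ij}|$ collapse, so the hypothesis is exactly $\|O_{PP}\|_\infty < 1$ and the $p=\infty$ branch of (i) applies. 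I would add a brief reminder that by Perron--Frobenius $\rho(O_{PP})$ is itself an eigenvalue with a non-negative eigenvector, so the bound admits a sharp Collatz--Wielandt reading.

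For (ii), the plan is to apply Gershgorin's circle theorem: every eigenvalue $\lambda$ of $O_{PP}$ lies in $\bigcup_i D\big((O_{PP})_{ii}, r_i\big)$ with $r_i = \sum_{j \ne i}|(O_{PP})_{ij}|$. In the ownership-graph setting of Section~\ref{sec:assumptions}, the convention is that entities do not hold their own equity (treasury shares are treated separately), so $(O_{PP})_{ii} = 0$ and each Gershgorin disk is centered at the origin with radius $r_i < 1$, delivering $|\lambda| < 1$ immediately. I would state this convention explicitly inside the proof so the reader sees why the off-diagonal sum alone suffices.

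The only real obstacle is the diagonal subtlety in (ii): without the self-ownership convention, $\sum_{j \ne i}|(O_{PP})_{ij}| < 1$ does not by itself imply $\rho(O_{PP}) < 1$, because nonzero diagonal entries shift the Gershgorin disks away from the origin and can push eigenvalues outside the unit disk. I would flag this explicitly and resolve it in one of two ways: either (a) invoke Assumption~(ii) of Section~\ref{sec:assumptions} to justify $(O_{PP})_{ii} = 0$, or (b) remark that, read as a full--row--sum condition, (ii) reduces to the $p=\infty$ case of (i) and is therefore subsumed. Once this point is settled, each branch of the proposition closes the loop by invoking Lemma~\ref{lem:neumann}, which converts $\rho(O_{PP}) < 1$ into invertibility of $I - O_{PP}$ and the Neumann representation \eqref{eq:regimeB-core}.
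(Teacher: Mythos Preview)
Your proposal is correct and in fact more careful than the paper, which states the proposition without any proof, treating all three items as standard facts from matrix analysis. Your arguments for (i) and (iii) are the textbook ones and need no change; your observation that (iii) is simply the $p=\infty$ instance of (i) once non-negativity removes the absolute values is exactly right.

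The diagonal subtlety you flag in (ii) is a genuine point the paper does not address: as written, $\sum_{j\ne i}|(O_{PP})_{ij}|<1$ alone is insufficient unless one knows the Gershgorin centers $(O_{PP})_{ii}$ vanish (or at least stay strictly inside the unit disk minus the radius). Your resolution (a), invoking the treasury-shares convention of Section~\ref{sec:assumptions} so that $(O_{PP})_{ii}=0$, is the right fix in this paper's context and should be stated explicitly in the proof; resolution (b) is weaker because it changes the hypothesis rather than proving what is claimed. With that caveat made explicit, your proof closes cleanly via Lemma~\ref{lem:neumann}.
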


\begin{proposition}[Useful bounds for stability]
\label{prop:stability-bounds}
If $\|O_{PP}\|_{p\to p}<1$, then
\[
\|(I-O_{PP})^{-1}\|_{p\to p} \;\le\; \frac{1}{\,1-\|O_{PP}\|_{p\to p}\,}.
\]
Moreover, for the border transfer operator $T_{PO}:=(I-O_{PP})^{-1}O_{PO}$ we have
\[
\|T_{PO}\|_{p\to p} \;\le\; \frac{\|O_{PO}\|_{p\to p}}{\,1-\|O_{PP}\|_{p\to p}\,}.
\]
\end{proposition}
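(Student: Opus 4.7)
}
My plan is to leverage Lemma \ref{lem:neumann} directly and then close the argument with submultiplicativity of induced operator norms. The hypothesis $\|O_{PP}\|_{p\to p}<1$ implies $\rho(O_{PP})\le \|O_{PP}\|_{p\to p}<1$, so the lemma guarantees invertibility of $I-O_{PP}$ and the norm-convergent Neumann expansion $(I-O_{PP})^{-1}=\sum_{t\ge 0} O_{PP}^{t}$. From here, the first inequality is a one-line geometric-series estimate once we apply the triangle inequality and the submultiplicativity $\|O_{PP}^t\|_{p\to p}\le \|O_{PP}\|_{p\to p}^{\,t}$ of the induced norm.

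Concretely, I would write, for any $x$ with $\|x\|_p=1$,
\[
\|(I-O_{PP})^{-1}x\|_p \;\le\; \sum_{t=0}^{\infty}\|O_{PP}^{t}x\|_p \;\le\; \sum_{t=0}^{\infty}\|O_{PP}\|_{p\to p}^{\,t}
\;=\; \frac{1}{1-\|O_{PP}\|_{p\to p}},
\]
taking the supremum over $x$ to obtain the first claim. Note that the only analytic input is convergence of the geometric series, which holds precisely because $\|O_{PP}\|_{p\to p}<1$.

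For the second bound, I would invoke submultiplicativity of induced norms under composition: since $T_{PO}=(I-O_{PP})^{-1}\circ O_{PO}$ and induced $p\to p$ norms satisfy $\|AB\|_{p\to p}\le \|A\|_{p\to p}\,\|B\|_{p\to p}$ for conformable operators, one has
\[
\|T_{PO}\|_{p\to p} \;\le\; \|(I-O_{PP})^{-1}\|_{p\to p}\,\|O_{PO}\|_{p\to p}
\;\le\; \frac{\|O_{PO}\|_{p\to p}}{1-\|O_{PP}\|_{p\to p}},
\]
where the second inequality substitutes the bound just established. The only subtlety worth flagging is that $O_{PO}$ maps $\mathbb{R}^{|O|}\to\mathbb{R}^{|P|}$, so $\|O_{PO}\|_{p\to p}$ is the operator norm between $\ell^p$ spaces of possibly different dimensions; this is harmless, since submultiplicativity still applies to the composition.

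I do not expect a substantive obstacle: the result is a textbook consequence of the Neumann bound together with submultiplicativity, and all the heavy lifting (existence of the inverse and the series representation) is already packaged in Lemma \ref{lem:neumann}. The only care required is notational—being explicit that the inequalities hold for any induced norm in the family $p\in\{1,2,\infty\}$ used elsewhere in the paper, and that the rectangular nature of $O_{PO}$ does not affect the argument.
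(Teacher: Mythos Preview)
Your argument is correct and is the canonical one: Neumann series plus triangle inequality and submultiplicativity for the first bound, then submultiplicativity of the composition for $T_{PO}$. The paper actually states Proposition~\ref{prop:stability-bounds} without proof, treating it as a standard consequence of Lemma~\ref{lem:neumann}; your write-up fills that gap exactly as intended, and your remark on the rectangular shape of $O_{PO}$ is a useful clarification.
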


\begin{corollary}[Stability of $W(P)$ in Regime~B]
\label{cor:stability-W}
Under the assumptions of Prop.~\ref{prop:sufficienti}, the robustness bounds of Sec.~\ref{subsec:robust-bounds} remain valid by replacing internal contributions with the bound of Prop.~\ref{prop:stability-bounds}. In particular, errors on $\bm v_O$ and $\bm b_P$ are amplified at most by $\|(I-O_{PP})^{-1}\|_{p\to p}$.
\end{corollary}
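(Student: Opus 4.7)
The plan is to reduce the Regime~B case to the Regime~A bound plus a single additional term that accounts for the perturbation propagated through the internal inverse. Starting from the representation \eqref{eq:cbv-repr-main}, a joint perturbation of $\bm b_P$, $\bm v_O$, and $\bm v_P$ produces
\[
\Delta W(P) \;=\; \mathbf{1}_P^\top \Delta \bm b_P \;+\; \mathbf{1}_P^\top O_{PO}\,\Delta\bm v_O \;-\; \mathbf{1}_O^\top O_{OP}\,\Delta\bm v_P.
\]
In Regime~A one sets $\Delta\bm v_P=0$ and recovers Theorem~\ref{thm:robust-main}; in Regime~B, Lemma~\ref{lem:neumann} (applicable by Prop.~\ref{prop:sufficienti}) guarantees $I-O_{PP}$ is invertible, so differentiating \eqref{eq:regimeB-core} gives the linear propagation $\Delta\bm v_P = (I-O_{PP})^{-1}(\Delta\bm b_P + O_{PO}\,\Delta\bm v_O)$.

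Next, I would estimate the extra term by Hölder's inequality with the dual pair $(p,q)$. Setting $\bm\delta := O_{OP}^\top \mathbf{1}_O$, one has
\[
|\mathbf{1}_O^\top O_{OP}\,\Delta\bm v_P| \;=\; |\bm\delta^\top \Delta\bm v_P| \;\le\; \|\bm\delta\|_q\,\|\Delta\bm v_P\|_p.
\]
Submultiplicativity of the induced norm $\|\cdot\|_{p\to p}$ together with the triangle inequality gives $\|\Delta\bm v_P\|_p \le \|(I-O_{PP})^{-1}\|_{p\to p}\bigl(\eta + \|O_{PO}\|_{q\leftarrow p}\,\varepsilon\bigr)$, which is precisely the internal amplification announced. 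Adding this to the Regime~A estimate from Theorem~\ref{thm:robust-main} reproduces exactly the displayed Regime~B extension in Sec.~\ref{subsec:robust-bounds}.

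Finally, Proposition~\ref{prop:stability-bounds} converts the implicit amplification factor into a concrete constant via $\|(I-O_{PP})^{-1}\|_{p\to p} \le 1/(1-\|O_{PP}\|_{p\to p})$, so the qualitative claim ``errors on $\bm v_O$ and $\bm b_P$ are amplified at most by $\|(I-O_{PP})^{-1}\|_{p\to p}$'' follows at once, since this factor multiplies both the $\eta$ and $\varepsilon$ contributions coming through $\Delta\bm v_P$.

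The main obstacle I anticipate is not algebraic but bookkeeping: keeping the three norm families aligned, namely the vector pair $(p,q)$ used in Hölder, the mixed operator norm $\|O_{PO}\|_{q\leftarrow p}$ inherited from Theorem~\ref{thm:robust-main}, and the self-induced norm $\|O_{PP}\|_{p\to p}$ controlling invertibility. A cleaner presentation may require stating the result separately for $p\in\{1,2,\infty\}$ (as in the practical corollaries of Sec.~\ref{subsec:robust-bounds}), so that the ``cut geometry'' interpretation of $\bm\delta$ as the vector of direct external minorities on $P$ remains transparent and the amplification factor has an immediately computable meaning in each norm.
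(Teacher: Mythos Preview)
Your proposal is correct and follows essentially the same route as the paper: perturb the CBV formula \eqref{eq:cbv-repr-main} to isolate the extra term $-\bm\delta^\top\Delta\bm v_P$ with $\bm\delta=O_{OP}^\top\mathbf{1}_O$, propagate $\Delta\bm v_P$ through $(I-O_{PP})^{-1}$ via \eqref{eq:regimeB-core}, and then apply H\"older plus submultiplicativity exactly as in the paper's Regime~B extension in Section~\ref{subsec:robust-bounds} and Appendix~\ref{app:cbv-robustness}. The paper gives no separate proof of the corollary beyond that sketch, so your expansion---including the final appeal to Proposition~\ref{prop:stability-bounds} for the explicit amplification constant---matches it in both structure and detail.
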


\paragraph{Pathological cases and regularization.}
If $\rho(O_{PP})\ge 1$ (``strong'' cycles or excessive leverage), Regime~B is not well-posed. Two strategies:  
(i) \emph{reparametrization} of the perimeter (moving problematic nodes into $O$ or consolidating them),  
(ii) \emph{damping} $O_{PP}^{(\alpha)}:=\alpha\,O_{PP}$ with $\alpha\in(0,1)$ and disclosure of the adjustment.

\paragraph{Limit conditions and near-singularity.}
The constraint $\rho(O_{PP})<1$ ensures convergence of the Neumann series and hence existence/uniqueness of the consolidated value. However, scenarios in which $\rho(O_{PP}) \uparrow 1$ or in which the ownership structure generates a nearly singular matrix pose practical problems.

\begin{itemize}
  \item \textbf{Numerical near-singularity.} When dominant eigenvalues of $O_{PP}$ approach $1$, the matrix $(I - O_{PP})$ becomes ill-conditioned: even small perturbations in accounting data produce large variations in the consolidated value. Numerically, the condition number $\kappa(I-O_{PP})$ grows rapidly, reducing computational reliability.
  \item \textbf{Stability.} In these regions, CBV estimates become highly sensitive to measurement errors or missing data. Consolidated values should be accompanied by \emph{uncertainty intervals}, obtained for instance from Monte Carlo perturbations on $O_{PP}$ coefficients.
  \item \textbf{Regularization.} To ensure robustness, additive regularizations can be introduced of the type
  \[
  (I - O_{PP})^{-1} \;\;\leadsto\;\; (I - O_{PP} + \varepsilon I)^{-1}, \quad \varepsilon > 0,
  \]
  which mitigate instability near the boundary $\rho(O_{PP})=1$. This is equivalent to introducing a \emph{structural tolerance} or a \emph{systematic haircut} on holdings.
  \item \textbf{Economic interpretation.} A $\rho(O_{PP})$ close to 1 reflects extremely intertwined networks, where each unit is almost entirely financed by cross-holdings. In such cases, the consolidated value is less ``informative'' relative to real exposures and a prudential approach becomes necessary.
\end{itemize}

\paragraph{Disclosure.}
For audit reasons, it is recommended that each Cut-Report include: \\(i) the estimated value of $\rho(O_{PP})$; \\(ii) the condition number of $(I - O_{PP})$; \\(iii) any regularizations adopted and the width of uncertainty intervals. This allows the reader to assess the reliability of the consolidated value near limit conditions.

\paragraph{Mini–numerical example ($2\times 2$): conditioning and uncertainty.}
Consider a sub-perimeter $P=\{p_1,p_2\}$ with symmetric cross-holdings of intensity $t\in[0,1)$:
\[
O_{PP}(t)=\begin{bmatrix}
0 & t\\
t & 0
\end{bmatrix},\qquad 
\rho\!\bigl(O_{PP}(t)\bigr)=t.
\]
Thus
\[
I-O_{PP}(t)=\begin{bmatrix}
1 & -t\\
-t & 1
\end{bmatrix},\quad 
(I-O_{PP}(t))^{-1}=\frac1{1-t^2}\begin{bmatrix}
1 & t\\
t & 1
\end{bmatrix}.
\]
Since $I-O_{PP}(t)$ is symmetric positive definite, the condition number in 2-norm is
\[
\kappa_2\bigl(I-O_{PP}(t)\bigr)=\frac{\lambda_{\max}}{\lambda_{\min}}=\frac{1+t}{\,1-t\,}.
\]
Let $\bm v_P=(100,\,50)^\top$ be the vector of observed \emph{primitives}. The resulting consolidated value is
\[
\bm x_P(t)=(I-O_{PP}(t))^{-1}\bm v_P
=\frac1{1-t^2}\begin{bmatrix}1 & t\\ t & 1\end{bmatrix}\!\!\begin{bmatrix}100\\ 50\end{bmatrix}
=\frac1{1-t^2}\begin{bmatrix}100+50t\\ 50+100t\end{bmatrix}.
\]
\paragraph{Two scenarios compared.}
We show sensitivity near the boundary $\rho(O_{PP})\uparrow 1$.
\begin{table}[H]
\centering
\begin{tabular}{lcccccc}
\toprule
Scenario & $t$ & $\rho(O_{PP})$ & $\kappa_2(I-O_{PP})$ & $x_{p_1}(t)$ & $x_{p_2}(t)$ & $x_{\text{tot}}(t)$ \\
\midrule
A (moderate) & $0.80$ & $0.80$ & $9.0$ & $388.8889$ & $361.1111$ & $750.0000$ \\
B (near–sing.) & $0.99$ & $0.99$ & $\approx 199.0$ & $7512.5628$ & $7487.4372$ & $15000.0000$ \\
\bottomrule
\end{tabular}
\caption{Conditioning and consolidated output as $t$ varies.}
\end{table}

\paragraph{Uncertainty intervals via small perturbation.}
If the data of $O_{PP}$ have uncertainty of amplitude $\pm 0.01$ on $t$, we obtain output bands:
\begin{itemize}
  \item \textbf{Around $t=0.80$:} with $t\in\{0.79,\,0.81\}$ we have
  \[
  x_{\text{tot}}(0.79)=714.2857,\qquad x_{\text{tot}}(0.81)=789.4737,
  \]
  i.e.\ a variation of about $\pm 5.1\%$ on the total for a relative error in $t$ of $\pm 1.25\%$ (amplification $\sim 4\times$), consistent with $\kappa_2=9$.
  \item \textbf{Around $t=0.99$:} to avoid exceeding the unit boundary we consider $t\in\{0.98,\,0.99\}$:
  \[
  x_{\text{tot}}(0.98)=7500.0000,\qquad x_{\text{tot}}(0.99)=15000.0000,
  \]
  that is, a \emph{halving/doubling} of the total for an absolute variation of $0.01$ in $t$ (relative error $\sim 1\%$), in line with $\kappa_2\simeq 199$.
\end{itemize}

\paragraph{Operational reading.}
As $t\to 1^-$ (i.e.\ $\rho(O_{PP})\uparrow 1$), the system becomes ill-conditioned: $x_{\text{tot}}(t)$ explodes and small uncertainties in the network coefficients translate into large output bands. In such cases the Cut-Report must:
(i) report $\rho(O_{PP})$ and $\kappa_2$; 
(ii) accompany the point estimate with an interval (e.g.\ via Monte Carlo perturbations on $t$); 
(iii) indicate any regularizations (e.g.\ $(I-O_{PP}+\varepsilon I)^{-1}$) applied to stabilize the estimate.

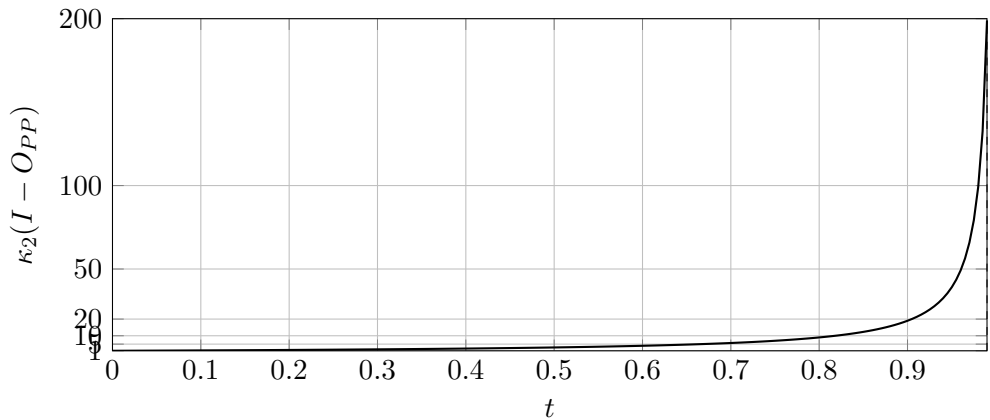
\begin{figure}[H]
\centering
\begin{tikzpicture}
  \begin{axis}[
    width=0.8\linewidth,
    height=6cm,
    xlabel={$t$},
    ylabel={$\kappa_2\!\left(I-O_{PP}\right)$},
    xmin=0, xmax=0.99,
    ymin=1, ymax=200,
    ytick={1,5,10,20,50,100,200},
    grid=both,
    domain=0:0.99,
    samples=200
  ]
    \addplot[thick] {(1+x)/(1-x)};

    \addplot[dashed] coordinates {(0.99,1) (0.99,200)};
    \node[anchor=west] at (axis cs:0.99,180) {asymptote $\,t\to1^{-}$};
  \end{axis}
\end{tikzpicture}
\caption{Explosion of $\kappa_2=\frac{1+t}{1-t}$ as $t\to1^{-}$. The growth illustrates the worsening of numerical conditioning near the spectral boundary.}
\end{figure}

\subsection{Piecewise-linear extensions with seniority priority (non-linear clearing)}
\label{app:seniority}

Many financial architectures require a \emph{clearing engine} that resolves payment priorities, recovery, and default costs. A standard formalism (Rogers--Veraart) extends Eisenberg--Noe with default costs. We adapt a minimal and operational scheme to connect it to CBV.

\begin{definition}[Clearing engine with seniority classes]
\label{def:clearing}
Let $L^{(\ell)}\in\R_+^{V\times V}$ be the matrix of gross debts of class $\ell=1,\dots,L$ (1 = highest priority). Let $\bar{\bm p}^{(\ell)}:=L^{(\ell)}\mathbf 1$ be the vector of gross dues per class. Let $\bm a\in\R_+^{V}$ be the vector of gross resources (cash+liquid assets) \emph{pre-clearing}. Let $\gamma^{(\ell)}\in[0,1]$ be the \emph{fractional default costs} per class (possibly class-dependent).

A clearing outcome is a vector of payments $\bm p^{(\ell)}\in[0,\bar{\bm p}^{(\ell)}]$ such that, for each node $i$ and class $\ell$,
\[
p_i^{(\ell)} \;=\; \min\!\Big\{\bar p_i^{(\ell)},\ a_i + \sum_{k=1}^{L}\sum_{j} \big(L^{(k)}_{ji}\, \theta^{(k)}_j\big) \;-\; \sum_{k=1}^{\ell}\gamma^{(k)}_i\big(\bar p_i^{(k)}-p_i^{(k)}\big)\Big\},
\]
where $\theta^{(k)}_j \in[0,1]$ is the \emph{payout ratio} of class $k$ of node $j$; higher classes are satisfied first: $\theta^{(1)}$ is determined first, then $\theta^{(2)}$, etc. The system defines a monotone map $\Phi$ whose fixed points are clearing outcomes.
\end{definition}

\begin{remark}[Existence/Uniqueness and implementation]
Under standard assumptions (non-negative debts, finite resources, costs in $[0,1]$), $\Phi$ is monotone on a complete lattice and admits fixed points (Tarski). In many cases, the outcome is unique; otherwise, one should choose (and disclose) the selection (least/greatest fixed point).
\end{remark}

\begin{proposition}[Conditional invariance of CBV post-clearing]
\label{prop:conditional-invariance}
Let $(\bm p^{(\ell)})_{\ell\le L}$ be a clearing outcome of Def.~\ref{def:clearing}. Define the \emph{post-clearing border net flows}
\[
X^{\text{net}}_{PO} \;:=\; \sum_{\ell=1}^{L}\big(L^{(\ell)}_{PO}\,\Theta^{(\ell)}\big), 
\qquad
X^{\text{net}}_{OP} \;:=\; \sum_{\ell=1}^{L}\big(L^{(\ell)}_{OP}\,\Theta^{(\ell)}\big),
\]
where $\Theta^{(\ell)}:=\operatorname{diag}(\theta^{(\ell)})$ is the diagonal matrix of payout ratios obtained at the fixed point. Then, given the Observer $\Omega$ (units/FX/PPP, SDF), the measure $W(P)$ computed via CBV on the \emph{post-clearing} border statistics $(X^{\text{net}}_{PO}, X^{\text{net}}_{OP}, \tilde{\bm v}_O, \bm b_P)$ is \emph{invariant} with respect to purely internal restructurings of $P$ that do not alter these statistics. 
\end{proposition}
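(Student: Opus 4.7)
The plan is to reduce the statement to a direct application of the Cut Theorem (Theorem~\ref{thm:cutTheoreme}) once the clearing layer has been ``frozen''. The key conceptual move is to treat the clearing engine as a pre-processing step whose sole output, for CBV purposes, is the pair $(X^{\text{net}}_{PO}, X^{\text{net}}_{OP})$ together with (if needed) the induced post-clearing external valuations $\tilde{\bm v}_O$. Once these boundary statistics are fixed, the CBV measurement rule is strictly linear and falls within the axiomatic environment of Sec.~\ref{subsec:axioms-cbv}.

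First, I would make precise the object of invariance. The hypothesis quantifies over internal restructurings of $P$ that leave $(X^{\text{net}}_{PO}, X^{\text{net}}_{OP}, \tilde{\bm v}_O, \bm b_P)$ unchanged; in particular, the fixed point $(\bm p^{(\ell)})_{\ell\le L}$ of the map $\Phi$ is held fixed on the boundary (any internal reshuffling is admissible provided the induced payout ratios $\Theta^{(\ell)}$ and the resulting cross-cut flows coincide). This conditioning is what justifies the word ``conditional'' in the statement: we are not claiming that $W(P)$ is invariant under arbitrary perturbations of $L^{(\ell)}_{PP}$ or of the engine parameters, but only under those that are absorbed by the fixed-point selection without affecting the cut.

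Second, I would apply Theorem~\ref{thm:cutTheoreme} to the post-clearing accounting environment. The hypotheses of the Cut Theorem hold because (i) the post-clearing valuation operator $\mathcal F_\Omega$ is consolidation-consistent and border-linear in $(X^{\text{net}}_{PO}, X^{\text{net}}_{OP}, \bm v_P, \bm v_O)$, since nonlinearities have been absorbed by $\Phi$ and only net additive flows remain on the boundary; and (ii) internal double-entry accounting is preserved by construction, so the representation $\bm v=\bm b+O\bm v$ (or its post-clearing analogue with $\Theta^{(\ell)}$-weighted entries) closes within $P$. Therefore the consolidated value factors through the boundary data, and axiom (C) of Sec.~\ref{subsec:axioms-cbv} yields cut-invariance: any internal reconfiguration of $X^{\text{net}}_{PP}$ compatible with the fixed boundary leaves $W(P)$ unchanged.

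The main obstacle is the implicit coupling between internal topology and the clearing fixed point: in principle, reshuffling internal liabilities alters $\Phi$ and hence could alter $\Theta^{(\ell)}$ on the boundary. I would address this by explicitly restricting the invariance class to reconfigurations that preserve the boundary image of $\Phi$ (this is built into the quantifier ``that do not alter these statistics''). A brief remark should note that when multiple fixed points exist, the disclosed selection rule (least/greatest, as per Remark after Def.~\ref{def:clearing}) must be held constant across the compared configurations; otherwise invariance may fail not because of CBV but because the clearing problem itself has been redefined. With this caveat, the proposition follows immediately from Theorem~\ref{thm:cutTheoreme} applied to the post-clearing boundary tuple.
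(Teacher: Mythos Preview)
Your proposal is correct and follows essentially the same approach as the paper: both argue that the clearing engine absorbs all nonlinearity upstream, leaving additive post-clearing boundary flows to which the Cut Theorem applies directly, so that internal restructurings preserving those boundary statistics cannot change $W(P)$. Your version is in fact more careful than the paper's brief proof idea, since you explicitly flag the coupling between internal topology and the fixed point of $\Phi$, and you note that the fixed-point selection rule must be held constant across compared configurations---points the paper leaves implicit.
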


\begin{proof}[Idea]
The non-linearity is entirely “resolved” in the clearing phase (determination of $\Theta^{(\ell)}$). CBV then applies downstream to additive border quantities (the actual outflows/inflows per class). Any \emph{internal} restructuring that leaves net flows $P\leftrightarrow O$ and observer objects unchanged does not modify $W(P)$ (same argument as the Cut Theorem).
\end{proof}

\paragraph{Operational instruction (pipeline).}
(1) Run the clearing engine; (2) compute $X^{\text{net}}_{PO}, X^{\text{net}}_{OP}$ and consistent $\tilde{\bm v}_O$ values; (3) apply CBV as in Sec.~\ref{sec:scope-limitations}. 

\subsection{Non-linear extensions: piecewise-linear payoff and endogenous risk}\label{subsec:nonlinear-risk}
\noindent
The cut framework is, by construction, linear. However, many financial instruments exhibit non-linear payoffs: implicit or explicit options, contractual covenants, bail-in clauses, default with partial recovery, and state-contingent securities. For these cases we propose an operational extension that maintains cut transparency by introducing \emph{piecewise-linear} (PWL) approximations. 

\paragraph{Formalism.}
Let $y=f(x)$ be a non-linear payoff with respect to an exposure $x$. The PWL rule consists in representing $f$ as a combination of linear segments over disjoint intervals:
\[
f(x) \approx \sum_{j=1}^J \alpha_j \bigl( (x - k_{j-1})^+ - (x - k_j)^+ \bigr),
\]
where $k_j$ are thresholds (cap/floor/covenant trigger) and $\alpha_j$ are local slope coefficients. This decomposition preserves compatibility with the block structure of the cut: each segment can be treated as a conditional linear position.

\paragraph{Typical cases.}
\begin{itemize}
    \item \textbf{Embedded options}: a convertible bond can be modeled as linear bond $+$ PWL call option.
    \item \textbf{Covenant trigger}: leverage or interest coverage ratio threshold implemented as $k_j$, with reduced payoff above threshold.
    \item \textbf{Costly default and bail-in}: recovery rate $R\in[0,1]$ modeled as factor $\alpha_j=R$ after the default trigger.
    \item \textbf{State-contingent claims}: payoff depending on a macro state (e.g.\ GDP), represented as conditional block on $\Omega$.
\end{itemize}

\paragraph{Numerical example.}
Consider a senior bond with nominal value 100 and recovery $R=40\%$. The payoff can be expressed as:
\[
f(x) = 
\begin{cases}
100 & \text{if default does not occur}, \\
40  & \text{if default occurs}.
\end{cases}
\]
Payoff table:
\begin{table}[H]
\centering
\begin{tabular}{c c c}
\toprule
State & Nominal value & Effective payoff \\
\midrule
No default & 100 & 100 \\
Default & 100 & 40 \\
\bottomrule
\end{tabular}
\caption{Mini-example of piecewise-linear payoff with recovery.}
\end{table}

\begin{figure}[H]
\centering
\begin{tikzpicture}[scale=1.0]
  \draw[->] (0,0) -- (6,0) node[right] {$x$ (nominal value or exposure)};
  \draw[->] (0,0) -- (0,5) node[above] {Payoff $f(x)$};

  \draw[thick,blue] (0,4) -- (3,4) node[midway, above] {No default};

  \draw[thick,red] (3,1.6) -- (6,1.6) node[midway, below] {Default, $R=40\%$};

  \draw[dashed,gray] (3,0) -- (3,4);

  \node at (3,-0.3) {Default trigger};
  \node[left] at (0,4) {100};
  \node[left] at (0,1.6) {40};
\end{tikzpicture}
\caption{Piecewise-linear payoff scheme: 100 in the absence of default, 40 in case of default with partial recovery.}
\end{figure}
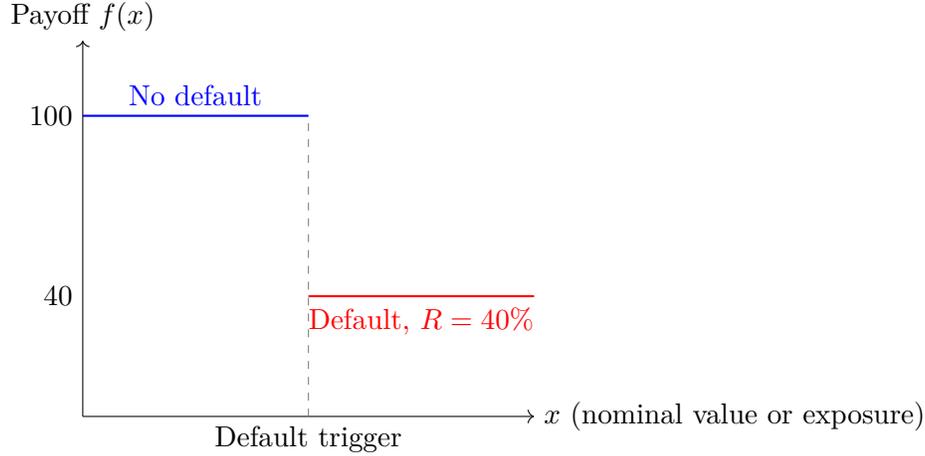

\paragraph{Compatibility with the cut.}
Each segment $[k_{j-1},k_j]$ can be treated as a linear exposure with observer $\Omega$, and the CBV evaluation reduces to evaluating the cut separately for each segment, then aggregating the payoffs. This preserves invariance with respect to the internal network and produces transparent disclosure (PoV and Cut-Report) even in the presence of non-linear payoffs.

\subsection{Non-linear extensions: state-conditioned linearity and coherent aggregation}
\label{subsec:nonlinear-scl}

In this subsection we extend the \emph{cut-based} (CBV) functional to contexts with non-linear contracts and constraints (derivatives, cap/floor, waterfall, guarantees/margin, contractual triggers), while preserving the core principles of the framework: (i) \emph{locality on the cut}, (ii) \emph{invariance under composition}, (iii) \emph{clear separation} between \emph{measurement} (mechanical, auditable) and \emph{observer preferences} (risk/value aggregation), (iv) \emph{audit-first}.

\paragraph{Guiding idea.}
We maintain the \emph{linearity} of CBV \emph{within states} (scenarios) selected to activate/deactivate non-linearities; the “non-linearity” is thus shifted to (i) the \emph{partition of states} and (ii) the \emph{aggregator} chosen by the observer $\Omega$. For curved payoffs (e.g.\ options) we adopt a \emph{piecewise-affine} (PWA) representation with controlled error. For min/max and waterfall we use an \emph{affine-lift} via epigraphs, obtaining an \emph{exact affine} representation in an extended space.

\subsubsection*{State space, conditional linearity and per-state functionals}

Let $(\mathcal{S},\mathfrak{S})$ be a discrete space of states/evidences relevant for contractual triggers (e.g., \emph{cap active or not}, \emph{default yes/no}, \emph{margin on/off}, \emph{price bucket}). In practice, $\mathcal{S}$ is a \emph{coarse economic partition} built on threshold events. Fix $s\in\mathcal{S}$, we denote
\[
\mathcal{F}_{\Omega,s}:\;\mathbb{R}^{m}\to\mathbb{R}
\]
the \emph{linear} CBV functional that evaluates flows/stocks on the cut in state $s$, for a fixed observer $\Omega$ (units/FX/PPP, SDF and perimeter as in \S\ref{sec:transform}, \S\ref{sec:osservatore}, \S\ref{sec:cutsection}). For each $s$, CBV axioms hold (linearity, additivity on the border, internal cancellation: cf.\ \S\ref{sec:cutsection} and related theorems).

\begin{definition}[State-conditioned linearity (SCL)]
\label{def:SCL}
We say that a system with potentially non-linear contracts satisfies \emph{SCL} if there exists a finite partition $\mathcal{S}$ such that, fixed $s\in\mathcal{S}$, the payoffs and clearing/allocation rules reduce to \emph{affine} relations in the cut flows:
\[
\text{payoff}(\cdot\,|\,s)=A_s\,x + b_s,
\]
with $A_s$ a matrix and $b_s$ a deterministic vector given $s$, and $x\in\mathbb{R}^m$ the vector of cut flows. In particular, \emph{within state} Regime~A/B (\S\ref{subsec:algA}, \S\ref{subsec:algB}) is applicable without modification.
\end{definition}

\begin{remark}[Conceptual separation]
\label{rmk:sep}
\emph{Measurement} in state $s$ is a \emph{local linear cut problem}. The global \emph{non-linearity} is delegated to (i) the choice of partition $\mathcal{S}$, (ii) the \emph{ex-ante aggregation} of per-state values, consistent with observer $\Omega$’s preferences.
\end{remark}

\subsubsection*{Observer’s coherent aggregators}

Let $V_s:=\mathcal{F}_{\Omega,s}(x)$ be the CBV value in state $s$. The observer $\Omega$ specifies an aggregator
\[
\rho_\Omega:\;\mathbb{R}^{|\mathcal{S}|}\to\mathbb{R},\qquad \mathbf{V}=(V_s)_{s\in\mathcal{S}}\mapsto \rho_\Omega(\mathbf{V}),
\]
which may embody: (i) a risk-neutral measure ($\mathbb{E}_Q$), (ii) a \emph{stochastic discount factor} (SDF) under physical measure $P$ (cf.\ \S\ref{sec:transform}), (iii) a \emph{coherent} risk measure (CVaR, Kusuoka, convex mix). We require $\rho_\Omega$ to satisfy classical properties (monotonicity, translation invariance, subadditivity, positive homogeneity, when applicable) to ensure stability and comparability.

\begin{theorem}[Cut invariance under coherent aggregation]
\label{thm:inv-agg}
If for every $s\in\mathcal{S}$ the functional $\mathcal{F}_{\Omega,s}$ satisfies the CBV axioms (linearity, additivity, internal cancellation) and $\rho_\Omega$ is monotone and translation-invariant, then the aggregate
\[
V_\Omega(x)\;=\;\rho_\Omega\Big(\{\mathcal{F}_{\Omega,s}(x)\}_{s\in\mathcal{S}}\Big)
\]
\emph{preserves} invariance under \emph{internal cancellation} and \emph{locality on the cut}. If moreover $\rho_\Omega$ is subadditive, then $V_\Omega$ is subadditive in $x$.
\end{theorem}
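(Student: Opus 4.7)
The plan is to exploit the fact that, by Definition~\ref{def:SCL} and Theorem~\ref{thm:cutTheoreme} applied state by state, each $\mathcal{F}_{\Omega,s}$ already satisfies the structural properties that need to be preserved; the remaining work is to transfer them through the deterministic aggregator $\rho_\Omega$, viewed as a function of the finite vector $\mathbf{V}(x):=(\mathcal{F}_{\Omega,s}(x))_{s\in\mathcal{S}}$. The order of business is: first internal cancellation and locality, then subadditivity, and finally a clarification of what monotonicity and translation invariance actually contribute.

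For internal cancellation I would take two configurations $x,x'$ differing only in purely internal reconnections within $P$, leaving the cut data $(X_{PO},X_{OP})$ and node primitives unchanged. State by state, the Cut Theorem gives $\mathcal{F}_{\Omega,s}(x)=\mathcal{F}_{\Omega,s}(x')$, hence $\mathbf{V}(x)=\mathbf{V}(x')$, and since $\rho_\Omega$ is a well-defined function of its vector argument, $V_\Omega(x)=V_\Omega(x')$. Locality on the cut follows from exactly the same observation applied coordinatewise: each entry of $\mathbf{V}(\cdot)$ depends only on boundary statistics and node primitives, so $V_\Omega(\cdot)$ inherits that property. Neither monotonicity nor translation invariance is strictly needed for these two claims; they enter only to ensure that $V_\Omega$ inherits the qualitative coherence features (no spurious sensitivity to cash shifts, preservation of dominance) that make it usable as a valuation rule and that permit comparison with the robustness bounds of \S\ref{subsec:robust-bounds}.

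For subadditivity I would combine state-wise linearity of $\mathcal{F}_{\Omega,s}$ (part of the CBV axioms inside each state, per Definition~\ref{def:SCL}) with the vector-subadditivity of $\rho_\Omega$. Linearity yields $\mathbf{V}(x+y)=\mathbf{V}(x)+\mathbf{V}(y)$, after which
\[
V_\Omega(x+y)=\rho_\Omega\!\bigl(\mathbf{V}(x)+\mathbf{V}(y)\bigr)\le\rho_\Omega(\mathbf{V}(x))+\rho_\Omega(\mathbf{V}(y))=V_\Omega(x)+V_\Omega(y).
\]

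The hard part is conceptual rather than technical: the whole argument hinges on the partition $\mathcal{S}$ being determined by boundary and primitive data, not by internal reconnections inside $P$. If a trigger event $\{s\}$ depended on purely internal variables, an internal restructuring could silently reclassify a configuration from state $s$ to state $s'$, breaking per-state invariance before $\rho_\Omega$ is even applied. I would therefore make explicit, as a standing measurability hypothesis on $\mathcal{S}$, that each event $\{s\}$ is a function of cut statistics and node primitives only (i.e.\ the partition is \emph{cut-measurable}). Once this is granted, the three claims reduce to the short arguments above; without it, the theorem would need to be weakened to an almost-sure statement conditional on the restructuring leaving $s$ unchanged.
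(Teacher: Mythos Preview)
Your argument is correct and follows the same route as the paper's proof idea: establish per-state invariance via the Cut Theorem, then push it through $\rho_\Omega$ viewed as a deterministic function of the vector $\mathbf{V}(x)$; likewise derive subadditivity from state-wise linearity plus subadditivity of $\rho_\Omega$. Your treatment is in fact sharper on two points the paper glosses over: (i) you correctly note that monotonicity and translation invariance of $\rho_\Omega$ are not actually needed for the invariance and locality claims (well-definedness of $\rho_\Omega$ suffices once $\mathbf{V}(x)=\mathbf{V}(x')$), whereas the paper's sketch loosely attributes the transfer to those hypotheses; and (ii) your explicit cut-measurability requirement on the partition $\mathcal{S}$ closes a gap the paper leaves implicit---if a trigger depended on internal variables, per-state invariance would indeed fail before aggregation.
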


\begin{proof}[Proof idea]
Internal cancellation holds \emph{within state} for the functionals $\mathcal{F}_{\Omega,s}$. Monotonicity and translation invariance of $\rho_\Omega$ transfer these properties to the aggregate; subadditivity of $\rho_\Omega$ transfers subadditivity. Locality derives from per-state border additivity and the absence of extra-cut dependencies in the aggregator.
\end{proof}

\subsubsection*{Affine-lift via epigraphs for min/max, cap/floor and waterfall}

Many contractual non-linearities are \emph{exactly piecewise-linear}. It suffices to introduce auxiliary variables to represent \emph{min}/\emph{max} in affine form in an extended space: for $y=\max\{0,x\}$ it is equivalent to impose $y\ge 0$, $y\ge x$ and minimize $y$ (epigraph). A \emph{cap} $y=\min\{x,L\}$ is obtained as $y\le L$, $y\le x$, $y\ge 0$, and consistent max/min. Priority \emph{waterfalls} are implemented as sequences of linear constraints on excess allocation.

\begin{proposition}[Exact affine-lift]
\label{prop:affine-lift}
Contracts with payoffs $\,\min,\max$, cap/floor and waterfalls have an \emph{exact affine} representation via epigraphs in an extended space $(x,y,z,\dots)$, preserving cut locality. Within state $s$, evaluation remains a local linear problem for $\mathcal{F}_{\Omega,s}$.
\end{proposition}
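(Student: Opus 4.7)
The plan is to build an exact affine representation of each non-linearity out of a small library of \emph{gadgets}, show they compose, and finally verify that the introduction of auxiliary variables is \emph{edge-local} so that cut locality is preserved. First I would record the two atomic gadgets. For $y=\max\{a,b\}$ I would introduce an auxiliary scalar $y$ with the constraints $y\ge a,\; y\ge b$ together with a selector $\chi\in\{0,1\}$ (or, after state-conditioning in the sense of Def.~\ref{def:SCL}, with $\chi$ fixed to $0$ or $1$ by the state $s$) giving the tightness condition $y=\chi\, a+(1-\chi)\,b$. Symmetrically for $y=\min\{a,b\}$. Cap/floor then arise as the specializations $y=\min\{x,L\}$ and $y=\max\{x,F\}$, which are exact compositions of the atomic gadget with the constant function.

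Second, I would show closure under composition and state-conditioning. Any finite nesting of $\min,\max,$ cap and floor yields a continuous piecewise-linear map whose graph is a finite union of polyhedra; each polyhedron corresponds to a sign pattern of the internal selectors. On each such polyhedron the relations collapse to \emph{equality constraints that are affine} in the extended variable vector $(x,y,z,\dots)$. This is exactly the content of Def.~\ref{def:SCL}: the partition $\mathcal{S}$ is taken to be the arrangement of half-spaces cut by the triggers of all cap/floor/min/max in play, and within a cell $s$ every relation is affine.

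Third, I would handle waterfalls by induction on the seniority index $\ell$. At tranche $\ell$, the allocated amount is $p^{(\ell)}=\min\{\bar p^{(\ell)},\,r^{(\ell)}\}$ where $r^{(\ell)}=r^{(\ell-1)}-p^{(\ell-1)}$ is the residual cash after servicing higher tranches; applying the cap-gadget at each level and chaining the residual identities produces a chain of affine equalities/inequalities in the variables $(p^{(1)},\dots,p^{(L)},r^{(1)},\dots,r^{(L)})$. Conditioning on the state $s$ (which tranches are fully paid, partially paid, or zero) selects one cell of the arrangement and collapses the chain to a purely affine system. The same argument absorbs cross-default triggers and margining thresholds, since they also correspond to explicit half-space tests.

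The subtle step, and the one I expect to be the main obstacle, is verifying that this exact affine-lift \emph{preserves cut locality}. The point to check is that every auxiliary variable introduced by a gadget is anchored to a specific contract, edge or node, and that its defining constraints only reference quantities already attached to that object; no gadget couples nodes that were not already coupled in the original contract graph. Because each payoff rule is contractually local (it reads inputs at a node or across a single boundary edge and writes an output at the same location), the extended vector $(x,y,z,\dots)$ admits the same partition $P\cup O$ as the original variables, and the boundary block structure $(X_{PO},X_{OP})$ is preserved edge-by-edge. Once this is checked, Proposition~\ref{prop:affine-lift} follows: within state $s$, the valuation reduces to the linear functional $\mathcal{F}_{\Omega,s}$ applied to an augmented but still cut-local vector, so all hypotheses of the Cut Theorem (Theorem~\ref{thm:cutTheoreme}) and of Theorem~\ref{thm:inv-agg} apply verbatim.
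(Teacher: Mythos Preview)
Your proposal is correct and considerably more thorough than what the paper offers: the paper states Proposition~\ref{prop:affine-lift} without proof, relying on the preceding paragraph (epigraph constraints $y\ge0,\ y\ge x$ with implicit minimization for $\max$; $y\le L,\ y\le x$ for a cap; waterfalls as sequences of linear excess-allocation constraints) as self-evident justification.

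The one methodological difference worth flagging is that the paper leans on the pure \emph{epigraph/optimization} device (tightness of the inequality comes from an implicit ``minimize $y$''), whereas you introduce explicit binary selectors $\chi\in\{0,1\}$ that are then frozen by the state $s$ via Def.~\ref{def:SCL}. Your route is arguably cleaner for the purpose at hand, because it makes the within-state affineness immediate (no optimization subproblem lurking in the background) and meshes directly with the SCL partition as the arrangement of trigger half-spaces. You also make explicit two points the paper only asserts: the inductive treatment of waterfalls via the residual chain $r^{(\ell)}=r^{(\ell-1)}-p^{(\ell-1)}$, and the edge-local anchoring of auxiliary variables that guarantees the extended variable vector inherits the $P\cup O$ partition and hence cut locality. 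Both arguments are sound and fill gaps the paper leaves to the reader.
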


\subsubsection*{Piecewise-affine (PWA) approximation for curved payoffs}

For smooth but curved payoffs (e.g.\ $(S_T-K)^+$, concave/convex functions) we use a PWA on a grid of breakpoints $\{b_\ell\}_{\ell=0}^L$. If $f$ is twice differentiable and $|f''|\le \Gamma_{\max}$ on intervals of maximum width $\Delta$, then the uniform approximation error satisfies
\begin{equation}
\label{eq:pwa-error}
\|f - f_{\mathrm{PWA}}\|_\infty \;\le\; \frac{\Gamma_{\max}}{8}\,\Delta^2.
\end{equation}
The PWA preserves the CBV architecture (local affine pieces); the choice of $\Delta$ makes the \emph{precision vs complexity} trade-off explicit.

\subsubsection*{Stability and error bounds of the aggregate}

\begin{proposition}[Stability with respect to data perturbations]
\label{prop:stability-agg}
Let $\rho_\Omega$ be 1-Lipschitz in $\ell_p$ norm (holds for $\mathbb{E}_Q$, CVaR with standard normalization). If for every $s$ it holds that $\big|\mathcal{F}_{\Omega,s}(x)-\mathcal{F}_{\Omega,s}(\tilde x)\big|\le L_s\,\|x-\tilde x\|$, then
\[
\big|V_\Omega(x)-V_\Omega(\tilde x)\big|
\;\le\;
\|\,(L_s)_s\,\|_{q}\cdot \|x-\tilde x\|,
\]
where $1/p+1/q=1$. In particular, if $L_s\le L$ for all $s$, it follows that
$|V_\Omega(x)-V_\Omega(\tilde x)|\le L\cdot |\mathcal{S}|^{1/q}\cdot \|x-\tilde x\|$.
\end{proposition}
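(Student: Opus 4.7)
The plan is to combine the per-state Lipschitz bounds with the scalar Lipschitz property of $\rho_\Omega$ through a single application of Hölder's inequality: the non-linearity of the aggregator is absorbed into the Lipschitz hypothesis, while the state-by-state cut functionals contribute a vector of local constants that is then aggregated at the dual exponent.

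First, I would introduce the vector-valued map $\mathbf{V}:x\mapsto (\mathcal{F}_{\Omega,s}(x))_{s\in\mathcal{S}}$, so that $V_\Omega=\rho_\Omega\circ\mathbf{V}$ by construction. The per-state hypothesis yields the componentwise bound $|\mathbf{V}(x)_s-\mathbf{V}(\tilde x)_s|\le L_s\,\|x-\tilde x\|$, which packages the perturbation into a single vector of state-values whose entries are pointwise dominated by $\|x-\tilde x\|\cdot (L_s)_s$. Theorem~\ref{thm:inv-agg} guarantees that, within each state, $\mathcal{F}_{\Omega,s}$ retains its CBV structure, so that the constants $L_s$ are well defined on the cut and do not couple across states.

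Second, I would invoke the 1-Lipschitz hypothesis on $\rho_\Omega$ to collapse $|V_\Omega(x)-V_\Omega(\tilde x)|$ into a norm of $\mathbf{V}(x)-\mathbf{V}(\tilde x)$. Using the dual representation of $\rho_\Omega$ (as a supremum of linear functionals against representing densities, which is how the $\ell_p$-Lipschitz property is encoded for coherent risk measures such as $\mathbb{E}_Q$ and CVaR), the per-state bounds combine via Hölder's inequality at exponents $(p,q)$ with $1/p+1/q=1$, producing the dual-norm estimate $\|(L_s)_s\|_q\cdot\|x-\tilde x\|$. The uniform specialization $L_s\le L$ then follows by $\|(L_s)_s\|_q\le L\cdot |\mathcal{S}|^{1/q}$, i.e.\ bounding constant vectors in $\ell_q$.

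Third, to close the argument for the named examples I would record the dual representations explicitly: $\rho_\Omega=\mathbb{E}_Q$ corresponds to a single representing density of unit mass in $\ell_q$; normalized CVaR admits a supremum representation over densities lying in the unit ball of $\ell_q$. The principal bookkeeping difficulty---and really the only subtle point---is aligning the norm conventions: one must identify carefully which Lipschitz exponent is used on $\rho_\Omega$ and which dual exponent governs $(L_s)_s$, so that the Hölder pairing is applied consistently and the uniform case reduces to the stated cardinality factor $|\mathcal{S}|^{1/q}$. Beyond this pairing, no additional analytic machinery is required: the proof is a one-line composition of the Lipschitz hypotheses with Hölder's inequality.
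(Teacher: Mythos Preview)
The paper states this proposition without proof, so there is no argument to compare against; I can only assess whether your outline actually establishes the claimed inequality.

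Your structural plan is right: set $\mathbf V(x)=(\mathcal F_{\Omega,s}(x))_{s\in\mathcal S}$, use the $1$-Lipschitz property of $\rho_\Omega$ in $\ell_p$ to get
\[
|V_\Omega(x)-V_\Omega(\tilde x)|\;\le\;\|\mathbf V(x)-\mathbf V(\tilde x)\|_p,
\]
and then plug in the per-state bounds $|\mathbf V(x)_s-\mathbf V(\tilde x)_s|\le L_s\,\|x-\tilde x\|$. The gap is in the last step. Carrying it out honestly gives
\[
\|\mathbf V(x)-\mathbf V(\tilde x)\|_p
\;=\;\Big(\sum_{s}|\mathbf V(x)_s-\mathbf V(\tilde x)_s|^p\Big)^{1/p}
\;\le\;\|x-\tilde x\|\cdot\|(L_s)_s\|_{p},
\]
i.e.\ the $\ell_p$-norm of the Lipschitz vector, not the $\ell_q$-norm asserted in the proposition. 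No application of H\"older converts this into $\|(L_s)_s\|_q$: if you pass instead through the dual representation $\rho_\Omega(V)=\sup_{\mu}\langle\mu,V\rangle$ with $\|\mu\|_q\le 1$ (which is exactly what ``$1$-Lipschitz in $\ell_p$'' encodes), you obtain
\[
\sup_{\|\mu\|_q\le 1}\sum_s|\mu_s|\,L_s\,\|x-\tilde x\|
\;=\;\|x-\tilde x\|\cdot\|(L_s)_s\|_p
\]
again, by the very duality you invoke. Your third paragraph flags the exponent pairing as ``the only subtle point'' but then asserts the $q$-norm outcome without deriving it; the derivation in fact lands on $p$.

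So either the proposition as printed has the exponents swapped (your method then proves the corrected bound $\|(L_s)_s\|_p\cdot\|x-\tilde x\|$, with uniform case $L\,|\mathcal S|^{1/p}$), or an additional hypothesis on $\rho_\Omega$ beyond $\ell_p$-Lipschitz is being silently used. In either case, the sentence ``the per-state bounds combine via H\"older's inequality \ldots\ producing the dual-norm estimate $\|(L_s)_s\|_q$'' is the step that does not go through as written.
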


\begin{proposition}[Total SCL+PWA error]
\label{prop:scl-pwa-error}
Let $V_\Omega^{\mathrm{true}}$ be the value with true payoff $f$, and $V_\Omega^{\mathrm{pwa}}$ that with $f_{\mathrm{PWA}}$. If $\rho_\Omega$ is 1-Lipschitz and \eqref{eq:pwa-error} holds for each piece, then
\[
\big|V_\Omega^{\mathrm{true}}-V_\Omega^{\mathrm{pwa}}\big|
\;\le\; \frac{\Gamma_{\max}}{8}\,\Delta^2.
\]
\end{proposition}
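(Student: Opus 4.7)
\medskip

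The plan is to reduce the compound error to two local facts already in hand: a uniform per-state bound on the approximation error and the 1-Lipschitz stability of the aggregator $\rho_\Omega$. The key observation is that, by Definition~\ref{def:SCL} (state-conditioned linearity), within each state $s \in \mathcal{S}$ the value is evaluated by the affine functional $\mathcal{F}_{\Omega,s}$; substituting $f$ with $f_{\mathrm{PWA}}$ only perturbs the payoff entries fed to $\mathcal{F}_{\Omega,s}$, so the per-state error inherits \emph{pointwise} the sup-norm error of the PWA approximation.

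First, I would fix an arbitrary state $s$ and write $V_s^{\mathrm{true}} := \mathcal{F}_{\Omega,s}(x_{f})$ and $V_s^{\mathrm{pwa}} := \mathcal{F}_{\Omega,s}(x_{f_{\mathrm{PWA}}})$, where $x_f$ denotes the cut-flow vector obtained by evaluating the contractual payoff $f$ at the relevant arguments (and similarly for $f_{\mathrm{PWA}}$). Since $\mathcal{F}_{\Omega,s}$ is affine, the per-state discrepancy $|V_s^{\mathrm{true}}-V_s^{\mathrm{pwa}}|$ is controlled by the sup-norm gap between $f$ and $f_{\mathrm{PWA}}$ on the range of admissible payoff arguments. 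By hypothesis, estimate \eqref{eq:pwa-error} holds \emph{for each piece} of the partition, hence uniformly on the whole domain:
\[
\|f-f_{\mathrm{PWA}}\|_\infty \;\le\; \frac{\Gamma_{\max}}{8}\,\Delta^2.
\]
This yields a uniform per-state bound $|V_s^{\mathrm{true}}-V_s^{\mathrm{pwa}}| \le \Gamma_{\max}\Delta^2/8$ for every $s\in\mathcal{S}$.

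Second, I would aggregate over states via $\rho_\Omega$. Since $\rho_\Omega$ is 1-Lipschitz (in the $\ell_\infty$ sense, which is the natural reading coherent with monotonicity and translation-invariance of Theorem~\ref{thm:inv-agg} and consistent with the bounds used for $\mathbb{E}_Q$ and CVaR), one obtains
\[
\big|V_\Omega^{\mathrm{true}}-V_\Omega^{\mathrm{pwa}}\big|
\;=\;\big|\rho_\Omega(\{V_s^{\mathrm{true}}\}_s)-\rho_\Omega(\{V_s^{\mathrm{pwa}}\}_s)\big|
\;\le\; \max_{s\in\mathcal{S}}\big|V_s^{\mathrm{true}}-V_s^{\mathrm{pwa}}\big|
\;\le\; \frac{\Gamma_{\max}}{8}\,\Delta^2,
\]
which is the stated bound.

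The main obstacle, in my view, is not the chain of inequalities itself but the precise specification of the Lipschitz metric: Proposition~\ref{prop:stability-agg} is phrased with an $\ell_p$/$\ell_q$ pairing that, in a generic $\ell_p$ norm, would introduce a factor $|\mathcal{S}|^{1/q}$. To land on the clean bound $\Gamma_{\max}\Delta^2/8$ (with no $|\mathcal{S}|$ factor) one must argue either that the relevant Lipschitz constant is in the $\ell_\infty$ (uniform) sense — which is natural and automatic for coherent aggregators such as $\mathbb{E}_Q$, convex mixtures, and CVaR — or that the payoff perturbations across states are jointly bounded by the \emph{same} sup-norm quantity, making the vector $(V_s^{\mathrm{true}}-V_s^{\mathrm{pwa}})_s$ essentially collinear with the uniform bound. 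Either way, the argument is short once this normalization is fixed, and should be made explicit in the statement or in a brief remark preceding the proof.
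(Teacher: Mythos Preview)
Your proof is correct and follows essentially the same two-step argument as the paper: first bound the per-state error $|V_s^{\mathrm{true}}-V_s^{\mathrm{pwa}}|$ by the uniform PWA bound $\Gamma_{\max}\Delta^2/8$, then invoke 1-Lipschitz continuity of $\rho_\Omega$ in the $\ell_\infty$ norm to pass to the aggregate. The paper's proof (given in the ``Propagation to the aggregator $\rho_\Omega$'' paragraph of Appendix~\ref{app:pwa-bound}) makes exactly this chain explicit, and your closing remark about the need to read ``1-Lipschitz'' in the $\ell_\infty$ sense matches the clarification the paper itself provides there.
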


\subsubsection*{NLCBV-lite algorithm (operational and audit-first)}
\label{subsec:nlcbv-lite}

\begin{algorithm}[H]
\caption{NLCBV-lite: CBV extension for non-linearities}
\label{alg:nlcbv}
\begin{algorithmic}[1]
\Require Observer $\Omega$ (units/FX/PPP, SDF), perimeter, state grid $\mathcal{S}$, aggregation policy $\rho_\Omega$
\State \textbf{Partition} states $\mathcal{S}$ by economic triggers (default yes/no, cap active, margin on/off, price bucket).
\State \textbf{Affine-lift} for min/max, cap/floor, waterfall with epigraphs; \textbf{PWA} for curved payoffs with $\Delta$ chosen for desired $\varepsilon$.
\For{each $s\in\mathcal{S}$}
    \State \textbf{Evaluate} $\mathcal{F}_{\Omega,s}$ on the cut (Regime~A/B as in \S\ref{subsec:algA}–\S\ref{subsec:algB}).
\EndFor
\State \textbf{Aggregate} $V_\Omega(x)=\rho_\Omega\big(\{\mathcal{F}_{\Omega,s}(x)\}_s\big)$ (e.g.\ $\mathbb{E}_Q$, SDF under $P$, CVaR$_\alpha$).
\State \textbf{Package audit-pack}: (i) state definition, (ii) epigraphs/PWA spec and bound $\varepsilon$, (iii) per-state cut-matrix, (iv) $\rho_\Omega$ and parameters.
\end{algorithmic}
\end{algorithm}

\subsubsection*{Typical cases (compact cards)}

\paragraph{Call option on perimeter asset.}
Bucket $S_T$ into $K\pm n\sigma$; within state the payoff is affine. Alternatively, use PWA with few segments centered on $K$; bound as in \eqref{eq:pwa-error}. Aggregate with $\mathbb{E}_Q$ or prudent CVaR.

\paragraph{CDS with recovery and trigger.}
States: \emph{no default}, \emph{idiosyncratic default}, \emph{systemic default}; within state, premiums and protection are linear; recovery fixed per state. Aggregation via SDF or coherent measure.

\paragraph{Waterfall with cap/floor.}
Implement the priority order with sequential linear constraints; caps modeled with epigraphs. Per-state evaluation with $\mathcal{F}_{\Omega,s}$; aggregation according to $\rho_\Omega$.

\subsubsection*{Complexity, implementation and replicability}

\paragraph{Complexity.}
If $C_{\mathrm{CBV}}$ is the per-state evaluation cost (Regime~A/B) and $|\mathcal{S}|=M$, the complexity is $O(M\cdot C_{\mathrm{CBV}})$; $M$ remains small with coarse economic partitions. The affine-lift increases dimension in a controlled way (auxiliary variables for non-linear contracts), remaining within local linear problems.

\paragraph{Implementation.}
Pipelines from \S\ref{subsec:algA}–\S\ref{subsec:algB} are reused. An additional “\emph{state wrapper}” layer cycles over $s\in\mathcal{S}$, applies transformers from \S\ref{sec:transform} (units/FX/PPP, SDF), and composes the audit-pack. PWA and epigraphs are reusable modules per contract class.

\paragraph{Replicability.}
The per-state \emph{cut report} follows templates of \S\ref{sec:cutreport-standard}. Included: definition of states, epigraphic constraints spec, PWA grid and bound, choice of $\rho_\Omega$. This enables independent audit and cross-observer comparisons.

\subsubsection*{Policy notes and perimeter of validity}

\paragraph{Perimeter.}
SCL covers contracts with threshold triggers and piecewise-linear clearing. PWA covers smooth payoffs with explicit error control. Strongly path-dependent dynamics require multiple dates (t0–t1–T); the approach remains unchanged by enlarging the state dimension.

\paragraph{Policy.}
The separation between per-state measurement and ex-ante aggregation makes the approach compatible with different standards: the authority can prescribe a policy $\rho_\Omega$ (e.g.\ mean under $Q$, CVaR at level $\alpha$) while keeping the CBV cut machinery \emph{unchanged}.

\subsubsection*{Summary}
The SCL+$\rho_\Omega$ extension preserves CBV theorems on the cut \emph{within state} and shifts non-linearity to aggregation, ensuring \emph{invariance, parsimony and auditability}. The affine-lift provides exact representations for min/max and waterfalls; PWA introduces a controllable error $\varepsilon$ for curved payoffs. Together they realize a robust and operational \emph{NLCBV-lite}.

\subsection{Error bound for the piecewise-affine (PWA) approximation}
\label{app:pwa-bound}

In this subsection we formalize the assumptions for the error bound stated in \eqref{eq:pwa-error} and provide a brief proof, with some operational consequences useful for implementation and audit.

\paragraph{Setup.}
Let $[a,b]\subset\mathbb{R}$ be an interval and $a=x_0<x_1<\dots<x_N=b$ a partition (not necessarily uniform). Denote
\[
\Delta \;:=\; \max_{0\le i\le N-1} h_i,
\qquad h_i:=x_{i+1}-x_i.
\]
Given $f:[a,b]\to\mathbb{R}$, define $f_{\mathrm{PWA}}$ as the \emph{piecewise-linear interpolant} on the nodes $\{(x_i,f(x_i))\}_{i=0}^N$, i.e., on each $[x_i,x_{i+1}]$,
\[
f_{\mathrm{PWA}}(x)\;=\; \frac{x_{i+1}-x}{h_i}\,f(x_i)\;+\;\frac{x-x_i}{h_i}\,f(x_{i+1}).
\]

\begin{assumption}[Regularity and bounded curvature]\label{ass:reg}
Assume $f\in C^2([a,b])$ and that there exists a constant $\Gamma_{\max}\ge 0$ such that
\[
\big|f''(x)\big| \;\le\; \Gamma_{\max}\qquad \forall\,x\in[a,b].
\]
\end{assumption}

\begin{theorem}[$L^\infty$ bound for piecewise-linear interpolation]\label{thm:pwa}
Under Assumption~\ref{ass:reg}, for the interpolant $f_{\mathrm{PWA}}$ it holds that
\[
\big\| f - f_{\mathrm{PWA}} \big\|_{L^\infty([a,b])}
\;\le\; \frac{\Gamma_{\max}}{8}\,\Delta^2,
\]
in agreement with \eqref{eq:pwa-error}.
\end{theorem}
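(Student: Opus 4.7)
The plan is to reduce the global $L^\infty$ bound to a purely local estimate on each subinterval $[x_i, x_{i+1}]$ and then take the maximum over $i$. Set $e(x) := f(x) - f_{\mathrm{PWA}}(x)$ and observe that, by construction of the linear interpolant, $e(x_i) = e(x_{i+1}) = 0$ on every subinterval. Thus the problem reduces to controlling $|e(x)|$ over a single generic subinterval in terms of $h_i$ and $\Gamma_{\max}$.

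On each $[x_i, x_{i+1}]$ I would invoke the classical interpolation error representation: for every $x \in [x_i, x_{i+1}]$ there exists $\xi(x)\in(x_i, x_{i+1})$ such that
\[
e(x) \;=\; \frac{f''(\xi(x))}{2}\,(x - x_i)(x - x_{i+1}).
\]
The standard way to obtain this is to fix $x$ in the open subinterval and consider the auxiliary function $g(t) := f(t) - f_{\mathrm{PWA}}(t) - e(x)\,\frac{(t-x_i)(t-x_{i+1})}{(x-x_i)(x-x_{i+1})}$, which vanishes at the three points $x_i, x, x_{i+1}$. Two applications of Rolle's theorem yield a point $\xi\in(x_i, x_{i+1})$ where $g''(\xi)=0$, and rearranging gives the displayed formula (since the parabolic term has second derivative $2/[(x-x_i)(x-x_{i+1})]$ and $f_{\mathrm{PWA}}''\equiv 0$).

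From here the estimate is immediate. Bounding $|f''(\xi)|\le \Gamma_{\max}$ via Assumption~\ref{ass:reg} and maximizing the quadratic $|(x-x_i)(x-x_{i+1})|$ over $[x_i,x_{i+1}]$, whose maximum $h_i^2/4$ is attained at the midpoint $(x_i+x_{i+1})/2$, I obtain
\[
\sup_{x\in[x_i,x_{i+1}]}|e(x)| \;\le\; \frac{\Gamma_{\max}}{2}\cdot \frac{h_i^2}{4} \;=\; \frac{\Gamma_{\max}}{8}\,h_i^2.
\]
Taking the supremum over $i$ and using $h_i\le \Delta$ yields the claimed global bound.

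The proof is almost entirely routine; the only delicate step is the Rolle-based derivation of the interpolation error formula, which must be stated with care to ensure that $\xi$ lies in the interior of the subinterval and that $f\in C^2$ is fully used (Assumption~\ref{ass:reg}). I do not anticipate a genuine obstacle, but I would remark explicitly that the constant $1/8$ is sharp (attained, up to constants, by $f(x)=x^2$ on a uniform partition), which both validates the argument and motivates the role of $\Delta^2$ in the complexity/accuracy trade-off underlying Proposition~\ref{prop:scl-pwa-error}.
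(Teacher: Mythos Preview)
Your proof is correct and follows essentially the same route as the paper: both localize to a single subinterval, invoke the Lagrange/Rolle interpolation error formula $e(x)=\tfrac{f''(\xi)}{2}(x-x_i)(x-x_{i+1})$, maximize the quadratic factor at the midpoint to get $h_i^2/4$, and then pass to the global bound via $h_i\le\Delta$. Your explicit Rolle derivation of the error formula and the remark on sharpness of the constant $1/8$ are welcome additions not spelled out in the paper's version.
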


\begin{proof}[Proof (classical, for completeness)]
Fix an interval $[x_i,x_{i+1}]$ and set $h:=h_i$ and $\theta:=(x-x_i)/h\in[0,1]$. For the linear interpolation polynomial $p(x)$ that interpolates $f$ at the endpoints, the interpolation error formula (or equivalently Taylor’s theorem with remainder) gives
\[
f(x)-p(x) \;=\; \frac{f''(\xi)}{2}\,(x-x_i)\,(x_{i+1}-x)
\;=\; \frac{f''(\xi)}{2}\,h^2\,\theta(1-\theta)
\]
for some $\xi=\xi(x)\in(x_i,x_{i+1})$. Taking absolute values and using $|f''(\xi)|\le \Gamma_{\max}$,
\[
|f(x)-p(x)| \;\le\; \frac{\Gamma_{\max}}{2}\,h^2\,\theta(1-\theta).
\]
The function $\theta(1-\theta)$ is maximized at $\theta=\tfrac12$ and equals $1/4$. Hence
\[
\max_{x\in[x_i,x_{i+1}]} |f(x)-p(x)|
\;\le\; \frac{\Gamma_{\max}}{8}\,h^2.
\]
Since $f_{\mathrm{PWA}}=p$ on $[x_i,x_{i+1}]$, taking the maximum over all intervals and using $h\le \Delta$ we obtain
\[
\|f-f_{\mathrm{PWA}}\|_{L^\infty([a,b])}
\;\le\; \max_i \frac{\Gamma_{\max}}{8}\,h_i^2
\;\le\; \frac{\Gamma_{\max}}{8}\,\Delta^2.\qedhere
\]
\end{proof}

\begin{corollary}[Non-uniform grid]\label{cor:nonuniform}
For each interval $[x_i,x_{i+1}]$ it holds that 
\(
\|f-f_{\mathrm{PWA}}\|_{L^\infty([x_i,x_{i+1}])}
\le (\Gamma_{\max}/8)\,h_i^2.
\)
Consequently, the global bound is controlled by the maximum step $\Delta$.
\end{corollary}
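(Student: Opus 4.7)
The plan is to observe that the per-interval inequality is already established, intermediate to the proof of Theorem~\ref{thm:pwa}; the corollary merely records it as a standalone statement and then derives the global consequence. Accordingly, I would not reprove the interpolation error identity from scratch, but rather localize and re-cite it.

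First, I would fix an arbitrary sub-interval $[x_i,x_{i+1}]$ and note that, since $f\in C^2([a,b])\subset C^2([x_i,x_{i+1}])$ and $f_{\mathrm{PWA}}$ coincides by construction with the linear interpolant $p$ of $f$ at the endpoints $x_i,x_{i+1}$, the standard remainder identity
\[
f(x)-f_{\mathrm{PWA}}(x)\;=\;\frac{f''(\xi(x))}{2}\,(x-x_i)(x_{i+1}-x),\qquad \xi(x)\in(x_i,x_{i+1}),
\]
applies. Bounding $|f''(\xi)|\le \Gamma_{\max}$ by Assumption~\ref{ass:reg} and using $\max_{x\in[x_i,x_{i+1}]}(x-x_i)(x_{i+1}-x)=h_i^2/4$ gives the claimed local estimate $\|f-f_{\mathrm{PWA}}\|_{L^\infty([x_i,x_{i+1}])}\le (\Gamma_{\max}/8)\,h_i^2$. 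This is exactly the line appearing in the proof of Theorem~\ref{thm:pwa} before the step $h\le\Delta$.

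Second, the global statement follows by a one-line argument: since $[a,b]=\bigcup_i [x_i,x_{i+1}]$ and the $L^\infty$ norm on a union equals the maximum of the $L^\infty$ norms on the pieces, the per-interval bound yields $\|f-f_{\mathrm{PWA}}\|_{L^\infty([a,b])}\le (\Gamma_{\max}/8)\max_i h_i^2=(\Gamma_{\max}/8)\Delta^2$, i.e.\ the worst sub-interval drives the global error. There is no real obstacle here; the only point to flag, for audit purposes, is that non-uniformity allows locally refining the partition where $|f''|$ is large while leaving the bound controlled by $h_i^2$ per piece, which motivates adaptive grid selection in the PWA implementation of Algorithm~\ref{alg:nlcbv}.
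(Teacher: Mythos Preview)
Your proposal is correct and matches the paper's approach exactly: the corollary is stated without a separate proof in the paper precisely because the per-interval bound $\|f-f_{\mathrm{PWA}}\|_{L^\infty([x_i,x_{i+1}])}\le(\Gamma_{\max}/8)h_i^2$ is the intermediate step already obtained in the proof of Theorem~\ref{thm:pwa} before passing to $h\le\Delta$, and you correctly extract and re-cite that step together with the one-line passage to the global maximum.
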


\begin{corollary}[Sign of the error for convexity/concavity]\label{cor:convex}
If $f''\ge 0$ on $[a,b]$ (convex function), then $f_{\mathrm{PWA}}(x)\le f(x)$ for every $x\in(a,b)$, and
\(
0\le f(x)-f_{\mathrm{PWA}}(x)\le (\Gamma_{\max}/8)\,\Delta^2.
\)
If $f''\le 0$ (concave function), then $f_{\mathrm{PWA}}(x)\ge f(x)$ and
\(
0\le f_{\mathrm{PWA}}(x)-f(x)\le (\Gamma_{\max}/8)\,\Delta^2.
\)
\end{corollary}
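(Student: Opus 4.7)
The plan is to reuse the per-interval interpolation-remainder identity already established in the proof of Theorem~\ref{thm:pwa} and extract the \emph{sign} information (not just the absolute value). Working on a single subinterval $[x_i,x_{i+1}]$, the identity
\[
f(x)-f_{\mathrm{PWA}}(x)\;=\;\frac{f''(\xi)}{2}\,(x-x_i)(x_{i+1}-x),\qquad \xi\in(x_i,x_{i+1}),
\]
exhibits the factor $(x-x_i)(x_{i+1}-x)\ge 0$ on the interval. Hence the sign of $f(x)-f_{\mathrm{PWA}}(x)$ is governed entirely by the sign of $f''(\xi)$.

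Next I would translate the global curvature hypothesis into a pointwise statement: if $f''\ge 0$ on $[a,b]$ then $f''(\xi)\ge 0$ for every interior $\xi$, so $f_{\mathrm{PWA}}(x)\le f(x)$; the concave case $f''\le 0$ is symmetric. These one-sided inequalities patch together across subintervals without issue, because $f$ and $f_{\mathrm{PWA}}$ agree at every node $x_i$, so there is no sign ambiguity at the junctions. To close the corollary it then suffices to invoke Theorem~\ref{thm:pwa} for the magnitude bound $|f-f_{\mathrm{PWA}}|\le (\Gamma_{\max}/8)\Delta^2$; combining the sign obtained above with this bound yields the two-sided chain $0\le f(x)-f_{\mathrm{PWA}}(x)\le (\Gamma_{\max}/8)\Delta^2$ in the convex case and its mirror in the concave case.

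Since everything rides on formulas already proved, there is no genuine obstacle. The one point requiring care is keeping the sign convention consistent with Theorem~\ref{thm:pwa}: one must read the remainder in the nonnegative form $(x-x_i)(x_{i+1}-x)$ used in the preceding proof, rather than the more common $(x-x_i)(x-x_{i+1})$, so that the sign of $f''$ transfers directly to the sign of $f-f_{\mathrm{PWA}}$ without accidental inversion. A secondary remark is that the inequalities remain non-strict and extend by continuity to the endpoints $a$ and $b$, where $f=f_{\mathrm{PWA}}$ by interpolation, so the bounds hold on the closed interval.
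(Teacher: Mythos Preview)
Your proposal is correct and mirrors exactly the argument implicit in the paper: the corollary is stated without a separate proof because it follows immediately from the interpolation-remainder identity in the proof of Theorem~\ref{thm:pwa}, and you extract the sign from the nonnegative factor $(x-x_i)(x_{i+1}-x)$ just as intended. Your cautionary note on keeping the sign convention consistent with the paper's form of the remainder (rather than the more common $(x-x_i)(x-x_{i+1})$) is well placed and is precisely the point on which the corollary rests.
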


\begin{remark}[$C^{1,1}$ case]\label{rmk:c11}
The bound remains valid if $f\in C^{1,1}([a,b])$, i.e.\ $f'$ is Lipschitz with constant $L$; indeed the second derivative exists a.e.\ and $|f''|\le L$ almost everywhere. In this case we can set $\Gamma_{\max}:=L$.
\end{remark}

\begin{remark}[Choice of PWA granularity]\label{rmk:delta}
To guarantee a uniform error $\varepsilon>0$, it suffices to choose
\(
\Delta \;\le\; \sqrt{\,8\,\varepsilon/\Gamma_{\max}\,}.
\)
In the presence of \emph{kinks} (e.g.\ payoff $(S_T-K)^+$) it is advisable to insert a node at the discontinuity of $f'$: on each side $f''=0$ and the local error vanishes.
\end{remark}

\paragraph{Propagation to the aggregator \texorpdfstring{$\rho_\Omega$}{rho\_Omega}.}
Let $\{V_s\}_{s\in\mathcal{S}}$ be the vector of state values obtained with exact payoffs, and $\{\tilde V_s\}_{s\in\mathcal{S}}$ those obtained with PWA payoffs on the same grid. If the aggregator $\rho_\Omega:\mathbb{R}^{|\mathcal{S}|}\to\mathbb{R}$ is $1$-Lipschitz in $\ell_\infty$ norm (holds for $\mathbb{E}_Q$, for CVaR with standard normalization, and for many coherent measures), then
\[
\big|\rho_\Omega\big((V_s)_s\big)-\rho_\Omega\big((\tilde V_s)_s\big)\big|
\;\le\; \max_{s\in\mathcal{S}} |V_s-\tilde V_s|
\;\le\; \frac{\Gamma_{\max}}{8}\,\Delta^2,
\]
where the last inequality follows from Theorem~\ref{thm:pwa} applied \emph{within state}.

\paragraph{Operational summary.}
The bound \(\|f-f_{\mathrm{PWA}}\|_\infty \le (\Gamma_{\max}/8)\Delta^2\) is:
(i) \emph{local} per interval, (ii) \emph{global} via $\Delta$, (iii) \emph{directed} by convexity/concavity (useful in prudential contexts), and (iv) \emph{stable} with respect to risk/value aggregation if $\rho_\Omega$ is $1$-Lipschitz. The choice of nodes should concentrate the grid around thresholds/triggers to maximize efficiency.

\subsection{Compact numerical examples (CBV, SCL, aggregation)}
\label{app:esempi-compatti}

In this section we present three minimal, quantitative, audit-first examples supporting the operational reading of the framework:
(i) a \emph{waterfall} with \emph{cap} (local CBV and affine-lift),
(ii) a \emph{CDS} with three states and prudential aggregation (SCL + CVaR),
(iii) a \emph{PPP/Fisher} comparison with invariance to the observer (transformation laws).

\subsubsection*{Example 1 — Waterfall with \texorpdfstring{\emph{cap}}{cap}: senior/junior payments}

Consider an inflow on the cut equal to $x\ge 0$; there are two creditors in \emph{waterfall} with priority: \emph{Senior} with \emph{cap} $L=100$, then \emph{Junior} on any excess. The \emph{affine-lift} representation via epigraphs (\S\ref{subsec:nonlinear-scl}) gives, \emph{within state}:
\[
\text{Senior}(x)=\min\{x,100\},\qquad
\text{Junior}(x)=\max\{0,\,x-100\}.
\]
Tab.~\ref{tab:ex1} illustrates three cases; Fig.~\ref{fig:ex1} visualizes the payout functions.

\begin{table}[H]
\centering
\caption{Example 1: waterfall payments with cap $L=100$.}
\label{tab:ex1}
\begin{tabular}{@{}rccc@{}}
\toprule
$x$ & Senior & Junior & Sum \\
\midrule
60  & 60  & 0  & 60 \\
90  & 90  & 0  & 90 \\
150 & 100 & 50 & 150 \\
\bottomrule
\end{tabular}
\end{table}

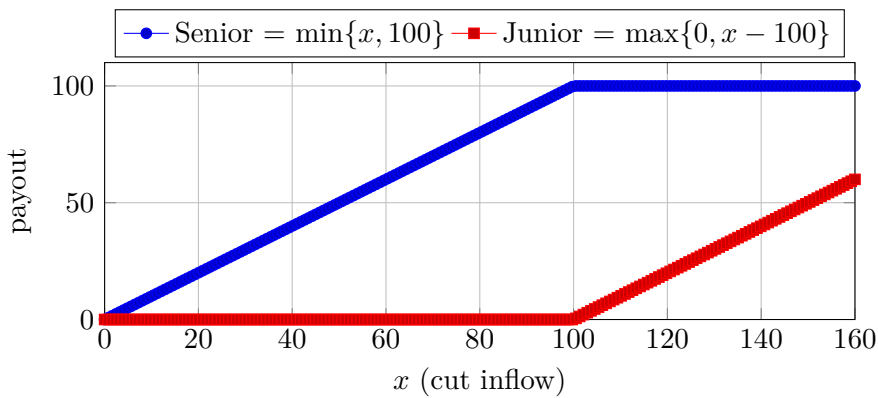
\begin{figure}[H]
\centering
\begin{tikzpicture}
\begin{axis}[
    width=0.7\linewidth,
    height=5cm,
    xlabel={$x$ (cut inflow)},
    ylabel={payout},
    ymin=0,
    xmin=0, xmax=160,
    legend style={at={(0.5,1.02)},anchor=south,legend columns=2},
    ymajorgrids=true, xmajorgrids=true
]
\addplot+[domain=0:160, samples=200] {min(x,100)}; \addlegendentry{Senior = $\min\{x,100\}$}
\addplot+[domain=0:160, samples=200] {max(0,x-100)}; \addlegendentry{Junior = $\max\{0,x-100\}$}
\end{axis}
\end{tikzpicture}
\caption{Example 1: senior/junior payments as functions of $x$.}
\label{fig:ex1}
\end{figure}

\paragraph{CBV note.}
Per-state evaluation is local linear on the cut; the \emph{cap} operator is implemented with epigraphs (exact affine representation), preserving auditability and internal cancellation.

\medskip

\subsubsection*{Example 2 — Three-state CDS (SCL) and prudential aggregation}

A perimeter $P$ \emph{purchases} protection on notional 100. Fixed premium (PV) equal to $1$. States:
\begin{itemize}[nosep,leftmargin=1.5em]
\item $s_1$: \emph{no default} (probability $0.95$) $\Rightarrow$ payoff $V_{s_1}=-1$;
\item $s_2$: \emph{idiosyncratic default} (prob.\ $0.04$), recovery $R=40\%$ $\Rightarrow$ $V_{s_2}=100(1-R)-1=59$;
\item $s_3$: \emph{systemic default} (prob.\ $0.01$), $R=10\%$ $\Rightarrow$ $V_{s_3}=100(1-R)-1=89$.
\end{itemize}
\[
\mathbb{E}_Q[V] \;=\; 0.95(-1)+0.04(59)+0.01(89)=2.30.
\]
With prudential aggregator $\rho_\Omega=\mathrm{CVaR}_{0.95}$ on the per-state CBV, the worst outcomes (5\% tail) coincide with $V=-1$; hence $\mathrm{CVaR}_{0.95}(V)=-1$.\footnote{With standard definitions of VaR/CVaR (Acerbi–Tasche), the 95\% tail contains the mass leading to $P(V\le \mathrm{VaR}_{0.95})=0.95$. Here the only value in the tail is $-1$.}
Tab.~\ref{tab:ex2} summarizes; Fig.~\ref{fig:ex2} visualizes the per-state values and aggregates.

\begin{table}[H]
\centering
\caption{Example 2: per-state CBV, $\mathbb{E}_Q$ and $\mathrm{CVaR}_{0.95}$.}
\label{tab:ex2}
\begin{tabular}{@{}lccc@{}}
\toprule
State & Prob. & $V_s$ (CBV) & Expected contribution \\
\midrule
$s_1$ (no default)        & $0.95$ & $-1$ & $-0.95$ \\
$s_2$ (idiosyncratic default) & $0.04$ & $59$ & $+2.36$ \\
$s_3$ (systemic default)  & $0.01$ & $89$ & $+0.89$ \\
\midrule
$\mathbb{E}_Q[V]$         &   &  & $2.30$ \\
$\mathrm{CVaR}_{0.95}(V)$ &   &  & $-1.00$ \\
\bottomrule
\end{tabular}
\end{table}

\begin{figure}[H]
\centering
\begin{tikzpicture}
\begin{axis}[
    width=0.7\linewidth,
    height=5cm,
    ybar,
    bar width=12pt,
    symbolic x coords={s1,s2,s3,E[·],CVaR95},
    xtick=data,
    ymin=-5,
    ylabel={Value},
    ymajorgrids=true, xmajorgrids=false
]
\addplot coordinates {(s1,-1) (s2,59) (s3,89) (E[·],2.30) (CVaR95,-1)};
\end{axis}
\end{tikzpicture}
\caption{Example 2: per-state values and aggregates ($\mathbb{E}_Q$, $\mathrm{CVaR}_{0.95}$).}
\label{fig:ex2}
\end{figure}
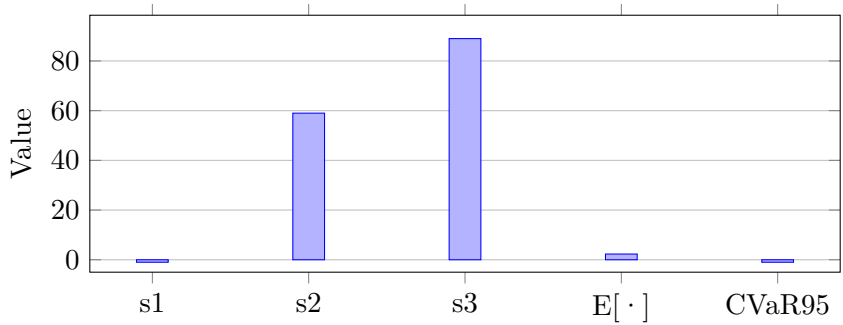

\paragraph{SCL/aggregation note.}
Measurement remains linear \emph{within state} (Regime~A); prudence is entirely incorporated in the aggregator $\rho_\Omega$.

\subsubsection*{Example 3 — PPP/Fisher and observer invariance}

Two goods, two countries. Prices: Italy (EUR) $p^{IT}=(10,\,5)$; USA (USD) $p^{US}=(8,\,6)$; nominal exchange rate $e=\frac{\text{USD}}{\text{EUR}}=1{,}2$. Quantities: $q^{IT}=(1,\,2)$, $q^{US}=(1{,}5,\,1{,}5)$. Converting US prices into EUR: $p^{US\to EUR}=p^{US}/e=(6{,}667,\,5)$.
\begin{align*}
L^{EUR} &= \frac{p^{US\to EUR}\cdot q^{IT}}{p^{IT}\cdot q^{IT}}
= \frac{6{,}667+10}{10+10} \approx 0{,}833,\\
P^{EUR} &= \frac{p^{US\to EUR}\cdot q^{US}}{p^{IT}\cdot q^{US}}
= \frac{10{,}000+7{,}500}{15{,}000+7{,}500} \approx 0{,}778,\\
F^{EUR} &= \sqrt{L^{EUR}P^{EUR}}\approx 0{,}805.
\end{align*}
If we evaluate in USD (alternative observer), $p^{IT\to USD}=e\,p^{IT}=(12,\,6)$ and $p^{US}=(8,\,6)$; we obtain:
\[
L^{USD}=\frac{20}{24}=0{,}833,\quad
P^{USD}=\frac{21}{27}=0{,}778,\quad
F^{USD}=\sqrt{0{,}833\cdot 0{,}778}\approx 0{,}805,
\]
\emph{invariant} under transformation (cf.\ \S\ref{sec:transform}).

\begin{table}[H]
\centering
\caption{Example 3: bilateral indices (EUR vs USD) — observer invariance.}
\label{tab:ex3}
\begin{tabular}{@{}lccc@{}}
\toprule
Observer & $L$ & $P$ & $F$ \\
\midrule
EUR & $0{,}833$ & $0{,}778$ & $0{,}805$ \\
USD & $0{,}833$ & $0{,}778$ & $0{,}805$ \\
\bottomrule
\end{tabular}
\end{table}

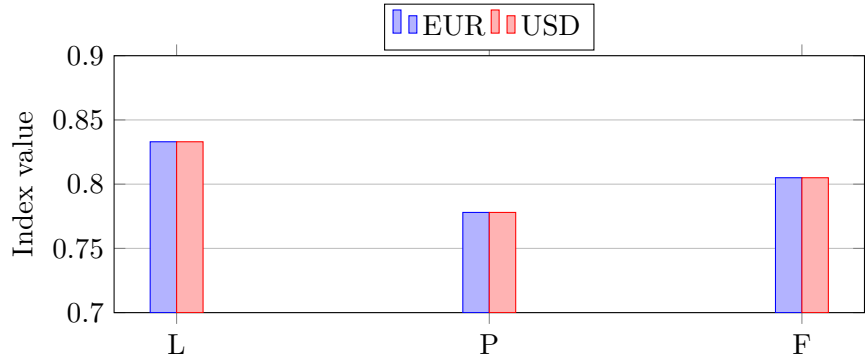
\begin{figure}[H]
\centering
\begin{tikzpicture}
\begin{axis}[
    width=0.7\linewidth,
    height=5cm,
    ybar=0pt,
    bar width=10pt,
    symbolic x coords={L,P,F},
    xtick=data,
    ymin=0.7, ymax=0.9,
    ylabel={Index value},
    legend style={at={(0.5,1.02)},anchor=south,legend columns=2},
    ymajorgrids=true
]
\addplot coordinates {(L,0.833) (P,0.778) (F,0.805)};
\addlegendentry{EUR}
\addplot coordinates {(L,0.833) (P,0.778) (F,0.805)};
\addlegendentry{USD}
\end{axis}
\end{tikzpicture}
\caption{Example 3: indices $L,P,F$ in EUR and USD (coincide under transformation).}
\label{fig:ex3}
\end{figure}

\paragraph{Observer/CBV note.}
The indices are \emph{observer-consistent artifacts}: changing units/FX produces transformations that preserve Fisher ratios (cf.\ transformation laws).

\subsubsection*{Aggregation policy \texorpdfstring{$\rho_\Omega$}{rho\_Omega}: operational guide}

\begin{table}[H]
\centering
\caption{Policies for aggregator $\rho_\Omega$: definition, required data, properties, pros/cons, use cases.}
\label{tab:rho-policy}
\begin{tabularx}{\linewidth}{@{}X X X X X X@{}}
\toprule
\textbf{Policy} & \textbf{Brief definition} & \textbf{Required data} & \textbf{Properties} & \textbf{Pros} & \textbf{Cons / Notes} \\
\midrule
Neutral (expectation under $Q$) &
$\rho_\Omega(\mathbf V)=\mathbb{E}_Q[V]$ &
Risk–neutral measure $Q$; state weights &
Linear, monotone, translation–invariant &
Comparable with no-arbitrage prices; simple &
Not prudential; requires calibration of $Q$ \\
\addlinespace
SDF on physical measure &
$\rho_\Omega(\mathbf V)=\mathbb{E}_P[M\,V]$ &
Physical measure $P$; coherent SDF $M$ &
Monotone; equivalent to SDF pricing &
Connects to macro/realistic scenarios; flexible &
Choice of $M$ may be controversial \\
\addlinespace
CVaR$_\alpha$ (coherent) &
$\rho_\Omega(\mathbf V)=\mathrm{CVaR}_\alpha(V)$ &
Level $\alpha$; distribution of $V$ &
Monotone, subadditive, coherent &
Prudential; tail-focused &
Sensitive to tail estimation \\
\addlinespace
Kusuoka / Coherent mix &
$\rho_\Omega(\mathbf V)=\int_0^1 \mathrm{CVaR}_u(V)\,d\mu(u)$ &
Measure $\mu$ on $[0,1]$ &
Convex/coherent; general representation &
“Menu-type” rule between prudence and mean &
Requires specifying $\mu$; more parameters \\
\addlinespace
Worst–case on $S^\star$ &
$\rho_\Omega(\mathbf V)=\min_{s\in S^\star} V_s$ &
Plausible subset $S^\star$ &
Monotone; locally robust &
Maximum caution on “critical” states &
Often too conservative \\
\bottomrule
\end{tabularx}
\end{table}

\paragraph{Practical guidelines.}
\emph{Regulatory reporting}: CVaR$_\alpha$ or Kusuoka mix; \emph{market valuations}: expectation under $Q$ or SDF on $P$; \emph{stress testing}: worst–case on $S^\star$. It is recommended to document (audit-pack) the policy, parameters ($\alpha$, $\mu$, $M$) and the state weights.

\subsubsection*{How to choose states (SCL): operational heuristics}

\begin{figure}[H]
\centering
\fbox{%
\begin{minipage}{0.95\linewidth}
\textbf{Objective.} Partition the outcome space into \emph{few} states $\,\mathcal S=\{s_i\}$ such that, \emph{within state}, payoffs and rules are (almost) affine: CBV remains linear and local on the cut.

\medskip
\textbf{Heuristics.}
\begin{itemize}
  \item \emph{Threshold triggers}: default (yes/no), margin (on/off), cap/floor (active/inactive), waterfall (order unchanged). Insert states at activation points.
  \item \emph{Known kinks}: for options, include a node at strike $K$; for piecewise payoffs, include nodes at slope changes.
  \item \emph{Coarse granularity}: start with $3$–$7$ relevant states total; increase only if the ex-post residual is excessive.
  \item \emph{Sensitivity}: recalculate with a refined partition (e.g.\ doubling states on triggers) and compare aggregate $\rho_\Omega$; if the difference $<\tau$ (tolerance), keep the more parsimonious partition.
  \item \emph{Composition}: independent states for different contracts remain compatible; avoid combinatorial explosion by merging triggers that rarely co-activate.
\end{itemize}

\textbf{Acceptance criterion.} Choose the simplest partition that respects: (i) aggregate error $<\tau$; (ii) qualitative stability of per-state contributions; (iii) interpretability (audit).
\end{minipage}}
\caption{Box — Choice of SCL states: principles and robustness checks.}
\label{box:scl-states}
\end{figure}

\subsubsection*{Impact of granularity \texorpdfstring{$\Delta$}{Delta} in the PWA approximation}

The bound \(\|f-f_{\mathrm{PWA}}\|_\infty \le (\Gamma_{\max}/8)\Delta^2\) implies
\[
\Delta_{\max}(\varepsilon, \Gamma_{\max}) \;=\; \sqrt{\frac{8\,\varepsilon}{\Gamma_{\max}}}\,,
\qquad
N(L)\;\approx\;\Big\lceil \frac{L}{\Delta_{\max}}\Big\rceil.
\]
Tab.~\ref{tab:pwa-delta} reports \(\Delta_{\max}\) and the number of segments \(N\) for a normalized interval \(L=1\).
For intervals of length \(L\neq 1\), multiply \(N\) by \(L\).

\begin{table}[H]
\centering
\caption{Maximum granularity $\Delta_{\max}$ for target uniform error $\varepsilon$ and curvature $\Gamma_{\max}$; $N=\lceil 1/\Delta_{\max}\rceil$ on $L=1$.}
\label{tab:pwa-delta}
\begin{tabular}{@{}lcccc@{}}
\toprule
\multirow{2}{*}{$\varepsilon$} & \multicolumn{1}{c}{$\Gamma_{\max}=0.5$} & \multicolumn{1}{c}{$\Gamma_{\max}=1$} & \multicolumn{1}{c}{$\Gamma_{\max}=2$} & \multicolumn{1}{c}{$\Gamma_{\max}=5$} \\
 & $\Delta_{\max}$ \quad $(N)$ & $\Delta_{\max}$ \quad $(N)$ & $\Delta_{\max}$ \quad $(N)$ & $\Delta_{\max}$ \quad $(N)$ \\
\midrule
$0.05$  & $0.8944$ \ $(2)$ & $0.6325$ \ $(2)$ & $0.4472$ \ $(3)$ & $0.2828$ \ $(4)$ \\
$0.02$  & $0.5657$ \ $(2)$ & $0.4000$ \ $(3)$ & $0.2828$ \ $(4)$ & $0.1789$ \ $(6)$ \\
$0.01$  & $0.4000$ \ $(3)$ & $0.2828$ \ $(4)$ & $0.2000$ \ $(5)$ & $0.1265$ \ $(8)$ \\
$0.005$ & $0.2828$ \ $(4)$ & $0.2000$ \ $(5)$ & $0.1414$ \ $(8)$ & $0.0894$ \ $(12)$ \\
$0.001$ & $0.1265$ \ $(8)$ & $0.0894$ \ $(12)$ & $0.0632$ \ $(16)$ & $0.0400$ \ $(25)$ \\
\bottomrule
\end{tabular}
\end{table}

\paragraph{Practical guidelines.}
\begin{itemize}
  \item \emph{Prudential target}: choose $\varepsilon$ based on materiality (\% of value) and audit policy; derive $\Delta_{\max}$ from the formula; set $N\simeq \lceil L/\Delta_{\max}\rceil$.
  \item \emph{Adaptive grid}: concentrate nodes where $|f''|$ is higher and near kinks/triggers; reduce nodes where $|f''|$ is minimal.
  \item \emph{Stability check}: halve $\Delta$ in critical areas and verify that the change in $\rho_\Omega$ remains $<\tau$; otherwise, locally increase resolution.
\end{itemize}

\subsection{Schur complement and internal cancellation}
\label{app:schur}

Consider the global linear system $\,\bm v=\bm b + O\,\bm v\,$, i.e.\ $(I-O)\bm v=\bm b$. With partition $V=P\cup O$ and blocks
\[
I-O\;=\;
\begin{bmatrix}
I - O_{PP} & -\,O_{PO}\\[2pt]
-\,O_{OP} & I - O_{OO}
\end{bmatrix}, 
\qquad 
\bm v \!=\! \begin{bmatrix}\bm v_P\\ \bm v_O\end{bmatrix},\quad
\bm b \!=\! \begin{bmatrix}\bm b_P\\ \bm b_O\end{bmatrix},
\]
eliminating $\bm v_P$ (when $I-O_{PP}$ is invertible) yields the \emph{Schur complement} with respect to $P$:
\[
S_{OO} \;:=\; I - O_{OO} \;-\; O_{OP}(I - O_{PP})^{-1}O_{PO}.
\]
The block $S_{OO}$ is the \emph{effective} operator on $O$ after “absorbing” the internal network of $P$.

\begin{lemma}[Elimination of $P$ via Schur]
\label{lem:schur-elim}
If $I-O_{PP}$ is invertible, then the solutions of the global system satisfy
\[
\bm v_O \;=\; S_{OO}^{-1}\,\big(\bm b_O + O_{OP}(I-O_{PP})^{-1}\bm b_P\big), 
\qquad
\bm v_P \;=\; (I-O_{PP})^{-1}\big(\bm b_P + O_{PO}\,\bm v_O\big).
\]
\end{lemma}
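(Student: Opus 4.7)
The plan is to carry out the standard two-block elimination on the partitioned system $(I-O)\bm v=\bm b$. Writing out the block equations explicitly, the first row reads
\[
(I-O_{PP})\,\bm v_P \;-\; O_{PO}\,\bm v_O \;=\; \bm b_P,
\]
and the second row reads
\[
-\,O_{OP}\,\bm v_P \;+\; (I-O_{OO})\,\bm v_O \;=\; \bm b_O.
\]
Under the hypothesis that $I-O_{PP}$ is invertible, I solve the first row for $\bm v_P$, obtaining directly the second displayed formula of the lemma:
\[
\bm v_P \;=\; (I-O_{PP})^{-1}\!\big(\bm b_P + O_{PO}\,\bm v_O\big).
\]

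Next I substitute this expression for $\bm v_P$ into the second block row. This yields
\[
-\,O_{OP}(I-O_{PP})^{-1}\!\big(\bm b_P + O_{PO}\,\bm v_O\big) \;+\; (I-O_{OO})\,\bm v_O \;=\; \bm b_O.
\]
Collecting the terms in $\bm v_O$ on the left and the purely exogenous terms on the right gives
\[
\Big[\,(I-O_{OO}) \;-\; O_{OP}(I-O_{PP})^{-1}O_{PO}\,\Big]\,\bm v_O
\;=\; \bm b_O \;+\; O_{OP}(I-O_{PP})^{-1}\bm b_P,
\]
which is exactly $S_{OO}\,\bm v_O = \bm b_O + O_{OP}(I-O_{PP})^{-1}\bm b_P$ by the definition of the Schur complement given just above the lemma. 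Inverting $S_{OO}$ (which the statement implicitly assumes through the notation $S_{OO}^{-1}$) yields the first claimed identity; back-substitution into the earlier expression for $\bm v_P$ preserves the second.

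Honestly, there is no hard step here: the result is a textbook block elimination, and the only conceptual point worth emphasising is the \emph{role} of the hypothesis. Invertibility of $I-O_{PP}$ is precisely what allows the internal block to be algebraically ``absorbed'' into an effective operator on $O$, making explicit the internal cancellation that underlies the Cut Theorem (Theorem~\ref{thm:cutTheoreme}) in Regime~B. The mild subtlety to flag is that the first formula tacitly requires $S_{OO}$ to be invertible; this is automatic under the sufficient conditions of Proposition~\ref{prop:sufficienti} (for instance $\rho(O)<1$), since then $I-O$ itself is invertible and a standard identity equates $\det(I-O)=\det(I-O_{PP})\cdot\det(S_{OO})$, forcing $\det(S_{OO})\neq 0$. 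With this remark, the two identities of the lemma follow with no further work.
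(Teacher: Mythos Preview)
Your proof is correct and is exactly the standard block elimination the paper has in mind; in fact the paper states the lemma without proof, having just introduced the block partition and the Schur complement $S_{OO}$ precisely so that this elimination becomes a one-line consequence. Your remark that invertibility of $S_{OO}$ is tacitly assumed (and follows from $\det(I-O)=\det(I-O_{PP})\det(S_{OO})$ under the sufficient conditions of Proposition~\ref{prop:sufficienti}) is a useful clarification that the paper leaves implicit.
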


\begin{proposition}[Internal cancellation as Schur elimination]
\label{prop:schur-cancel}
The cut measure
\(
W(P)=\mathbf 1_P^\top \bm b_P + \mathbf 1_P^\top O_{PO}\bm v_O - \mathbf 1_O^\top O_{OP}\bm v_P
\)
is \emph{invariant} with respect to any transformation of the internal network of $P$ that leaves invariant the \emph{effective} operators $T_{PO}=(I-O_{PP})^{-1}O_{PO}$ and $U_{OP}=O_{OP}(I-O_{PP})^{-1}$ (and, in Regime~A, that leaves $\bm v_P$ unchanged).
\end{proposition}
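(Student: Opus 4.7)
The plan is to rewrite $W(P)$ so that the raw blocks $(O_{PP},O_{PO},O_{OP})$ collapse onto the effective operators $T_{PO}$ and $U_{OP}$, from which the invariance follows by inspection of the resulting expression. I would treat Regime~B first, since Lemma~\ref{lem:schur-elim} supplies the algebraic leverage; the Regime~A statement then reduces to the same identity read from the opposite direction.

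For Regime~B, I would substitute $\bm v_P=(I-O_{PP})^{-1}(\bm b_P+O_{PO}\bm v_O)$ from Lemma~\ref{lem:schur-elim} into the minority term of \eqref{eq:cut}. Using $U_{OP}=O_{OP}(I-O_{PP})^{-1}$, this term collapses to $\mathbf 1_O^\top U_{OP}(\bm b_P+O_{PO}\bm v_O)$, and collecting the remaining contributions yields the factored identity
\[
W(P)\;=\;\bigl(\mathbf 1_P^\top-\mathbf 1_O^\top U_{OP}\bigr)\bigl(\bm b_P+O_{PO}\bm v_O\bigr)\;=\;\bigl(\mathbf 1_P^\top-\mathbf 1_O^\top U_{OP}\bigr)(I-O_{PP})\bm v_P.
\]
The prefactor depends on $U_{OP}$ alone, and the source $(\bm b_P+O_{PO}\bm v_O)=(I-O_{PP})\bm v_P$ is, by Schur, the unique ``input'' producing $\bm v_P$ inside $P$: an admissible transformation $O_{PP}\to O_{PP}'$ that preserves $T_{PO}$ forces the co-adjustment $O_{PO}'=(I-O_{PP}')T_{PO}$, and jointly with the preserved primitives ($\bm b_P$ in Regime~B, $\bm v_P$ in Regime~A) this pins the source down to a single invariant object. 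Invariance of both factors, and therefore of $W(P)$, then follows.

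For Regime~A the derivation is reversed: substituting $\bm b_P=(I-O_{PP})\bm v_P-O_{PO}\bm v_O$ into \eqref{eq:cut} causes the two $\mathbf 1_P^\top O_{PO}\bm v_O$ contributions to cancel, leaving exactly the dual form $W(P)=(\mathbf 1_P^\top-\mathbf 1_O^\top U_{OP})(I-O_{PP})\bm v_P$. With $\bm v_P$ held fixed and $U_{OP}$ preserved by hypothesis, both factors are invariant and the cut measure is as well.

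The main obstacle I anticipate is exactly the apparent $O_{PP}$- and $O_{PO}$-dependence of the source factor: read term-by-term, $(I-O_{PP})\bm v_P=\bm b_P+O_{PO}\bm v_O$ looks like it moves under an admissible transformation, and its invariance only becomes visible once the Schur-induced co-adjustment of $O_{PO}$ with $O_{PP}$ (and of $O_{OP}$ with $O_{PP}$) is written out explicitly. Carrying this substitution through block-by-block, without tacit column-sum normalizations of $O$, is the careful bookkeeping of the argument; it is the row-side counterpart of the column-side Schur elimination of Lemma~\ref{lem:schur-elim}.
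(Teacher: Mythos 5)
Your factored identity $W(P)=(\mathbf 1_P^\top-\mathbf 1_O^\top U_{OP})(\bm b_P+O_{PO}\bm v_O)$ is correct and is exactly the substitution the paper's proof performs (insert $\bm v_P$ from Lemma~\ref{lem:schur-elim} into $W(P)$). The gap is in the invariance step. You allow the cut block to co-adjust, $O_{PO}'=(I-O_{PP}')T_{PO}$, and then assert that the source $\bm b_P+O_{PO}\bm v_O=(I-O_{PP})\bm v_P$ is ``pinned down to a single invariant object''. It is not: with $\bm b_P,\bm v_O$ fixed the transformed source is $\bm b_P+(I-O_{PP}')T_{PO}\bm v_O$, which manifestly depends on $O_{PP}'$, and in Regime~B $\bm v_P$ is re-estimated after the transformation, so $(I-O_{PP}')\bm v_P'$ need not equal $(I-O_{PP})\bm v_P$. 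A $1\times 1$ example makes this concrete: take $O_{PP}=c$, $O_{PO}=a$, $O_{OP}=d$, $\bm b_P=b$, $\bm v_O=v$, so $T_{PO}=a/(1-c)$, $U_{OP}=d/(1-c)$ and $W(P)=(1-U_{OP})(b+av)$; replacing $c$ by $c'\neq c$ with $a'=(1-c')T_{PO}$, $d'=(1-c')U_{OP}$ preserves $T_{PO},U_{OP}$ but changes $W(P)$ whenever $T_{PO}v\neq 0$ and $U_{OP}\neq 1$. So under your reading of ``admissible transformation'' the proposition itself would fail; the statement (and the paper's one-line proof) presuppose that the cut blocks $O_{PO},O_{OP}$ and the primitives $\bm b_P,\bm v_O$ are held fixed as boundary data, with only the internal block $O_{PP}$ varying subject to $T_{PO},U_{OP}$ being preserved. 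Under that reading your identity already finishes Regime~B with no co-adjustment argument at all: the prefactor depends only on $U_{OP}$ and the source only on fixed data.

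The Regime~A half has the same slippage: in $W(P)=(\mathbf 1_P^\top-\mathbf 1_O^\top U_{OP})(I-O_{PP})\bm v_P$ the second factor does depend on $O_{PP}$, which is precisely what the transformation changes, so ``both factors are invariant'' is not justified as written. Its invariance requires invoking the structural identity with $\bm b_P$, $O_{PO}$, $\bm v_O$ fixed, so that $(I-O_{PP})\bm v_P=\bm b_P+O_{PO}\bm v_O$ is pinned to fixed data; or, more simply, observe as the paper does that in Regime~A the expression for $W(P)$ contains no $O_{PP}$ at all, so invariance under changes of the internal block is immediate (it is just the Cut Theorem). In short: your algebra reproduces the paper's argument, but the bookkeeping of what is held fixed under an admissible transformation is inconsistent, and the co-adjustment claim on which your invariance conclusion rests is false in general.
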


\begin{proof}[Idea]
By substituting $\bm v_P$ from Lemma~\ref{lem:schur-elim} into $W(P)$ one obtains an expression depending on $T_{PO}$ and $U_{OP}$. Any internal restructuring that preserves these operators (i.e.\ the \emph{effect} of $P$ as seen from the frontier) leaves $W(P)$ unchanged. This is the “block” version of the Cut Theorem.
\end{proof}

\paragraph{Observation (Regime A vs Regime B).}
In Regime~A invariance is immediate because $W(P)$ depends only on observed frontier quantities. In Regime~B, the dependence on $O_{PP}$ enters only through $T_{PO}$ and $U_{OP}$: internal cancellation is equivalent to Schur elimination of $P$.

\section{Formal proof of the Cut Theorem}\label{app:formal-cut-proof}

In this section we provide a rigorous proof of Theorem~\ref{thm:cut}.
Let $V$ be the finite set of nodes, $P\subseteq V$ a perimeter and $O=V\setminus P$ its complement.
We denote by $\mathbf{1}_P\in\R^{|P|}$ and $\mathbf{1}_O\in\R^{|O|}$ the all–ones vectors.
We write the variables in block form:
\[
\bm v=\begin{pmatrix}\bm v_P\\ \bm v_O\end{pmatrix},\qquad
\bm b=\begin{pmatrix}\bm b_P\\ \bm b_O\end{pmatrix},\qquad
O=\begin{pmatrix} O_{PP} & O_{PO}\\ O_{OP} & O_{OO}\end{pmatrix}.
\]
We assume the standard structural identity $\bm v=\bm b+O\,\bm v$ (value as base plus participations).

\begin{lemma}[Aggregated internal value]\label{lem:vp-aggregate}
It holds that $\mathbf{1}_P^\top \bm v_P=\mathbf{1}_P^\top \bm b_P+\mathbf{1}_P^\top O_{PP}\bm v_P+\mathbf{1}_P^\top O_{PO}\bm v_O$.
\end{lemma}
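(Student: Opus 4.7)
The plan is to derive the claim directly from the structural identity $\bm v=\bm b+O\,\bm v$ by first isolating the $P$-block equation and then applying the aggregating functional $\mathbf{1}_P^\top(\,\cdot\,)$. No other ingredient is needed: the lemma is a pure consequence of block multiplication and the linearity of the row-sum operator.

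First, I would write out the block form of $\bm v=\bm b+O\bm v$ using the partition $V=P\cup O$ introduced above. Reading off the top block row gives the vector identity
\[
\bm v_P \;=\; \bm b_P \;+\; O_{PP}\,\bm v_P \;+\; O_{PO}\,\bm v_O,
\]
since $O\bm v$ in the $P$-block equals $O_{PP}\bm v_P+O_{PO}\bm v_O$ by the definition of the block decomposition of $O$.

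Next, I would left-multiply this identity by the row vector $\mathbf{1}_P^\top\in\R^{1\times|P|}$. Because matrix multiplication is linear in each factor, this distributes over the three summands on the right, producing exactly
\[
\mathbf{1}_P^\top \bm v_P \;=\; \mathbf{1}_P^\top \bm b_P \;+\; \mathbf{1}_P^\top O_{PP}\,\bm v_P \;+\; \mathbf{1}_P^\top O_{PO}\,\bm v_O,
\]
which is the stated identity. There is no substantive obstacle here; the only thing to be careful about is bookkeeping of the dimensions (that $\mathbf{1}_P^\top$ has the right length to act on both $\bm v_P$ and $\bm b_P$ in $\R^{|P|}$, and that $\mathbf{1}_P^\top O_{PO}$ is a row in $\R^{1\times|O|}$ which correctly contracts against $\bm v_O\in\R^{|O|}$), so that each of the four scalar quantities in the claim is well-defined.

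Since this lemma will serve as a building block for the formal proof of Theorem~\ref{thm:cut}, I would additionally note (in a remark, not in the proof body) that $\mathbf{1}_P^\top O_{PP}\bm v_P$ represents the aggregated value of \emph{internal} participations, which is precisely the term that will be cancelled when combined with the $O$-block identity $\mathbf{1}_O^\top O_{OP}\bm v_P$ in the subsequent steps of the Cut Theorem proof. No hard step is expected: the lemma is essentially a notational repackaging of one block row of the structural identity.
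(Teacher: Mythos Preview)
Your proposal is correct and follows exactly the same approach as the paper: the paper's own proof is the single sentence ``This is the $P$ component of the relation $\bm v=\bm b+O\,\bm v$, premultiplied by $\mathbf{1}_P^\top$.'' Your write-up is simply a more expanded version of that same argument, with the block-row identity made explicit and the dimension bookkeeping spelled out.
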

\begin{proof}
This is the $P$ component of the relation $\bm v=\bm b+O\,\bm v$, premultiplied by $\mathbf{1}_P^\top$.
\end{proof}

\begin{lemma}[Elimination of internal items]\label{lem:elim-internal}
The consolidated value of the perimeter, after eliminating internal participations, is
\[
W(P)=\mathbf{1}_P^\top \bm v_P-\mathbf{1}_P^\top O_{PP}\bm v_P-\mathbf{1}_O^\top O_{OP}\bm v_P.
\]
\end{lemma}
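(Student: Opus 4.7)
The plan is to derive the stated identity directly from the informal Definition~1 of $W(P)$ via an ownership-decomposition argument over the columns of $O$. For each single entity $j\in P$, the equity value $v_j$ is partitioned among its owners according to column $j$ of $O$: a share $\sum_{i\in P}O_{ij}$ is held by other members of $P$ (intra-group cross-holdings), a share $\sum_{i\in O}O_{ij}$ is held by external entities in $O$ (external minorities), and the residual share $1-\sum_{i\in V}O_{ij}$ is attributable to the ultimate (non-equity) beneficial owners of $j$. The consolidated value $W(P)$ of the perimeter is, by definition, the total wealth attributable to these ultimate owners across all $j\in P$, namely $\sum_{j\in P}v_j\bigl(1-\sum_{i\in V}O_{ij}\bigr)$.

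First I would write this residual identity column by column for each $j\in P$ and then sum over $j$, splitting the inner double sum by the partition $V=P\cup O$:
\[
W(P)\;=\;\sum_{j\in P} v_j\;-\;\sum_{j\in P}\sum_{i\in P}O_{ij}\,v_j\;-\;\sum_{j\in P}\sum_{i\in O}O_{ij}\,v_j.
\]
The three terms can then be re-assembled in block matrix-vector form as $\mathbf{1}_P^\top\bm v_P$, $\mathbf{1}_P^\top O_{PP}\bm v_P$ and $\mathbf{1}_O^\top O_{OP}\bm v_P$ respectively, which reproduces exactly the formula of Lemma~\ref{lem:elim-internal}. No invertibility or spectral-radius hypothesis is required: the identity is purely combinatorial once the column-wise ownership partition is in place, and it does not invoke Theorem~\ref{thm:cut} (which will follow by combining this lemma with Lemma~\ref{lem:vp-aggregate}).

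The main obstacle is the opening conceptual identification: justifying that \emph{``value attributable to internal owners $P$ over all entities''} (Definition~1) rigorously coincides with the aggregate \emph{residual} share of equity in $P$, i.e.\ the part of $\bm v_P$ not claimed by any other node of the equity graph, rather than with some alternative notion such as only $P$'s direct holdings on $O$. This is where the consolidation interpretation of the framework is fixed; I would make it explicit by appealing to the standing convention that column $j$ of $O$ fully represents the equity ownership of $j$ (with the deficit from one absorbed by end-owners), so that the complementary share captures precisely the consolidated wealth of $j$ accruing to the group $P$. Once this identification is granted, the remainder of the argument reduces to routine re-indexing and block decomposition.
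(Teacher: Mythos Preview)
Your algebra is correct and lands on the lemma's formula, but the route differs from the paper's, and the difference sits exactly where you flag the ``main obstacle''.

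The paper does not pass through any ``residual share to end-owners'' interpretation. Its proof reads Definition~1 directly as the standard consolidation rule applied to the gross sum $\mathbf{1}_P^\top\bm v_P$: intercompany holdings $\mathbf{1}_P^\top O_{PP}\bm v_P$ are eliminated because in consolidation they cannot appear as assets of the perimeter (they would double-count what is already inside $\bm v_P$), and external minorities $\mathbf{1}_O^\top O_{OP}\bm v_P$ are subtracted because they are liabilities of the perimeter to outsiders. Two accounting eliminations, each with a one-clause rationale. Your rewriting as $\sum_{j\in P}v_j\bigl(1-\sum_{i\in V}O_{ij}\bigr)$ is algebraically the same object, but the identification you then make---that $W(P)$ equals the residual equity accruing to ultimate end-owners sitting outside $V$---does not follow from Definition~1 as stated: that definition speaks of value attributable to \emph{internal owners $P$}, most naturally the entities in $P$ themselves, not to holders outside the graph. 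Your closing sentence (``the complementary share captures precisely the consolidated wealth of $j$ accruing to the group $P$'') asserts the needed bridge rather than derives it. The bridge \emph{can} be built (consolidated equity of a group is, in accounting terms, precisely what its ultimate shareholders hold), but building it amounts to re-stating the paper's two eliminations in different language; without it, your argument rests on a redefinition of $W(P)$ rather than on Definition~1.
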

\begin{proof}
In consolidation, shares of $P$ in $P$ (\emph{intercompany}) cannot appear as assets; hence we subtract $\mathbf{1}_P^\top O_{PP}\bm v_P$.
Symmetrically, shares of $O$ in $P$ represent \emph{external minorities} and are liabilities of the perimeter: we subtract $\mathbf{1}_O^\top O_{OP}\bm v_P$.
\end{proof}

\begin{proof}[Proof of Theorem~\ref{thm:cut}]
Substituting Lemma~\ref{lem:vp-aggregate} into Lemma~\ref{lem:elim-internal}:
\[
\begin{aligned}
W(P)
&= \big(\mathbf{1}_P^\top \bm b_P+\mathbf{1}_P^\top O_{PP}\bm v_P+\mathbf{1}_P^\top O_{PO}\bm v_O\big)
   -\mathbf{1}_P^\top O_{PP}\bm v_P-\mathbf{1}_O^\top O_{OP}\bm v_P \\
&= \mathbf{1}_P^\top \bm b_P + \mathbf{1}_P^\top O_{PO}\bm v_O - \mathbf{1}_O^\top O_{OP}\bm v_P.
\end{aligned}
\]
Expanding as an indexed sum,
\[
W(P)=\sum_{j\in P} b_j \;+\; \sum_{i\in P,k\in O} O_{ik}\,v_k \;-\; \sum_{i\in O,j\in P} O_{ij}\,v_j,
\]
which is exactly \eqref{eq:cut}. The formula does not contain $O_{PP}$: the consolidated value depends only on internal bases and cut edges $P\!\leftrightarrow\!O$, i.e.\ it is \emph{cut-based}.
\end{proof}

\paragraph{Observation (invariance with respect to $O_{PP}$).}
The term $\mathbf{1}_P^\top O_{PP}\bm v_P$ cancels algebraically in the sum, proving that the internal topology ($O_{PP}$) is irrelevant for $W(P)$ given the frontier and bases: any internal restructuring of chains/cycles/holdings leaves $W(P)$ unchanged.

\section{Extended proof of uniqueness (axiomatization)}\label{app:cbv-uniqueness}
\paragraph{Lemma 1 (Reduction to the cut).}
If two networks share $(\bm b_P,\bm v_P,\bm v_O,O_{PO},O_{OP})$ but differ in $O_{PP}$, then $W(P)$ coincides.
\emph{Proof.} (B)--(C).

\paragraph{Lemma 2 (Frontier linearity).}
Under (A) and Lemma~1, $W(P)$ is a linear combination of 
$\sum_{j\in P} b_j$, $\sum_{i\in P,k\in O} O_{ik} v_k$ and $\sum_{i\in O,j\in P} O_{ij} v_j$.

\paragraph{Lemma 3 (Accounting normalizations).}
Three base cases fix the coefficients at $+1,+1,-1$: (i) $O\equiv 0$; (ii) only assets $P\!\to\!O$; (iii) only minorities $O\!\to\!P$.

\paragraph{Lemma 4 (Modularity).}
The resulting form satisfies (D) by additive decomposition and correction on the internal cut.

\paragraph{Lemma 5 (Equivariance).}
Different coefficients would break (E); the base cases fix the unique normalization.

\medskip
\noindent\emph{Closure.} Lemmas 1–5 imply Theorem~\ref{thm:cbv-uniqueness} and uniqueness. \hfill\qed

\section{Derivation of robustness bounds}\label{app:cbv-robustness}
\paragraph{Preliminaries.}
For vectors we use $p$-norms and dual $q$ with Hölder; for matrices $\|A\|_{q\leftarrow p}:=\sup_{x\neq0}\|Ax\|_q/\|x\|_p$.

\paragraph{Proof of \eqref{eq:robust-generic}.}
From the definition of $W(P)$ and with $\Delta \bm v_P=0$:
$\Delta W=\mathbf{1}_P^\top \Delta \bm b_P + \mathbf{1}_P^\top O_{PO}\,\Delta \bm v_O$.
Applying Hölder and operator sub-multiplicativity yields \eqref{eq:robust-generic}.

\paragraph{Notable cases.}
The three corollaries in the main text follow by choosing $(p,q)=(1,\infty)$, $(2,2)$, $(\infty,1)$ and estimating $\|\mathbf{1}_P\|_q$ and $\|\mathbf{1}_P^\top O_{PO}\|_q$.

\paragraph{Extension with $(I-O_{PP})^{-1}$.}
If $\rho(O_{PP})<1$, then 
$\Delta \bm v_P=(I-O_{PP})^{-1}(\Delta \bm b_P+O_{PO}\Delta \bm v_O)$
and the additional term is $-\bm\delta^\top \Delta \bm v_P$ with $\bm\delta=O_{OP}^\top \mathbf{1}_O$.
Applying again Hölder and sub-multiplicativity yields the extended bound reported.
\hfill\qed

\section{An operational metaphor: Lorentz-style scaling (non-normative)}\label{app:lorentz}

\noindent \textbf{Status.} This appendix provides a \emph{metaphor} to build intuition. It is not part of the normative framework and must not be used for measurement or disclosure in place of the observer tuple $\Omega$ and the Cut Theorem.

\paragraph{Metaphor.} Early drafts sketched a Lorentz-style map
\begin{equation}\label{eq:lorentz-metaphor}
V_{\mathrm{met}} \;=\; \frac{V_0}{\sqrt{1 - \left(\frac{\Delta R}{K}\right)^2}}\ ,
\end{equation}
where $\Delta R$ roughly captured differences in reporting protocols (perimeter, base, unit/PPP/SDF), and $K$ a scale limit. As a metaphor, \eqref{eq:lorentz-metaphor} says that large protocol differences can amplify discrepancies in reported values.

\paragraph{Why non-normative.} Without an axiomatic derivation and an empirically validated mapping from protocol differences to $\Delta R$ and $K$, \eqref{eq:lorentz-metaphor} is not a measurement rule. The normative framework in the paper replaces it with explicit \emph{transformation laws} (Section~\ref{sec:transform}) and \emph{disclosure artifacts} (Section~\ref{sec:standards}).

\paragraph{Safe use.} The metaphor can be used in pedagogical contexts to convey that \emph{values depend on the observer} and that protocol differences may have non-linear effects. For any calculation or audit, use the observer tuple $\Omega$, the Cut Theorem, and the algorithms of Sections~\ref{sec:algorithms} and following.

\newpage
\addcontentsline{toc}{section}{Bibliography}\label{sec:Bibliografia}

\

\begin{thebibliography}{99}


\bibitem{MillerBlair2009}
Miller, R.~E., \& Blair, P.~D. (2009).
\newblock \emph{Input--Output Analysis: Foundations and Extensions} (2nd ed.).
\newblock Cambridge: Cambridge University Press. \doi{10.1017/CBO9780511626982}

\bibitem{ESA2010}
European Commission, International Monetary Fund, Organisation for Economic Co-operation and Development, United Nations, \& World Bank (2013).
\newblock \emph{European System of Accounts --- ESA 2010}.
\newblock Luxembourg: Publications Office of the European Union.

\bibitem{LaPorta1999}
La Porta, R., Lopez-de-Silanes, F., \& Shleifer, A. (1999).
\newblock Corporate Ownership Around the World.
\newblock \emph{The Journal of Finance}, 54(2), 471--517. \doi{10.1111/0022-1082.00115}

\bibitem{AlmeidaWolfenzon2006}
Almeida, H., \& Wolfenzon, D. (2006).
\newblock A Theory of Pyramidal Ownership and Family Business Groups.
\newblock \emph{The Journal of Finance}, 61(6), 2637--2680. \doi{10.1111/j.1540-6261.2006.01001.x}

\bibitem{Bebchuk2000}
Bebchuk, L.~A., Kraakman, R., \& Triantis, G. (2000).
\newblock Stock Pyramids, Cross-Ownership, and Dual Class Equity: The Mechanisms and Agency Costs of Separating Control from Cash-Flow Rights.
\newblock In R.~K. Morck (Ed.), \emph{Concentrated Corporate Ownership} (pp. 295--318). Chicago, IL: University of Chicago Press. \doi{10.7208/9780226536828-014}

\bibitem{Acemoglu2012}
Acemoglu, D., Carvalho, V.~M., Ozdaglar, A., \& Tahbaz-Salehi, A. (2012).
\newblock The Network Origins of Aggregate Fluctuations.
\newblock \emph{Econometrica}, 80(5), 1977--2016. \doi{10.3982/ECTA9623}

\bibitem{Carvalho2014}
Carvalho, V.~M. (2014).
\newblock From Micro to Macro via Production Networks.
\newblock \emph{Journal of Economic Perspectives}, 28(4), 23--48. \doi{10.1257/jep.28.4.23}

\bibitem{Jackson2010}
Jackson, M.~O. (2010).
\newblock \emph{Social and Economic Networks}.
\newblock Princeton, NJ: Princeton University Press.

\bibitem{Gabaix2011}
Gabaix, X. (2011).
\newblock The Granular Origins of Aggregate Fluctuations.
\newblock \emph{Econometrica}, 79(3), 733--772. \doi{10.3982/ECTA8769}

\bibitem{IFRS12}
International Accounting Standards Board (IASB) (2011).
\newblock \emph{IFRS 12: Disclosure of Interests in Other Entities}.
\newblock London: IFRS Foundation.

\bibitem{IAS28}
International Accounting Standards Board (IASB) (2011).
\newblock \emph{IAS 28: Investments in Associates and Joint Ventures}.
\newblock London: IFRS Foundation.

\bibitem{UN_SUTIOT_2018}
United Nations (2018).
\newblock \emph{Handbook on Supply, Use and Input--Output Tables with Extensions and Applications}.
\newblock New York: United Nations.

\bibitem{OECD_2025_ESUT}
Organisation for Economic Co-operation and Development (OECD) (2025).
\newblock \emph{Handbook on Extended Supply and Use Tables and Extended Input--Output Tables}.
\newblock Paris: OECD Publishing.

\bibitem{Bonacich1987}
Bonacich, P. (1987).
\newblock Power and Centrality: A Family of Measures.
\newblock \emph{American Journal of Sociology}, 92(5), 1170--1182. \doi{10.1086/228631}

\bibitem{IFRS_conceptual_framework}
IFRS Foundation (2018).
\newblock \emph{Conceptual Framework for Financial Reporting} (revised 2018).
\newblock London: IFRS Foundation.

\bibitem{investopedia2025}
Investopedia (2025).
\newblock PCE Price Index vs.\ Consumer Price Index (CPI): What's the Difference?
\newblock \emph{Investopedia}. Consultato il 25 agosto 2025.

\bibitem{Leontief1941}
Leontief, W. (1941).
\newblock \emph{The Structure of the American Economy, 1919--1929}.
\newblock Cambridge, MA: Harvard University Press.

\bibitem{Diewert1976}
Diewert, W.~E. (1976).
\newblock Exact and Superlative Index Numbers.
\newblock \emph{Journal of Econometrics}, 4, 115--145. \doi{10.1016/0304-4076(76)90009-9}

\bibitem{FaccioLang2002}
Faccio, M., \& Lang, L.~H.~P. (2002).
\newblock The Ultimate Ownership of Western European Corporations.
\newblock \emph{Journal of Financial Economics}, 65(3), 365--395. \doi{10.1016/S0304-405X(02)00146-0}

\bibitem{HansenSingleton1983}
Hansen, L.~P., \& Singleton, K.~J. (1983).
\newblock Stochastic Consumption, Risk Aversion, and the Temporal Behavior of Asset Returns.
\newblock \emph{Journal of Political Economy}, 91(2), 249--265. \doi{10.1086/261141}

\bibitem{Cochrane2005}
Cochrane, J.~H. (2005).
\newblock \emph{Asset Pricing} (Revised ed.).
\newblock Princeton, NJ: Princeton University Press.

\bibitem{Vitali2011_network_control}
Vitali, S., Glattfelder, J.~B., \& Battiston, S. (2011).
\newblock The Network of Global Corporate Control.
\newblock \emph{PLOS ONE}, 6(10), e25995. \doi{10.1371/journal.pone.0025995}

\bibitem{WorldBankICP}
World Bank (2024).
\newblock International Comparison Program (ICP): Methods and Data (PPP/PLI).
\newblock Washington, DC: World Bank.

\bibitem{IMF_WEO_2024}
International Monetary Fund (2024).
\newblock \emph{World Economic Outlook, October 2024: Policy Pivot, Rising Threats}.
\newblock Washington, DC: IMF.

\bibitem{BLS_PCE_CPI_Methods}
Bureau of Labor Statistics (2017).
\newblock A Comparison of PCE and CPI: Methodological Differences in U.S. Inflation Calculation and Their Implications.
\newblock Washington, DC: U.S. Department of Labor.

\bibitem{BEA_PCE_Methods}
Bureau of Economic Analysis (2024).
\newblock NIPA Handbook, Chapter 5: Personal Consumption Expenditures (PCE) Price Index.
\newblock Washington, DC: U.S. Department of Commerce. (December 2024 update)

\bibitem{FRED_PCEPI}
Federal Reserve Bank of St. Louis (ongoing).
\newblock Personal Consumption Expenditures: Chain-type Price Index (PCEPI), FRED series.

\bibitem{Leontief1936}
Leontief, W. (1936).
\newblock Quantitative Input and Output Relations in the Economic Systems of the United States.
\newblock \emph{The Review of Economics and Statistics}, 18(3), 105--125. \doi{10.2307/1927837}

\bibitem{HarrisonKreps1979}
Harrison, J.~M., \& Kreps, D.~M. (1979).
\newblock Martingales and Arbitrage in Multiperiod Securities Markets.
\newblock \emph{Journal of Economic Theory}, 20(3), 381--408. \doi{10.1016/0022-0531(79)90043-7}

\bibitem{Brioschi1989}
Brioschi, F., Buzzacchi, L., \& Colombo, M.~G. (1989).
\newblock Risk Capital Financing and the Separation of Ownership and Control in Business Groups.
\newblock \emph{Journal of Banking \& Finance}, 13(4--5), 747--772. \doi{10.1016/0378-4266(89)90040-X}

\bibitem{Leontief1986}
Leontief, W.~W. (1986).
\newblock \emph{Input-Output Economics} (2nd ed.).
\newblock New York: Oxford University Press.

\bibitem{DybvigRoss1987}
Dybvig, P.~H., \& Ross, S.~A. (2003).
\newblock Arbitrage, State Prices and Portfolio Theory.
\newblock In G.~M. Constantinides, M. Harris, \& R.~M. Stulz (Eds.), \emph{Handbook of the Economics of Finance}, Vol.~1B, 605--637. Elsevier. \doi{10.1016/S1574-0102(03)01019-7}

\bibitem{IFRS10}
International Accounting Standards Board (IASB) (2011).
\newblock \emph{IFRS 10: Consolidated Financial Statements}.
\newblock London: IFRS Foundation.

\bibitem{SNA2008}
European Commission, International Monetary Fund, Organisation for Economic Co-operation and Development, United Nations, \& World Bank (2009).
\newblock \emph{System of National Accounts 2008}.
\newblock New York: United Nations.

\bibitem{RogersVeraart2013}
Rogers, L.~C.~G., \& Veraart, L.~A.~M. (2013).
\newblock Failure and Rescue in an Interbank Network.
\newblock \emph{Management Science}, 59(4), 882--898. \doi{10.1287/mnsc.1120.1569}

\bibitem{EisenbergNoe2001}
Eisenberg, L., \& Noe, T.~H. (2001).
\newblock Systemic Risk in Financial Systems.
\newblock \emph{Management Science}, 47(2), 236--249. \doi{10.1287/mnsc.47.2.236.9835}

\bibitem{Battiston2012}
Battiston, S., Puliga, M., Kaushik, R., Tasca, P., \& Caldarelli, G. (2012).
\newblock DebtRank: Too Central to Fail? Financial Networks, the FED and Systemic Risk.
\newblock \emph{Scientific Reports}, 2, 541. \doi{10.1038/srep00541}

\bibitem{Claessens2000}
Claessens, S., Djankov, S., \& Lang, L.~H.~P. (2000).
\newblock The Separation of Ownership and Control in East Asian Corporations.
\newblock \emph{Journal of Financial Economics}, 58(1--2), 81--112. \doi{10.1016/S0304-405X(00)00067-2}

\bibitem{Elsinger2009}
Elsinger, H. (2009).
\newblock Financial Networks, Cross-Holdings, and Limited Liability.
\newblock \emph{OeNB Working Paper No. 156}. Vienna: Oesterreichische Nationalbank.

\bibitem{Elliott2014}
Elliott, M., Golub, B., \& Jackson, M.~O. (2014).
\newblock Financial Networks and Contagion.
\newblock \emph{American Economic Review}, 104(10), 3115--3153. \doi{10.1257/aer.104.10.3115}

\bibitem{Acemoglu2015}
Acemoglu, D., Ozdaglar, A., \& Tahbaz-Salehi, A. (2015).
\newblock Systemic Risk and Stability in Financial Networks.
\newblock \emph{American Economic Review}, 105(2), 564--608. \doi{10.1257/aer.20130456}

\end{thebibliography}
\end{document}